\documentclass{article}
\usepackage[utf8]{inputenc}
\usepackage{tikz}
\usepackage{caption}
\usepackage[utf8]{inputenc}
\title{CDS directions}
\usetikzlibrary{calc}
\usepackage{todonotes}
\usepackage{amsmath,amssymb,amsthm,graphicx,caption}
\usepackage{fullpage}
\usepackage{hyperref}
\newcommand{\abs}[1]{\lvert #1 \rvert}

\newtheorem{theorem}{Theorem}
\newtheorem{definition}[]{Definition}
\newtheorem{corollary}[]{Corollary}
\newtheorem{example}{Example}
\newtheorem{remark}{Remark}
\newtheorem{proposition}{Proposition}
\newtheorem{observation}{Observation}
\newtheorem{lemma}{Lemma}
\newcommand{\problem}{\textsc{cds-clearing}}
\allowdisplaybreaks
\usetikzlibrary{positioning}
\usetikzlibrary{automata} 
\usetikzlibrary[automata]
\usetikzlibrary{snakes}

\title{Strong Approximations and Irrationality in Financial Networks with Financial Derivatives}

\author{Author names omitted for review}
\author{Stavros D. Ioannidis \and Bart de Keijzer \and  Carmine Ventre}
\date{Department of Informatics\\
King's College London}

\begin{document}\maketitle
\begin{abstract}
Financial networks model a set of financial institutions (firms) interconnected by obligations. Recent work has introduced to this model a class of obligations called credit default swaps, a certain kind of financial derivatives. The main computational challenge for such systems is known as the clearing problem, which is to determine which firms are in default and to compute their exposure to systemic risk, technically known as their recovery rates. It is known that the recovery rates form the set of fixed points of a simple function, and that these fixed points can be irrational. Furthermore, Schuldenzucker et al. (2016) have shown that finding a weakly (or ``almost") approximate (rational) fixed point is $\textsf{PPAD}$-complete.

We further study the clearing problem from the point of view of irrationality and approximation strength. Firstly, we observe that weakly approximate solutions may misrepresent 
the actual financial state of an institution. 
On this basis, we study the complexity of finding a strongly (or ``near") approximate solution, and show $\textsf{FIXP}$-completeness. 
We then study the structural properties required for irrationality, and we give necessary conditions for irrational solutions to emerge: The presence of certain types of cycles in a financial network forces the recovery rates to take the form of roots of non-linear 
polynomials. In the absence of a large subclass of such cycles, we study the complexity of finding an exact fixed point, which we show to be a problem close to, albeit outside of, \textsf{PPAD}.
\end{abstract}

\section{Introduction}
The International Monetary Fund says that the global financial crisis (GFC) of 2007 has had long lasting consequences, including loss of growth, large public debt and even a decline of fertility rates, see \cite{IMFBlogs}. Consequently, the need to assess the \emph{systemic risk} of the financial network cannot be overstated. For example, if banks at risk of defaults could be easily identified in the complex network of financial obligations, then spread could be preemptively avoided with appropriate countermeasures such as bailouts from central banks or regulators.

In this context, the \emph{clearing problem} introduced in \cite{eisenberg2001systemic} plays a central role. We are given a  so-called financial network, that is, a graph where vertices are banks (or, more generally, financial institutions) and weighted arcs $(u,v)$ model direct liabilities from bank $u$ to bank $v$. Each bank has also some assets external to the network, that can be used to pay its liabilities. The question is to compute a \emph{clearing recovery rate vector}, that is, the ratio between money available (coming from assets and payments from others) over liabilities for each bank. If this ratio is bigger than $1$ for a bank, then it will be able pay its dues -- in this case, we simply set its rate to $1$. The banks that are in default have recovery rates smaller than $1$. 
The problem of computing clearing recovery rates (which we will also refer to as the \emph{clearing problem}) is well understood in simple settings: When there are only simple debt contracts in the network, then clearing recovery rate vectors always exist, are unique, and can be computed in polynomial time \cite{eisenberg2001systemic}.

However, Eisenberg and Noe's model in \cite{eisenberg2001systemic} ignores the issue of  financial derivatives that may be present in the system. The deregulation allowing banks to invest in these products is considered by many as one of the triggers of the GFC. The introduction of financial derivatives to financial networks is due to \cite{schuldenzucker2016clearing}, where the focus is on a simple and yet widely used class of conditional obligations known as Credit Default Swaps (CDSes), the idea being to ``swap'' or offset a bank's credit risk with that of another institution. More specifically, a CDS has three entities: a creditor $v$, a debtor $u$ and a reference bank $z$ --- $u$ agrees to pay $v$ a certain amount  whenever $z$ defaults. Whilst CDSes were conceived in the early 1990s as a way to protect $v$ from the insolvency of $z$ for direct liabilities (i.e., a $(v,z)$-arc in the network), they quickly became a speculative tool to bet against the creditworthiness of the reference entity and have in fact been widely used both as a hedging strategy against the infamous collateralised debt obligations, whose collapse contributed to the GFC, and pure speculation during the subsequent eurozone crisis. The clearing problem in the presence of these financial derivatives is somewhat less well-characterised: it is known that the clearing recovery rate vectors can be expressed as the fixed points of a certain function, and existence of solutions is then guaranteed via a fixed-point argument \cite{schuldenzucker2016clearing}. On the other hand, these fixed points can be irrational, and the computational problem is $\textsf{PPAD}$-complete \cite{schuldenzucker2017finding} as long as one is interested in only a \emph{weak} approximation of a recovery rate vector. 

\subsection{Our Contributions}
In this paper we deepen the study of the clearing problem for financial networks with CDSes from two complementary viewpoints. Firstly, we argue that weak approximations can be misleading in this domain, as the objective under the weak approximation criterion is to find an ``almost'' fixed point (i.e., a point which is not too far removed from its image under the function). The risk estimate provided by this concept might be very far off the actual rate, thus changing the amount of bailout needed or even whether a bank needs rescue in the first place (see, e.g., our example in Appendix \ref{apx:A}). A more useful (but more difficult) objective is to obtain a strong approximation, that is, a point that is geometrically close to an actual fixed point of the function. Such a risk estimate would be actionable for a regulator, as the error could be measured in terms of irrelevant decimal places. Furthermore, the banks themselves would accept the rate when the strong approximation guarantee is negligible, whereas a weak approximation could significantly misrepresent their income and are subject to be challenged, legally or otherwise. 

As our first contribution, we settle the computational complexity of computing strong approximations to the clearing problem in terms of $\textsf{FIXP}$ \cite{etessami2010complexity}, by showing that the clearing problem is complete for this class.
In our reduction, we provide a series of financial network gadgets that are able to compute opportune arithmetic operations over recovery rates. Interestingly, not that many $\textsf{FIXP}$-complete problems are known, although there are a few important natural such problems (three-or-more-player Nash equilibria being a notable example). Hardness reductions for this class tend to be rather technically involved and not straightforward (see Section \ref{sec:related} for an overview of related work). The hardness reduction that we provide here indeed has some technical obstacles as well, although our reduction is quite natural at a high level, and could inspire further developments in the area. Our result complements the current state of the art and completes the picture about the computational complexity of the clearing problem with financial derivatives. It shows that computing strongly approximate fixed points is harder than computing weakly approximate fixed points, which holds due to $\textsf{PPAD}$ being equal to the class $\textsf{Linear-FIXP}$, which is a restriction of $\textsf{FIXP}$, and this makes $\textsf{PPAD}$ (indirectly) a subclass of $\textsf{FIXP}$.
\begin{quote}
    \textbf{Main Theorem 1 (informal).} \emph{Computing a strong approximation to the   clearing recovery rates in a financial network with CDSes is $\textsf{FIXP}$-complete.} 
\end{quote}
The $\textsf{FIXP}$-hardness of the strong approximation problem indicates that there is an additional numerical aspect contributing to the hardness of the problem, which is not present in the weak approximation problem (where the hardness is of a combinatorial nature, due to the reducibility to the end-of-the-line problem which is canonical to \textsf{PPAD}). For the strong approximation problem, the nature of the underlying function for which we want to find the fixed points requires, in particular, the multiplication operation, which ultimately accounts for irrationality and super-polynomial numerical precision 
being necessary in order to derive whether a given point is a strong approximation to a clearing vector. 


We then turn our attention towards irrational solutions with the goal to determine the source of irrationality and understand when it is possible to compute the clearing recovery rate vector exactly in the form of rational numbers. We identify a structural property of cycles in an opportunely enriched network that leads to unique irrational solutions. This property exactly differentiates the CDSes that produce and propagate irrationality of the recovery rates, that we call ``switched on'', from those that do not, termed ``switched off''. We prove the following close-to-tight characterisation of irrationality:
\begin{quote}
    \textbf{Main Theorem 2 (informal).} \emph{If the financial network has only ``switched on'' CDSes in a cycle \emph{and} the cycle cannot be shortcut with paths of length at most three\footnote{The length-at-most-three condition is restated in the form of a more refined condition in the respective technical sections that lead to this result.} then there exist rational values for debt and asset values for which the recovery rate vector is unique and irrational. Conversely, if every cycle of the financial network does not have any  ``switched on'' CDSes then we can compute rational recovery rates in a polynomial number of operations, provided that we have oracle access to $\textsf{PPAD}$.} 
\end{quote}
Our proof of irrationality uses a type of graph ``algebra'' (that is, a set of network fragments and an operation on them) that is able to generate all the possible cycle structures with the property above, which uncovers a connection between the network structure of the clearing problem and the roots of non-linear equations. 
For the opposite direction, we provide an algorithm that exploits the acyclic structure of financial networks with solely ``switched off'' CDSes. This algorithm iteratively computes the recovery rates of each strongly connected component of the network. We show that even for the simpler topologies of the financial system under consideration, the problem remains \textsf{PPAD}-hard, hence the need for the oracle access to \textsf{PPAD}.



\smallskip \noindent
\textbf{Significance of Our Results.} We see our complexity and irrationality results as important analytical tools that legislators can use to regulate financial derivatives. For example, our results contribute to the ongoing debate in the US and Europe about whether speculative uses of CDSes should be banned. In particular, they support, from a computational point of view, the call to ban so-called ``naked'' CDSes (as already done by the EU for sovereign debt  in the wake of the Eurozone crisis, see \cite{EUBusiness}). A naked CDS is purely speculative since its creditor and debtor have no direct liabilities with the reference entity. It turns out that these CDSes add arcs between potentially unconnected nodes, thus possibly adding more of the cycles that lead to irrationality and, given that strong approximations are out of scope due to our $\textsf{FIXP}$-hardness, it is not only combinatorially but also numerically intractable to gain insight in the systemic risk of such financial networks. A mechanism to monitor the topology of a financial network might be useful to avoid the construction of cyclical structures that include CDSes.

\smallskip \noindent \textbf{Technical and Conceptual Innovations.} Both of our main results introduce significant novel technical and conceptual innovations to the field. 

As mentioned above, our reduction for the $\textsf{FIXP}$-hardness is somewhat more direct than in previous work we are aware of. Our reduction is direct, in the sense that it starts from the algebraic circuit defined by an arbitrary problem in $\textsf{FIXP}$. The reduction employs two main steps: We firstly force the outputs of all gates in the circuit to be in the unit cube, by essentially borrowing arguments from \cite{etessami2010complexity}, after which we  produce a series of network gadgets that preserve gate-wise the computations of said circuit; this makes the reduction conceptually straightforward in its setup.  

It is worth highlighting a specific technical challenge that we overcome in our proof, as we think it sheds further light on $\textsf{FIXP}$, and in particular, on the operator basis of the algebraic circuits that are used to define the class. It is known that the circuit of problems in $\textsf{FIXP}$ can be restricted without loss of generality to be built on the arithmetic basis $\{\max, + , *\}$ \cite{etessami2010complexity}, whereas restricting the internal signals of the circuit to the unit cube (with the toolkit developed in \cite{etessami2010complexity}) needs some further operators, including $/$. 
For our optimisation problem to be in $\textsf{FIXP}$, we need the rather mild and realistic assumption that our instances are \emph{non-degenerate} as defined in  \cite{schuldenzucker2017finding}. 
The function of which the fixed points define the recovery rates of non-degenerate instances is well defined, where the non-degeneracy is needed to avoid a division by $0$. It turns out that non-degeneracy is incompatible with division being part of the \textsf{FIXP} operator basis, i.e., it seems difficult to build such a financial network that in any sense simulates a division of two signals in an algebraic circuit. To bypass this problem, our proof shows that it is possible to substitute $/$ in the basis with the square root operator, $\sqrt{\cdot}$, whilst keeping the function well defined. This substitution can be used to simulate division with constant large powers of $2$, and this turns out to be sufficient to omit the $/$-operator (i.e., arbitrary division). This novel observation might be useful for other problems where division is problematic to either define the fixed point function, or the reduction.  

Our second result indirectly aims at characterising the ``rational fragment'' of $\textsf{FIXP}$: To the best of our knowledge this is the first study in this direction. A couple of observations can be drawn from our attempt. Firstly, our sufficiency conditions for irrationality suggest that any such characterisation needs to fully capture the connection between the fixed point condition and the rational root theorem; our proof currently exploits the cyclical structure of networks with ``switched on'' CDSes to define one particular quadratic equation with irrational roots. Whilst this captures a large class of instances, more work is needed to give a complete characterisation (see Section \ref{sec:conclusions} for a discussion). Secondly, our sufficiency conditions for rational solutions highlight a potential issue with their representation. Due to the operations in the arithmetic basis, most notably multiplication, these solutions can grow exponentially large (even though each call to the \textsf{PPAD} oracle returns solutions of size polynomial in their input). This observation establishes a novel connection between the Blum-Shub-Smale computational model \cite{blum1998complexity} (wherein the size to store any real number is assumed to be unitary and standard arithmetic operations are executed in one time unit), the rational part of \textsf{FIXP}, and \textsf{PPAD}. Our result paves way to further research on the subject. 

\smallskip \noindent \textbf{Outline.} Our paper is organised as follows. Section \ref{sec:related} overviews work in areas of interest. We give our preliminaries in Section \ref{sec:preli} where we introduce our modeling of financial networks, a gentle introduction to \textsf{FIXP}, and begin discussing the irrationality of clearing recovery rates. Section \ref{sec:fixp} contains our proof of completeness for \textsf{FIXP}, whereas Sections \ref{sec:irrationality} and \ref{sec:rationality} contain the two directions of our second main result. We draw some conclusions in Section \ref{sec:conclusions}.

\section{Related Work}\label{sec:related}
Systemic risk and contagion in financial networks have been studied extensively in the literature \cite{acemoglu2015systemic,elliott2014financial,glasserman2015likely,heise2012derivatives,hemenway2016sensitivity,hu2012network,cifuentes2005liquidity}. Previous work models a financial network as a setup of interconnected nodes, representing economic firms, in an arc-weighted graph where arcs represent debt obligations from one firm toward another. Among the first papers on systemic risk in financial networks, Eisenberg and Noe \cite{eisenberg2001systemic}, study the problem of finding a clearing payment vector for a financial system that admits only direct liabilities. They prove, applying Tarski's fixed point theorem, that such payment vectors always exist and provide a polynomial time algorithm for computing one. A variation of the original model by Eisenberg and Noe is presented by Rogers and Veraart in \cite{rogers2013failure}. 

Financial systems admitting both direct liabilities and CDS contracts are introduced by Schuldenzucker et al. in \cite{schuldenzucker2016clearing}. In this paper, the authors are interested in computing the clearing recovery rate vector of the system. They establish that such a vector always exists when studying models in \cite{eisenberg2001systemic} whereas that's not he case for models in \cite{rogers2013failure} 
and deciding whether one does 
is \textsf{NP}-hard 
\cite{DBLP:books/daglib/0072413}. Subsequently, Schuldenzucker et al. in \cite{schuldenzucker2017finding} study the problem of computing a clearing recovery rate vector in models where existence is guaranteed. Early in their paper, they construct a simple instance of a financial network whose solution is proved to be irrational, thus making the problem of computing a clearing recovery rate vector an approximation problem. Their main result is that almost-approximating the clearing recovery rate vector is \textsf{PPAD}-complete, meaning that no PTAS 
exists unless $\textsf{P}=\textsf{PPAD}$.

In absence of approximation algorithms for computing an exact solution, much work has been done on the computational complexity of deciding the proper clearing recovery rate that satisfies specific objectives for the system. Papp and Wattenhofer in \cite{papp2020default}, establish that if a regulator can characterize some solution, it is still \textsf{NP}-hard to decide the clearing recovery rate that minimizes the number of defaulting banks and the unpaid debt in the system. Moreover they prove that choosing  the solution that is most preferable by the largest set of banks, the solution that is preferred by a specific bank as well as the solution with the best equity distribution is also \textsf{NP}-hard to approximate within some constant factor.

Despite the existence of a clearing recovery rate vector being guaranteed, irrationality ensures the lack of efficient algorithms for computing one. Our present paper originates from this absence, leading us to consider approximation versions of the problem. It is known from \cite{schuldenzucker2016clearing}, that the problem of computing a clearing recovery rate vector can be cast as a fixed point total search problem. A complexity class regarding fixed point computations of total search problems, is the \textsf{FIXP} class defined by Etessami and Yannakakis in \cite{etessami2010complexity}, where the authors define and introduce the \textsf{FIXP} complexity class and prove that the problem of computing a 3-player (or more) Nash equilibrium is complete for this class. Not many \textsf{FIXP}-complete problems are known in the literature, but some example we are aware of can be found in \cite{goldberg2021hairy,filos2021complexity,DBLP:conf/stoc/GargMVY17,filos2021fixp}. 

Work has also been done on the incentives of banks in financial networks. The authors of \cite{BertschingerHS20} study price of anarchy and stability in games where banks can strategically decide how to repay debts when insolvent. The strategic aspects of modifying the structure of the network (by, e.g., writing off a debt) to a bank's advantage are considered in \cite{PappW20a}.

\section{Model and Preliminaries}\label{sec:preli}

\subsection{Financial Systems}
We denote by $N = \{1,\ldots,n\}$ a set of $n$ financial institutions, which we will call \emph{banks}, for convenience. Each bank possesses a-priori some amount of  \emph{external assets}, denoted by $e_i \in \mathbb{Q}_{\geq 0}$ for a bank $i \in N$. We let $e = (e_1, \ldots ,e_n)$ be the vector of external assets.
Each bank $i$ has a certain \emph{debt} (or \emph{liability}) to each other bank in $N$. We consider two types of debts bank $i$ can have toward other banks:  \emph{debt contracts} and \emph{credit default swaps (CDSes)}.

A \emph{debt contract} is a contract between two banks that requires one of the banks, named the \emph{debtor} (or \emph{writer}), to pay a certain amount to the other bank, named the \emph{creditor} (or \emph{holder}). The value that needs to be paid under a debt contract from debtor $i \in N$ towards creditor $j \in N$ is denoted by $c_{i,j}^\emptyset \in \mathbb{Q}_{\geq 0}$. For convenience, we sometimes write $c_{i,j}$ instead of   $c_{i,j}^\emptyset$.
We denote by $\mathcal{DC}$ the set of all pairs of banks participating in a debt contract, That is, if $(i,j) \in \mathcal{DC}$, then there exists a debt contract where $i$ is the debtor and $j$ is the creditor.

A \emph{Credit Default Swap (CDS)} is a contract that involves three banks. In a CDS, a debtor owes money to a creditor, similarly to a debt contract. However, the amount of money owed is dependent on whether a third bank called the \emph{reference bank} is in default, and is directly proportional to what extent it is in default. A bank is in default when it has insufficient assets to pay its total amount of debts in full. More formally, bank $i$'s recovery rate $r_i \in [0,1]$ is the fraction of $i$'s debts that the bank is able to pay off, given its total assets (defined as the sum of $e_i$ and the payments it receives from other banks). Bank $i$ is said to be in default iff $r_i < 1$. In case a reference bank $R \in N$ of a certain CDS is in default, the debtor $i \in N$ of that CDS is obliged to pay the creditor $j \in N$ an amount of $(1-r_R)c_{i,j}^R$, where $c_{i,j}^R \in \mathbb{Q}_{\geq 0}$ is a prespecified amount associated to the particular CDS between banks $i,j$, with reference $R$. We denote by $\mathcal{CDS}$ the set of all triplets of banks participating in a CDS contract. That is, if $(i,j,R) \in \mathcal{CDS}$, then there exists a CDS contract where $i$ is the debtor bank, $j$ is the creditor, and $R$ is the reference bank.

The value $c_{i,j}$ (or $c_{i,j}^R$) of a debt contract (or CDS) is also referred to as the \emph{notional} of the contract.

Note that the non-existence of a debt contract or CDS between a pair or triplet of banks is equivalent to having a contract present where the corresponding notional is $0$, and that multiple contracts between identical pairs or triplets of banks can be merged into a single contract with the sum of the individual contract's notionals.
Therefore, we may assume that we are given a three-dimensional ($n \times n \times n$) matrix $c$ that contains all contract notionals for direct liabilities and CDS contracts. We do not allow any bank to have a debt contract with itself, and assume that all three banks in any CDS are distinct, which we represent in $c$ by setting $c_{i,i} = 0$ for all $i \in N$ and setting $c_{i,j}^k = 0$ whenever $i,j$ and $k$ are not all distinct. This brings us to the definition of a financial system.

\begin{definition}
A financial system is a triplet $(N,e,c)$, where $N = \{1,..,n\}$ is a set of banks, $e = (e_1,..,e_n) \in \mathbb{Q}^n_{\geq 0}$ is the vector of external assets, and $c \in \mathbb{Q}^{n \times n \times n}_{\geq 0}$ is a three-dimensional matrix of all contract notionals. 
\end{definition}

It is often reasonable to assume that the set of non-zero debt contracts and non-zero CDSes in a given financial system is rather sparse, in which case it is natural to view a given financial system $I = (N,e,c)$ as a coloured directed graph-like structure, which we call the \emph{contract graph} as follows: We first construct a directed multigraph $G = (V,A)$, where $V = N$ and $A$ is the multiset-union of the set $A_{0} = \{(i,j) \mid c_{i,j} \neq 0\}$ and the sets $A_k = \{(i,j) \mid c_{i,j}^k \neq 0\}$ for all $k \in N$. Furthermore, we colour each arc through a function $t : E \rightarrow \{\text{blue},\text{orange}\}$, where $t(e) = \text{blue}$ iff $e \in A_0$ and $t(e) = \text{orange}$ otherwise. Instead of writing $t(e)$ we sometimes write $t_{i,j}$ for convenience, where $e = (i,j)$. For all $(i,j,R) \in \mathcal{CDS}$ we draw a dotted orange line from node $R$ to arc $(i,j) \in A_R$, to denote that $R$ is the reference bank of the CDS between $i$ and $j$. Note that we are using the terms \emph{bank} and \emph{node} interchangeably.

In the resulting graphical notation, we label an arc with the notional of the corresponding contract, and we will label a node with the external assets of the corresponding bank, in green font. This notation is illustrated in the following example, depicted in Figure \ref{fig:1}.

\begin{example}\label{ex:1}
Let $(N,e,c)$ be a financial system where $N = \{1,2,3,4,5,6\}$ is the set of banks and $e = (1,0,0,1,0,1)$ the external assets vector (for example, the external assets of bank 1 are $e_1 = 1$ and for bank 3 we have $e_3 = 0$). 

The financial network is shown in Figure \ref{fig:1}. Node 1 is the debtor of two debt contracts, one towards node 2 and one towards node 3, represented by blue arcs. The notionals of these contracts are $c_{1,2} = 1$, $c_{1,3} = \frac{1}{2}$. Node 3 is the debtor in one debt contract towards node 5, where $c_{3,5} = \frac{1}{2}$. Node 4 is only a debtor towards node 6, where $c_{4,6} = \frac{1}{2}$.

There are two CDSes in the system, one from node $2$ to node $4$ in reference to node $3$ and one from node $5$ to node $6$ in reference to node $4$. Formally $\mathcal{CDS} = \{(2,4,3),(5,6,4)\}$. For $(5,6,4)$ the notional is $c_{5,6}^4 = 1$ and for $(2,4,3)$ the notional is $c_{2,4}^3 = \frac{2}{3}$. 

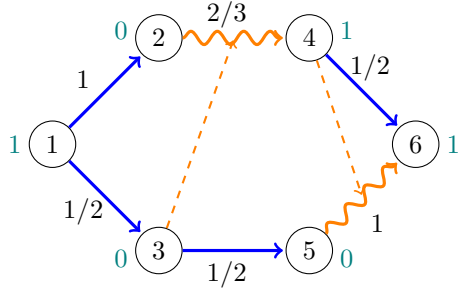
\begin{figure}[htbp!]
\centering
\begin{tikzpicture}
[shorten >=1pt,node distance=2cm,initial text=]
\tikzstyle{every state}=[draw=black!50,very thick]
\tikzset{every state/.style={minimum size=0pt}}
\node[state] (1) {$1$}; 
\node[state] (2) [above right of=1] {$2$};
\node[state] (3) [below right of=1] {$3$};
\draw[blue,very thick,->] (1)--node[midway,black,xshift = -3mm,yshift = -1mm,above] {$1$}(2);
\draw[blue,very thick,->] (1)--node[midway,black,xshift = -3mm,yshift = -5mm,above]{$1/2$}(3);
\node[state] (4) [right of=2] {$4$};
\node[state] (5) [right  of=3] {$5$};
\node[state] (6) [below right of= 4] {$6$};
\node[teal,left of=1,xshift=1.5cm]{1};
\node[teal,left of=3,xshift=1.5cm,yshift=-1mm]{0};
\node[teal,left of=2,xshift=1.5cm,yshift=1mm]{0};
\node[teal,right of=4,xshift=-1.5cm,yshift=1mm]{1};
\node[teal,right of=5,xshift=-1.5cm,yshift=-1mm]{0};
\node[teal,right of=6,xshift=-1.5cm]{1};
\draw[orange,very thick,->,snake=snake] (2)--node[midway,black,xshift = -1mm,above]{$2/3$}(4);
\draw[blue,very thick,->] (3)--node[midway,black,xshift = -1mm,below]{$1/2$}(5);
\draw[blue,very thick,->] (4)--node[midway,black,xshift = 1mm,above]{$1/2$}(6);
\draw[orange,very thick,->,snake=snake] (5)--node[midway,black,xshift = 2mm,yshift = -6mm ,above]{$1$}(6);
\path [orange,->,draw,dashed,thick] (4) -- ($ (5) !.5! (6) $);
\path [orange,->,draw,dashed,thick] (3) -- ($ (2) !.5! (4) $);
\end{tikzpicture}
\caption{A financial system with six banks and two CDSes. This illustrates the graphical notation we use to visualise a financial system.}
    \label{fig:1}
\end{figure}
\end{example}

Our study focuses on the task of computing for any given financial system, for each bank, the proportion of its liabilities that it is able to pay. This proportion is captured by the \emph{recovery rate} of a bank, introduced earlier in our definition of CDS contracts. For each bank $i \in N$ we associate a variable $r_i \in [0,1]$, that indicates the proportion of liabilities that bank $i$ can pay, where $r_i = 1$, indicates that bank $i$ can fully pay its liabilities. Having $r_i < 1$ indicates that $i$ is in default. When a bank $i$ is in default, it pays off its debts proportionally to the liabilities stemming from each of the contracts in which $i$ is the debtor. The focus of this paper is on the problem of computing a clearing recovery rate vector $r = (r_1, \cdots, r_n)$. 

Recall that the amount a debtor $i$ has to pay to creditor $j$ in a CDS contract with reference bank $R$ is given by $(1-r_R)c_{i,j}^R$. Consequently, to compute a bank's liabilities, knowledge is required of the recovery rates of other banks (in particular in case $i$ is involved in one or more CDSes as the debtor). That is, to compute a bank's recovery rate, we must be presented with other banks' recovery rates a-priori.  

Given a candidate vector $r \in [0,1]^n$ of recovery rates, we can formally define a bank's liabilities, payments, and assets in a financial system as follows.
\begin{description}
\item[Liabilities.] The liability of a bank $i \in N$ to a bank $j \in N$ under $r$ is denoted by 
\begin{equation*}
l_{i,j}(r) = c_{i,j}^\emptyset + \sum_{k \in N}(1-r_k)c_{i,j}^k.
\end{equation*}
That is we sum up the liabilities from the debt contract and all CDS contracts between $i$ and $j$. We denote by $l_i(r)$ the total liabilities of $i$:
\begin{equation}\label{eq:liabilities} 
l_i(r) = \sum_{j \in N}l_{i,j}(r) 
\end{equation}
\item[Payments.] The payment bank $i$ makes to bank $j$ under $r$ is denoted by $p_{i,j}(r)$ and it holds that:
\begin{equation*}
    p_{i,j}(r) = r_i \cdot l_{i,j}(r)
\end{equation*}
\item[Assets.] The assets of a bank $i$ under $r$ are the total amount it possesses through its external assets and incoming payments made all by other banks. 
\begin{equation}\label{eq:assets}
    a_i(r) = e_i + \sum_{j \in N}p_{j,i}(r)
\end{equation}
\end{description}

The above definitions for liabilities, payments and assets require a vector recovery rates to be given. While the definitions are valid with respect to any arbitrary vector $r$ in the intervals $[0,1]^n$, the interpretation we give to the notion of a recovery rate is that it reflects the fraction of a bank's liabilities that it is able to pay using its assets. That is, for a given financial system, we are interested in the specific recovery rate vectors such that $r_i$ is $1$ if $i$'s assets exceed its liabilities, and otherwise is equal to the ratio of its assets and liabilities. Such vectors of recovery rates are called \emph{clearing vectors}.

\begin{definition}[Clearing recovery rate vector] Given a financial system $(N,e,c)$, a recovery rate vector $r$ is called clearing if and only if for all banks $i \in N$, 
\begin{equation}\label{eq:clearing}
    r_i = \min \left\{1,\frac{a_i(r)}{l_i(r)}\right\},
\end{equation}
 if $l_i(r) > 0$, and $r_i = 1$ if $l_i(r) = 0$. 
\end{definition}

The requirement that a recovery rate vector be clearing forces an interdependence between the assets, liabilities, and clearing recovery rates of the banks in a financial system. This can be seen, in particular, in the case of cycles that contain a mixture of debt contracts and CDSes, in which case the recovery rate of each bank in the cycle is dependent on the recovery rates of all other banks in the cycle. This complex interrelationship among recovery rates makes computation of a clearing vector a non-trivial computational problem: In first instance, it is not even clear whether a vector of recovery rates always exists, or whether there can be multiple clearing vectors. Both these questions have been answered in the affirmative in \cite{schuldenzucker2016clearing}.


\addtocounter{example}{-1}
\begin{example}[continued]
We compute the clearing recovery rate vector for the example depicted in Figure 1. For node $1$ it holds that $a_1 = e_1 = 1$ and $l_1 = c_{1,2}^\emptyset + c_{1,3}^\emptyset = 1 + 1/2 = 3/2$, thus from (\ref{eq:clearing}) we get that $r_1 = \min\{1,2/3\} = 2/3$. For node $2$ we have that $a_2 = p_{1,2} + e_2 = r_1l_{1,2} + e_2 = 2/3 \cdot 1 + 0 = 2/3$. Also $l_2 = c_{2,4}^3(1-r_3) = (2/3)(1-r_3)$. So to compute $r_2$ we first need to compute $r_3$. For node 3 it holds that $a_3 = p_{1,3} + e_3 = r_1l_{1,3} + e_3 = (2/3)(1/2) + 0 = 1/3$, and $l_3 = c_{3,5} = 1/2$, finally $r_3 = \min\{1,a_3/l_3\} = \min\{1,2/3\} = 2/3$. Since node 3 is in default, the $(2,4,3)$ CDS is activated and counts towards the liabilities of node 2, so that $l_2 = (2/3)(1-2/3) = 2/9$. Finally $r_2 = \min\{1,a_2/l_2\} = \min\{1,3\} = 1$, i.e., node 2 can fully pay its liabilities. For node 4 it holds that $a_4 = p_{2,4} + e_4 = r_2c_{2,4}^3(1-r_3) + 1 = 1 \cdot 2/3 \cdot (1-2/3) + 1 = 11/9$, $l_4 = 1/2$ and $r_4 = \min\{1,a_4/l_4\} = 1$. Node 4 is not in default, so the CDS $(5,6,4)$ is not activated and node 5 has no liability towards node 6.
\end{example}

We refer to the computational task of finding a clearing vector in a given financial system as \problem, where ``\textsc{cds}'' signifies that the financial system under consideration may contain credit default swaps.\footnote{The version of the problem where there are only debt contracts and no CDSes, has been analysed in \cite{eisenberg2001systemic} and turns out to be computationally easy, see Section \ref{sec:related}.}
We refer to any clearing recovery rate vector of an instance $I \in \problem$ as a \emph{solution} of $I$. For an instance $I \in \problem$, we denote the set of solutions of $I$ by $\text{Sol}(I)$.

It is important to realise that the solutions of \problem\ are essentially the fixed points of the function expressed at the right hand side of (\ref{eq:clearing}): Let $I \in \problem$ and 
consider the function $f_I:[0,1]^n \mapsto [0,1]^n$ defined at each coordinate $i \in [n]$ by
\begin{equation}\label{eq:f}
f_I(r)_i= \frac{a_i(r)}{\max\{l_i(r),a_i(r)\}} .
\end{equation}
Function $f=f_I$ takes as input a recovery rate vector $r$ and outputs a potentially different recovery rate vector $r' = f(r)$. It holds that $r' = r$ if and only if $r$ is clearing and hence a solution of $I$. Thus, \problem\ asks, in essence, for computing a fixed point of a specific function. The fact that there exists at least one fixed point for every instance of \problem\ is not immediate, and was proved in \cite{schuldenzucker2016clearing} through an application of Kakutani's fixed point theorem.\footnote{There is a small difference between the way we define the recovery rates and the definition in \cite{schuldenzucker2016clearing}: We define the recovery rate of nodes without liabilities to be equal to 1, while \cite{schuldenzucker2016clearing} leaves the recovery rates of such nodes unconstrained. We make this assumption solely to be in line with the interpretation that a recovery rate of 1 would indicate that a bank is not in default, and it is in essence only a cosmetic difference. It does not introduce any problems with respect to the application of Kakutani's fixed point theorem in our case: We may simply exclude recovery rates of nodes without liabilities as variables from the function on which we apply Kakutani's theorem. The excluded variables correspond to nodes that are not reference banks in any CDS, nor debtors in any CDS or debt contract (where we assume that the non-degeneracy condition holds, see Definition \ref{def:nondegenerate}).}


\begin{theorem}[\cite{schuldenzucker2016clearing}]Every financial system $(N,e,c)$ has a clearing recovery rate vector.
\end{theorem}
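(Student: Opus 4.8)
The plan is to apply a fixed-point theorem to the self-map $f_I : [0,1]^n \to [0,1]^n$ defined at coordinate $i$ by $f_I(r)_i = a_i(r)/\max\{l_i(r), a_i(r)\}$, as in \eqref{eq:f}, and argue that its fixed points are exactly the clearing recovery rate vectors. The domain $[0,1]^n$ is nonempty, compact and convex, so if $f_I$ were continuous, Brouwer's theorem would suffice; the difficulty is that $l_i(r)$ can be $0$ (a node with no liabilities, or whose CDS liabilities all vanish because their reference banks have recovery rate $1$), and near such points $f_I$ is not obviously continuous. I therefore expect to proceed in two stages: first handle the ``clean'' case and then argue the boundary behaviour is harmless.

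First I would observe that $a_i(r)$ and $l_i(r)$ are each continuous (indeed multilinear) in $r$, being finite sums of products of the coordinates $r_k$, $(1-r_k)$ and the constants $c_{\cdot,\cdot}^{\cdot}$; and $a_i(r) \ge 0$ always, with $a_i(r) = e_i > 0$ or $a_i(r) \ge$ some incoming payment. Hence $\max\{l_i(r), a_i(r)\} > 0$ precisely unless $a_i(r) = l_i(r) = 0$, and on the region where this denominator is bounded away from $0$ the map $f_I$ is continuous. To deal with the degenerate locus, I would follow the footnote's remark: restrict the fixed-point argument to the coordinates of nodes that are debtors in some contract or reference banks in some CDS (for non-degenerate instances the excluded coordinates play no role and can be set to $1$), and on the remaining coordinates argue that whenever $l_i(r) = 0$ one also has the convention $f_I(r)_i = 1$ matching the clearing condition, so no discontinuity is introduced that matters. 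Alternatively — and this is the cleaner route, matching \cite{schuldenzucker2016clearing} — I would pass to a Kakutani-style set-valued map that at the degenerate points takes the value $[0,1]$ (or just $\{1\}$), verify it is upper hemicontinuous with nonempty convex compact values, and invoke Kakutani's fixed-point theorem to obtain a fixed point $r^\star$.

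It then remains to check that a fixed point of $f_I$ (or of the Kakutani correspondence) is a clearing vector and conversely. If $r^\star = f_I(r^\star)$ and $l_i(r^\star) > 0$: when $a_i(r^\star) \le l_i(r^\star)$ we get $r_i^\star = a_i(r^\star)/l_i(r^\star) = \min\{1, a_i(r^\star)/l_i(r^\star)\}$, and when $a_i(r^\star) > l_i(r^\star)$ we get $r_i^\star = a_i(r^\star)/a_i(r^\star) = 1 = \min\{1, a_i(r^\star)/l_i(r^\star)\}$; and if $l_i(r^\star) = 0$ the convention gives $r_i^\star = 1$, exactly \eqref{eq:clearing}. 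The reverse inclusion is the same computation read backwards. Hence $\mathrm{Sol}(I)$ is nonempty.

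The main obstacle is the potential discontinuity/ill-definedness of $f_I$ on the set $\{r : a_i(r) = l_i(r) = 0\}$ and, more subtly, the jump in $l_i$ as a reference bank's recovery rate crosses $1$ (switching a CDS ``off''); handling this rigorously is exactly why \cite{schuldenzucker2016clearing} uses Kakutani rather than Brouwer, and I would mirror that choice, spending most of the care on showing the chosen correspondence is upper hemicontinuous with the right value structure.
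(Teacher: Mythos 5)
Your proposal is correct and follows essentially the same route as the paper, which does not reprove this theorem but imports it from Schuldenzucker et al.\ via Kakutani's fixed-point theorem and, in a footnote, deals with zero-liability nodes exactly as you suggest (excluding those coordinates under non-degeneracy). Two minor corrections: $l_i(r)$ is affine, hence continuous, in $r$, so there is no ``jump'' when a reference bank's recovery rate crosses $1$ --- the only genuine obstruction is the $0/0$ locus where $a_i(r)=l_i(r)=0$ --- and of your two suggested correspondences only the $[0,1]$-valued one has a closed graph (the $\{1\}$-valued variant fails upper hemicontinuity, e.g.\ when $a_i\equiv 0$ and $l_i\to 0^+$), so that is the one to use with Kakutani.
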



\subsection{Irrationality}
Solving \problem\ comes down to computing the fixed points of a specific real valued function. Unfortunately, an obstacle is immediately encountered if one intends to solve \problem\ under traditionally accepted models of computation, since there exist instances of \problem\ in which all clearing vectors have irrational values, and can thus not be encoded conveniently in binary. Below we present such an instance:
\begin{example}
Consider the following instance  $I = (N,e,c)$ with $N= \{1,\cdots,8\}$ and $e_2 = e_7 =1/2$, $e_i = 0$, for every other node. Assume all direct and CDS contract notionals are equal to 1. The graph of the financial system is given in Figure \ref{fig:2}.

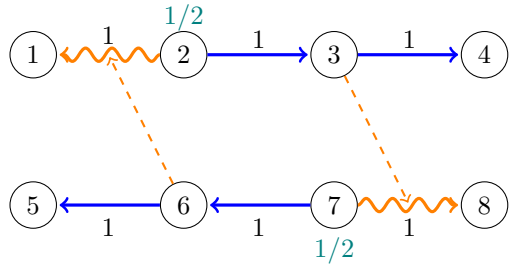
\begin{figure}[htbp!]
    \centering
\begin{tikzpicture}
[shorten >=1pt,node distance=2cm,initial text=]
\tikzstyle{every state}=[draw=black!50,very thick]
\tikzset{every state/.style={minimum size=0pt}}
\node[state] (1) {$1$}; 
\node[state] (2) [right of=1] {$2$};
\node[state] (3) [right of=2] {$3$};
\node[state] (4) [right of=3] {$4$};
\draw[orange,very thick,->,snake=snake] (2)--node[midway,black,yshift=2.5mm]{1}(1);
\draw[blue,very thick,->] (2)--node[midway,black,yshift=2mm]{1}(3);
\draw[blue,very thick,->] (3)--node[midway,black,yshift=2mm]{1}(4);
\node[state] (5) [below of=1] {$5$};
\node[state] (6) [below of=2] {$6$};
\node[state] (7) [below of=3] {$7$};
\node[state] (8) [below of=4] {$8$};
\draw[blue,very thick,->] (6)--node[midway,black,yshift=-3mm]{1}(5);
\draw[blue,very thick,->] (7)--node[midway,black,yshift=-3mm]{1}(6);
\draw[orange,very thick,->,snake=snake] (7)--node[midway,black,yshift=-3mm]{1}(8);
\node[teal,right of=2,xshift=-2cm,yshift=5mm]{1/2};
\node[teal,right of=7,xshift=-2cm,yshift=-6mm]{1/2};
\path [orange,->,draw,dashed,thick] (6) -- ($ (2) !.5! (1) $);
\path [orange,->,draw,dashed,thick] (3) -- ($ (7) !.5! (8) $);
\end{tikzpicture}
 \caption{An instance with irrational solution.}
    \label{fig:2}
\end{figure}

Any solution for this instance must satisfy the equations
\begin{equation*}
r_2 = \min\left\{1,\frac{1/2}{2-r_6}\right\}, \quad
r_6 = r_7, \quad
r_7 = \min\left\{1,\frac{1/2}{2-r_3}\right\}, \quad
r_3 = r_2.
\end{equation*}
If $r_2 = 1$ then $r_3 =1$ $\Rightarrow r_7 = 1/2 \Rightarrow r_6 = 1/2 \Rightarrow r_2 =1/3$, which is a contradiction. The same holds for $r_7$, which means that $r_2 = 1/(2(2-r_7))$ and $r_7 = 1/(2(2-r_2))$. Solving this system of equations comes down to solving $2r_2^2 - 4r_2 + 1 = 0$ which yields $r_2 = 1 - \sqrt2/2$, from which we obtain the recovery rate of all the remaining nodes, as  $r_2 = r_3 = r_7 =r_6$.
\end{example}

Instances with irrational solutions are in a sense ``common'', and certainly not difficult to find. Another example can be found in \cite{schuldenzucker2017finding}. 
Hence, while instances in \problem\ always have solutions, they do not generally have rational solutions, which makes computing the \emph{exact} solutions to instances of \problem\ inconvenient for study under a standard model of computation.\footnote{One could however, quite naturally study this problem under a real valued computation model like \cite{blum1998complexity}, under which any real number can be stored in a single unit of memory, and the basic arithmetic operation take a single time unit to execute.} Instead, it makes more sense to consider \emph{approximate} solutions for \problem\ instances.

\subsection{Approximation and Complexity}
In \cite{etessami2010complexity}, a useful framework for studying the complexity of fixed point computation problems is defined, which considers both exact computation and approximate computation of the solutions to such problems. The authors introduce the complexity class $\textsf{FIXP}$, show that many fixed point problems can be shown to belong to this class, and establish that there exist $\textsf{FIXP}$-complete problems under a suitable notion of reducibility. \problem\ is, as we have seen, a fixed point problem, hence we consider the problem within the framework of \cite{etessami2010complexity}. We introduce in this section some fundamental notions related to exact and approximate computation of fixed points, and define the complexity class $\textsf{FIXP}$ and some related classes.

Let $F$ be a continuous function that maps a compact convex set to itself (and note that $f_I$, for instances $I \in \problem$ is such a function), and let $\epsilon > 0$ be a small constant.

A \emph{weak $\epsilon$-approximate fixed point} of $F$ is a point $x$ such that its image is within a distance $\epsilon$ of $x$, i.e., $\|x - F(x)\|_{\infty} < \epsilon$. 
A \emph{strong $\epsilon$-approximate fixed point} of $F$ is a point $x$ that is within a distance $\epsilon$ near a fixed point of $F$, i.e., $\exists x' : F(x') = x' \wedge \|x'-x\|_{\infty} < \epsilon$.
A weak $\epsilon$-approximate is generally not located close to an actual fixed point, and an example of this in the case of \problem\ can be found in Appendix \ref{apx:A}. Moreover, under a mild condition on the fixed point problem under consideration, known as \emph{polynomial continuity}, a strong approximation is also a weak approximation (see Proposition \ref{prop:pc} and the definition below), which explains the use of the terms ``strong'' and ``weak''.

Formally a \emph{fixed point problem $\Pi$} is defined as a search problem such that for every instance $I \in \Pi$ there is an associated continuous function $F_I : D_I \rightarrow D_I$ where $D_I \subseteq \mathbb{R}^n$ (for some $n \in \mathbb{N}$) is compact and convex, such that the solutions of $I$ are the fixed points of $F_I$. The problem $\Pi$ is said to be \emph{polynomially computable} if there is a polynomial $q$ such that (i.) $D_I$ is a convex polytope described by a set of at most $q(|I|)$ linear inequalities, each with coefficients of a size at most $q(|I|)$, and (ii.) For each $x$ in $D_I \cap \mathbb{Q}^n$, the value $F_I(x)$ can be computed in time $q(|I| + \text{size}(x))$. Here, the ``size'' of a rational number means the number of bits needed to represent the numerator and the denominator in binary. Furthermore $\Pi$ is said to be \emph{polynomially continuous} if there is a polynomial $q$ such that for each $I \in \Pi$, and rational $\epsilon > 0$, there is a rational $\delta$ of size $q(|I| + \text{size}(\epsilon))$ satisfying the following: for all $x,y \in D_I$ with $\|x-y\|_{\infty} < \delta$ it holds that $\|F_I(x) - F_I(y)\|_{\infty} < \epsilon$.
Note that this condition defining polynomial continuity is the formally correct way of stating that the distance between any two points in the domain does not increase a lot (at most by a ``polynomial amount'') when taking their images under $F_I$.\footnote{The above collection of definitions is stated in a more extensive and refined way in \cite{etessami2010complexity} in order to capture a more versatile collection of search problems. However, for the sake of our discourse, this simplified presentation suffices.}

A fixed point problem $\Pi$ has an associated \emph{weak} and \emph{strong} approximation versions: In the weak approximation version we are given an instance $I$ of $\Pi$ and a rational $\epsilon > 0$, and we need to compute a weak $\epsilon$-approximate fixed point for $F_I$. The strong approximation version is defined analogously. 
\begin{proposition}[\cite{etessami2010complexity}, Proposition 2]
\label{prop:pc}
Let $\Pi$ be a fixed point problem.
\begin{enumerate}
    \item If $\Pi$ is polynomially continuous, then the weak approximation version of $\Pi$ polynomial-time reduces to the strong approximation version of $\Pi$.
    \item If $\Pi$ is polynomially continuous and polynomially computable, then the weak approximation version of $\Pi$ is in $\textsf{PPAD}$.
\footnote{The reader might be familiar with  $\textsf{PPAD}$, which is a complexity class introduced in \cite{papadimitriou}. Further below, we provide an indirect definition of $\textsf{PPAD}$ in Proposition \ref{prop:linearfixpppad}, through defining the complexity class $\textsf{Linear-FIXP}$.}
\end{enumerate}
\end{proposition}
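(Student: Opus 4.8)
The plan is to treat the two parts separately: the first is a short metric argument, while the second is a reduction to the canonical $\textsf{PPAD}$-complete problem after the domain has been put into a standard form.

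\emph{Part 1.} I would take the reduction to be the identity on the underlying instance $I$ together with a rescaling of the accuracy parameter. Given a weak-approximation instance $(I,\epsilon)$, invoke the polynomial-continuity guarantee for $\Pi$ to obtain a rational $\delta$ of bit-size polynomial in $|I|+\mathrm{size}(\epsilon)$ such that $\|x-y\|_\infty<\delta$ implies $\|F_I(x)-F_I(y)\|_\infty<\epsilon/2$ for all $x,y\in D_I$, and output the strong-approximation instance $(I,\delta')$ with $\delta':=\min\{\delta,\epsilon/2\}$. If $x$ is a strong $\delta'$-approximate fixed point, fix a genuine fixed point $x'$ with $\|x-x'\|_\infty<\delta'$; then $\|F_I(x)-x'\|_\infty=\|F_I(x)-F_I(x')\|_\infty<\epsilon/2$, and the triangle inequality yields $\|x-F_I(x)\|_\infty\le\|x-x'\|_\infty+\|x'-F_I(x)\|_\infty<\delta'+\epsilon/2\le\epsilon$, so $x$ is a weak $\epsilon$-approximate fixed point. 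This direction is essentially bookkeeping and poses no real obstacle.

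\emph{Part 2.} The plan is a polynomial-time reduction from the weak $\epsilon$-approximation version of $\Pi$ to the problem of computing an approximate Brouwer fixed point of a self-map of the hypercube that is given by a polynomial-time evaluation procedure together with an explicit polynomial modulus of continuity --- a problem known to lie in $\textsf{PPAD}$ by the standard Sperner-lemma argument reducing it to \textsc{End-of-Line}. First I would normalise the domain: since $D_I$ is a polytope described by polynomially many linear inequalities of polynomial bit-size, I enclose it in an axis-aligned box, affinely rescale the box to $[0,1]^n$, and replace $F_I$ by $G:=F_I\circ\pi$ where $\pi$ is the Euclidean projection onto $D_I$. The map $\pi$ is $1$-Lipschitz and, on rational inputs, computable in polynomial time by solving a convex quadratic program over $D_I$ (whose optimiser is rational of polynomial size), so $G:[0,1]^n\to[0,1]^n$ is again polynomially computable, is polynomially continuous (composing with a $1$-Lipschitz map preserves the modulus), and has exactly the fixed points of $F_I$, all of which lie in $D_I$. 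Next I would discretise: using the modulus of continuity of $G$, pick a rational grid spacing $\delta$ of polynomial bit-size so that $\ell_\infty$-close points have $G$-images within $\epsilon/4$, triangulate the cube at this resolution (an exponentially large but implicitly specified complex), and apply the classical Sperner/Brouwer labelling, labelling a vertex $v$ by a coordinate in which the displacement $G(v)-v$ has the appropriate sign with the standard tie-breaking on the boundary. Finally, compute a panchromatic simplex: one exists by Sperner's lemma, finding it is the canonical $\textsf{PPAD}$ problem, and the reduction is polynomial because --- by polynomial computability of $G$ --- the label of any vertex, hence the unique neighbour of any simplex along the Sperner path, is computable in polynomial time, and any vertex $x$ of the returned simplex satisfies $\|x-G(x)\|_\infty<\epsilon/2$ by the choice of $\delta$. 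Outputting $\pi(x)$ then works: since $G(x)=F_I(\pi(x))\in D_I$ and $\pi(x)$ is the point of $D_I$ nearest $x$, we get $\|\pi(x)-x\|_\infty\le\|x-G(x)\|_\infty$, so $\|\pi(x)-F_I(\pi(x))\|_\infty\le\|\pi(x)-x\|_\infty+\|x-G(x)\|_\infty<\epsilon$.

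\emph{Main obstacle.} The delicate part is the domain-normalisation and modulus bookkeeping in Part 2: one must check that passing from an arbitrary convex-polytope domain to the cube via projection preserves \emph{both} polynomial computability (projection is a polynomial-time convex program with rational, polynomial-size output) and polynomial continuity (it is $1$-Lipschitz), so that the grid resolution --- and hence the size of the resulting \textsc{End-of-Line} instance --- stays polynomial. Once the problem is framed on $[0,1]^n$ with an explicit polynomial modulus of continuity, membership in $\textsf{PPAD}$ is the standard Sperner/Brouwer argument, and all that remains is to verify that each local step (vertex labelling, advancing along the Sperner path) is polynomial-time given the polynomial-time evaluation of $G$.
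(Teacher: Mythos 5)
This proposition is quoted from \cite{etessami2010complexity} (their Proposition 2) and the paper gives no proof of it, so there is nothing internal to compare against; judged on its own, your proposal correctly reconstructs the standard argument from that reference. Part 1 (continuity modulus plus triangle inequality, with $\delta' = \min\{\delta,\epsilon/2\}$ of polynomial size) is exactly right, and Part 2 follows the canonical route: normalise the domain, discretise at a resolution given by the polynomial modulus, and invoke the Sperner/\textsc{End-of-Line} machinery, with polynomial computability of $F_I$ making each local step polynomial-time. One small inaccuracy: Euclidean projection $\pi$ onto $D_I$ is nonexpansive in $\ell_2$, not in $\ell_\infty$, so both the claim that composing with $\pi$ ``preserves the modulus'' and the final inequality $\|\pi(x)-x\|_\infty \le \|x-G(x)\|_\infty$ are only true up to a factor of $\sqrt{n}$ from norm conversion; this costs nothing beyond adjusting $\epsilon$ by a polynomial factor (e.g.\ running the grid argument with $\epsilon/(2\sqrt{n})$), so the proof stands after that bookkeeping fix.
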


With regard to \problem, it is straightforward to verify that \problem\ is polynomially computable. Furthermore, \cite{schuldenzucker2017finding} establishes implicitly that \problem\ is polynomially continuous under a (very) mild assumption that the authors call \emph{non-degeneracy}. 

\begin{definition}\label{def:nondegenerate}
A financial system is \emph{non-degenerate} iff
\begin{enumerate} 
\item Every debtor in a CDS either has positive external assets or is the debtor in at least one debt contract with a positive notional. 
\item Every bank that acts as a reference bank in some CDS is the debtor of at least one debt contract with a positive notional.
\end{enumerate}
\end{definition}

We will make this assumption as well throughout our paper, both for the sake of compatibility with \cite{schuldenzucker2017finding} and for the analytical convenience that non-degeneracy provides us (note that in particular, a division by 0 never occurs in (\ref{eq:f}) when considering a non-degenerate instance).\footnote{We note that point 2 of our definition of non-degeneracy was not included in the definition in \cite{schuldenzucker2017finding}. However, the assumption we denote by Point 2 is made throughout their paper, but was left unnamed. We chose to include this assumption in the non-degeneracy notion for convenience, rather than leaving it without a name.}
Hence, we define \problem\ to contain only non-degenerate financial networks.

In \cite{schuldenzucker2017finding}, the authors furthermore show that the weak approximation version of \problem\ is $\textsf{PPAD}$-hard, which yields $\textsf{PPAD}$-completeness of the weak approxmation version of \problem. The polynomial continuity of \problem\ shows that the strong approximation version of \problem\ is at least as hard as its weak approximation version.

Despite that computing weak approximations for \problem\ is already known to be $\textsf{PPAD}$-hard, and that computing strong approximations must therefore be at least as hard due to Point 1 of Proposition \ref{prop:pc}, we are still interested in understanding the complexity of computing such strong approximations due to the fact that, as noted above, weakly approximate clearing vectors are difficult to justify from the point of view of a regulator who has the responsibility of actually realising the clearing of a financial system: Weakly approximate clearing vectors are potentially very far removed from the actual exact clearing vectors of a financial system. Clearing a system by means of a weakly approximate vector may result in defaults of banks that would not be in default under the true clearing vectors of the system, and non-defaulting banks may similarly end up with vastly different surplus assets after clearing than they would have under a true clearing vector. Strong approximations do not suffer from this problem, as they guarantee that the correct set of banks default and that the equity of all banks after clearing is indeed accurate (of course, up to $\epsilon$ accuracy). We demonstrate and prove the claims of the above discussion through an illustrative example in Appendix \ref{apx:A}. Our example supports and motivates the research of strong approximations.

To study the strong approximation and exact versions of fixed point problems, the authors of \cite{etessami2010complexity} introduce the class $\textsf{FIXP}$, and a few derived classes, among which are $\textsf{Linear-FIXP}$ and $\textsf{FIXP}_a$.

\begin{definition}[\textsf{FIXP}]
The class $\textsf{FIXP}$ consists of all fixed point problems $\Pi$ that are polynomially computable, and for which for all $I \in \Pi$ the function $F_I : D_I \rightarrow D_I$ can be represented by an algebraic circuit $C_I$ over the basis $A = \{+,-,*,\max,\min\}$, using rational constants, such that $C_I$ computes $F_I$, and $C_I$ can be constructed from $I$ in time polynomial in $|I|$.
In more detail, the circuit $C_I$ is an acyclic directed graph over a set of gates, say $g_1,\ldots,g_m$ (in topological order), with the following properties. There are $n$ gates in the first (bottom) layer of the circuit, called the \emph{input} gates, where $n$ is the number of arguments of $F_I$. Similarly there are $n$ output gates in the last (top) layer of the circuit, called the \emph{output gates}. The remaining nodes of the circuit each represent rational constants and arithmetic operations from the set $A$. Rational constant nodes have no incoming arcs, and each arithmetic operation node has two incoming arcs. The number of outgoing arcs from each node is not bounded. We may now label $C_I$'s input nodes with an input $x \in D_I$ (coordinate-wise), and define the output of $C_I$ by propagating the outputs of each of the nodes through the network via the arcs outgoing from each of the nodes. Naturally, an input node propagates its labeled value, a rational constant node propagates the associated rational number, and an arithmetic operation node takes the two values propagated through its incoming arcs as operands, and applies the corresponding operation on its operands, which it then propagates through its outgoing arcs. The circuit $C_I$ is then said to compute $F_I$ iff the output of $C_I$ on input $x$ equals $F_I(x)$ for all $x \in D_I$.

The class $\textsf{FIXP}_a$ is defined as the class of search problems that are the strong approximation version of some fixed point problem that belongs to $\textsf{FIXP}$.

The class $\textsf{Linear-FIXP}$ is defined analogously to $\textsf{FIXP}$, but under the smaller arithmetic basis where only the gates $\{+,-\max,\min\}$ and multiplication by rational constants are used (thus, arbitrary multiplication 
is removed from the basis).
\end{definition}

The classes $\textsf{FIXP}$, $\textsf{Linear-FIXP}$, and $\textsf{FIXP}_a$ admit complete problems. Hardness of a search problem $\Pi$ for $\textsf{FIXP}$ (resp. $\textsf{Linear-FIXP}$ and $\textsf{FIXP}_a$) is defined through the existence of a polynomial time computable function $\rho : \Pi' \rightarrow \Pi$, for all $\Pi \in \textsf{FIXP}$ (resp. $\textsf{FIXP}_a$), such that the solutions of $I$ can be obtained from the solutions of $\rho(I)$ by applying a (polynomial-time computable) linear transformation on a subset of $\rho(I)$'s coordinates. This type of reduction is known as a \emph{polynomial time SL-reduction}.

We mention a few important facts about these complexity classes that are relevant for understanding the results in our present work. The following proposition states that strong approximations to problems in $\textsf{FIXP}$ can be found in polynomial space.
\begin{proposition}[\cite{etessami2010complexity}, Proposition 17]\label{prop:pspace}
$\textsf{FIXP}_a \subseteq \textsf{PSPACE}$.
\end{proposition}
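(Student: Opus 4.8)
The plan is to reduce the computation of a strong $\epsilon$-approximate fixed point to polynomially many queries to a decision procedure for the existential theory of the reals, which is known to run in \textsf{PSPACE} (Canny; Renegar). Fix $\Pi \in \textsf{FIXP}$, an instance $I \in \Pi$ with associated algebraic circuit $C_I$ over $\{+,-,*,\max,\min\}$ computing $F_I : D_I \to D_I$, and a rational $\epsilon > 0$. First I would flatten $C_I$ into a polynomial-size quantifier-free formula $\Psi(x,x',\vec z)$ over the ordered reals whose solutions are exactly the triples with $x \in D_I$, $x' \in D_I$ and $x' = F_I(x)$: introduce a variable $z_g$ for each gate $g$ (with $x$ the input-gate variables and $x'$ the output-gate variables); for a constant gate add $z_g = q$; for $\pm$ gates add $z_g = z_u \pm z_v$; for a $*$ gate add $z_g = z_u z_v$; for a $\max$ gate add $(z_g = z_u \wedge z_u \ge z_v) \vee (z_g = z_v \wedge z_v \ge z_u)$, symmetrically for $\min$; and, since $\Pi$ is polynomially computable, append the polynomially many linear inequalities describing ``$x \in D_I$'' and ``$x' \in D_I$''. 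The length of $\Psi$ grows only additively in the number of gates, so it is polynomial in $|I|$.

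Next, the fixed-point set $S := \{x \in D_I : F_I(x) = x\} = \{x : \exists x'\,\exists \vec z\ \Psi(x,x',\vec z) \wedge x = x'\}$ is semialgebraic with a polynomial-size description, nonempty and compact (Brouwer, since $F_I$ is a continuous self-map of a compact convex set), and for any rational box $B$ the statement ``$S \cap B \neq \emptyset$'' is a polynomial-size existential sentence over the reals, hence decidable in \textsf{PSPACE}. I would then extract a point by coordinate-wise bisection. Put $\epsilon' := \epsilon/2$, let $B^{(0)}$ be a rational box enclosing $D_I$ with polynomial-bit-size corners (available since $D_I$ is a polytope with a polynomial-size description), and maintain the invariant $S \cap B^{(t)} \neq \emptyset$. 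Process coordinates $i = 1,\dots,n$ in turn; to shrink coordinate $i$, repeatedly bisect its current interval and, with one \textsf{PSPACE} query per step, keep a half for which the correspondingly restricted box still meets $S$ — one such half exists by the invariant. Starting from width at most $2^{\mathrm{poly}(|I|)}$ and targeting width $< \epsilon'$ (with $\log(1/\epsilon') \le \mathrm{size}(\epsilon)$) needs only polynomially many bisections; then fix coordinate $i$ to its narrowed interval and continue. At the end we hold a box $B$ of side $< \epsilon'$ with $S \cap B \neq \emptyset$; output a rational corner $x$ of $B$. For any $y^* \in S \cap B$ we have $\|x - y^*\|_\infty < \epsilon' < \epsilon$, so $x$ is a strong $\epsilon$-approximate fixed point. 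The driver makes polynomially many adaptive queries of polynomial size to a \textsf{PSPACE} oracle and uses polynomial workspace, and \textsf{PSPACE} is closed under such reductions, so the whole procedure is in \textsf{PSPACE}, establishing $\textsf{FIXP}_a \subseteq \textsf{PSPACE}$.

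I expect two steps to be the delicate ones. The first is the faithful, polynomial-size handling of the $\max$ and $\min$ gates — they are the only non-polynomial primitives, and one must check that the disjunctive encoding yields a genuine quantifier-free first-order formula over the ordered reals of polynomial length and that $S$ as defined contains no spurious points. The second is the correctness of the bisection: one must verify that preserving ``$S \cap B^{(t)} \neq \emptyset$'' as an invariant guarantees a single fixed point $y^*$ within $\ell_\infty$-distance $\epsilon'$ of the output, rather than $n$ distinct fixed points each close to the output in only one coordinate — which is precisely why the box is shrunk with $S$ kept inside it and why no coordinate is collapsed to a point prematurely. The remaining work (bit-size bookkeeping, the bound on the number of bisections, closure of \textsf{PSPACE} under polynomially many adaptive \textsf{PSPACE}-oracle calls, and the Brouwer appeal for $S \neq \emptyset$) is routine.
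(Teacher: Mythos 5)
The paper does not prove this proposition itself---it is quoted from Etessami and Yannakakis \cite{etessami2010complexity}---and your argument is essentially the proof given there: encode the circuit gate-by-gate (with disjunctions for $\max/\min$) as a polynomial-size existential formula over the reals, decide ``the fixed-point set meets this rational box'' in $\textsf{PSPACE}$ via Canny/Renegar, and bisect coordinate intervals while maintaining a box that intersects the (nonempty, by Brouwer) fixed-point set. Your proposal is correct, including the two delicate points you flag (the disjunctive $\max/\min$ encoding introduces no spurious solutions, and keeping a single box intersecting $S$ guarantees one fixed point close in all coordinates simultaneously), so it matches the cited proof in substance.
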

The following result shows that finding exact fixed points of fixed point functions expressible through only the operations $\{+,-,\max,\min\}$ and multiplication by rational constants, is equivalent to solving problems in $\textsf{PPAD}$.
\begin{theorem}[\cite{etessami2010complexity}, Theorem 26]\label{prop:linearfixpppad}
$\textsf{Linear-FIXP} = \textsf{PPAD}$.
\end{theorem}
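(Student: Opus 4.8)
The plan is to prove the two set inclusions $\textsf{Linear-FIXP} \subseteq \textsf{PPAD}$ and $\textsf{PPAD} \subseteq \textsf{Linear-FIXP}$ separately.

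For $\textsf{Linear-FIXP} \subseteq \textsf{PPAD}$, the first step is to observe that a function $F_I$ computed by an arithmetic circuit over $\{+,-,\max,\min\}$ together with multiplication by rational constants is continuous and piecewise linear, and, crucially, \emph{polynomially continuous}: propagating a Lipschitz bound through the circuit, a $+$ or $-$ gate adds the Lipschitz constants of its two inputs (so at most doubles the maximum), a $\max$ or $\min$ gate takes their maximum, and multiplication by a constant $c$ scales by $|c|$; hence the Lipschitz constant of $F_I$ is bounded by a product of at most $m$ factors, each equal to $2$ or to some $|c|$ from the circuit, so it is at most $2^{\mathrm{poly}(|I|)}$ and therefore has polynomial bit-size (take $\delta = \epsilon/(2L)$ to witness polynomial continuity). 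Since $\textsf{Linear-FIXP}$ problems are polynomially computable by definition, Proposition~\ref{prop:pc}(2) already places the \emph{weak} approximation version in $\textsf{PPAD}$. It remains to reduce the \emph{exact} problem to the weak one. On each cell of the polyhedral subdivision induced by the $\max/\min$ decisions, $F_I$ agrees with an affine map $A_j x + b_j$ whose entries, read off symbolically from the circuit with those branches fixed, have polynomial bit-size; the fixed-point set is therefore a (nonempty, by Brouwer) union of rational polyhedra and, by the standard vertex-complexity bound for linear systems, contains a rational fixed point of polynomial bit-size. One then argues that a weak $\epsilon$-approximate fixed point for a suitably small but polynomially describable $\epsilon$ can be ``snapped'' to an exact one by locating the cell containing it and solving the associated polynomial-size linear program, giving a polynomial-time reduction from the exact to the weak version; alternatively one may invoke directly that the succinctly represented piecewise-linear Brouwer problem lies in $\textsf{PPAD}$ via Sperner-type arguments \cite{papadimitriou}.

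For the reverse inclusion $\textsf{PPAD} \subseteq \textsf{Linear-FIXP}$, it suffices to exhibit one $\textsf{PPAD}$-complete problem as (the exact version of) a fixed-point problem whose function lies in the linear basis, via an SL-reduction. I would start from \textsc{End-of-Line} and follow the classical chain \textsc{End-of-Line} $\to$ \textsc{Sperner} $\to$ piecewise-linear Brouwer: this produces a continuous piecewise-linear map $F$ on a simplex (or on $[0,1]^d$ for small fixed $d$) whose exact fixed points encode the solutions of the \textsc{End-of-Line} instance under a linear decoding. The point is that $F$ is computable by a circuit over $\{+,-,\max,\min\}$ and constant multiplications only: the Boolean coloring logic of the Sperner instance, given by a polynomial-size Boolean circuit, is arithmetized over $\{0,1\}$-valued signals using $\min$ for conjunction, $\max$ for disjunction and $1-(\cdot)$ for negation, while the geometric part of the map (forming the appropriate convex combination of vertex displacements inside a colored subsimplex) is an affine expression in the input coordinates and these $\{0,1\}$ signals, so no arbitrary multiplication of two non-constant signals is ever needed. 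Hence $F$ is a $\textsf{Linear-FIXP}$ function and the resulting problem is $\textsf{PPAD}$-hard under SL-reductions; combined with the first inclusion this yields the equality.

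I expect the main obstacle to be the final step of the $\subseteq \textsf{PPAD}$ direction, i.e.\ passing from a weak approximate fixed point to an exact one: the subdivision induced by a linear circuit may have exponentially many cells, so one cannot simply triangulate and invoke a polynomial-size Sperner instance, and one must instead reason about the affine pieces and the fixed-point polyhedra symbolically (through the circuit with the $\max/\min$ branches fixed) to guarantee both the existence of a polynomial-size rational fixed point and the ability to certify and recover it from an $\epsilon$-approximation. A secondary technical point, on the $\textsf{PPAD} \subseteq \textsf{Linear-FIXP}$ side, is to ensure the reduction is a genuine SL-reduction — that the original \textsc{End-of-Line} solution is recoverable from a fixed point of $F$ by a fixed linear map on a subset of coordinates — which constrains how the Sperner coloring and the Brouwer embedding are set up.
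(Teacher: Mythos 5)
This statement is not proved in the paper at all: it is quoted verbatim as Theorem 26 of \cite{etessami2010complexity}, so the only meaningful comparison is with the original Etessami--Yannakakis argument. Your outline follows essentially the same route as theirs: membership of $\textsf{Linear-FIXP}$ in $\textsf{PPAD}$ via polynomial computability and polynomial continuity of circuits over $\{+,-,\max,\min\}$ with rational-constant multiplication (so the weak approximation is in $\textsf{PPAD}$ by Proposition~\ref{prop:pc}), followed by a reduction of exact computation to sufficiently accurate weak approximation using the piecewise-linear structure; and hardness via the End-of-Line $\to$ Sperner $\to$ piecewise-linear Brouwer chain, whose map is expressible in the linear basis.

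Two steps in your sketch are genuinely incomplete, and you have correctly located the first one. For the ``snapping'' step you must explain why the linear program attached to the cell containing the weak $\epsilon$-approximate point is \emph{feasible}: the exact fixed point need not lie in that cell. The standard fix is a uniform gap argument: for any fixed branch pattern of the $\max/\min$ gates, the closed cell and the affine map on it are described by coefficients of polynomial bit-size, so $\min_{y \in \text{cell}} \|F_I(y)-y\|_\infty$, if positive, is at least $2^{-q(|I|)}$ for a single polynomial $q$ independent of the (exponentially many) cells; choosing $\epsilon < 2^{-q(|I|)}$ forces this minimum to be $0$ in the cell containing the approximate point, so the LP is solvable and returns an exact rational fixed point of polynomial size. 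Without this (or an equivalent) argument the reduction from exact to weak is not established. Second, on the $\textsf{PPAD} \subseteq \textsf{Linear-FIXP}$ side, ``arithmetizing the Boolean colouring circuit over $\{0,1\}$-valued signals'' is not literally available: the input to $F$ is a real point, and extracting the bits that feed the colouring circuit is a discontinuous operation that no circuit over $\{+,-,\max,\min\}$ with constant multiplication can perform. The known constructions (\cite{papadimitriou}, Daskalakis--Goldberg--Papadimitriou as used in \cite{etessami2010complexity}) circumvent this by defining $F$ through averaging of displacement vectors at grid points, with piecewise-linear soft thresholds replacing bit extraction, and by making the construction robust in the small transition regions where the soft bits are not $0/1$; your sketch should acknowledge and incorporate this, since it is exactly where the claim ``no arbitrary multiplication of two non-constant signals is needed'' has to be verified. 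With these two points filled in, your plan coincides with the cited proof.
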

Consequently, the solutions of instances in $\textsf{Linear-FIXP}$ are always rationals of polynomial size.

\begin{remark}\label{rem:1}
One detail that we have omitted so far in our definition of $\textsf{FIXP}$ (for simplicity) is that formally $\textsf{FIXP}$ and $\textsf{Linear-FIXP}$ are defined to also include their closures under polynomial time separable linear reductions and polynomial time reductions respectively.\footnote{The term ``\emph{separable linear}'' refers to the property that solutions of the reduced instance correspond to solutions of the original instance through a polynomial-time computable \emph{linear transformation} of a subset of the coordinates of the reduced instance's solution. The authors of \cite{etessami2010complexity} require separable linearity for \textsf{FIXP}-reductions, but not for \textsf{Linear-FIXP} reductions. This is due to the fact that the fixed points of functions belonging to \textsf{FIXP} may be irrational, and linearly separable reductions preserve the approximation properties of approximate fixed points. For \textsf{Linear-FIXP}, the fixed points are always rational, and one is generally interested in exact computation for this class. Thus, the traditional notion of polynomial-time reducibility is appropriate here.}. We remark furthermore that in the initial definition of $\textsf{FIXP}$, which was given in \cite{etessami2010complexity}, the division operator and the operators $\sqrt[k]{\cdot}$ for $k \in \mathbb{N}$ were included in the arithmetic base. However, in the same paper the authors show that these can be dropped without altering the class. Similarly, the original definition of $\textsf{FIXP}$ includes problems in which $D_I$ is an arbitrary polytope with a polynomially computable set of constraints, but restricting this to $[0,1]^n$ turns out to not change the class. 
\end{remark}

An informal understanding of how the hardness of $\textsf{FIXP}$ compares to $\textsf{PPAD}$ (or $\textsf{Linear-FIXP}$) is as follows. $\textsf{PPAD}$ captures a type of computational hardness stemming from an essentially combinatorial source, as $\textsf{PPAD}$ was originally defined as a class that captures the hardness computing a solution to a graph-theoretical problem, where the solution is guaranteed to exist via a non-constructive parity argument. The class $\textsf{FIXP}$ introduces on top of that a type of numerical hardness that emerges from the introduction of multiplication and division operations: These operations give rise to irrationality in the exact solutions to these problems, and may independently also require the computation of rational numbers of very high precision or very high magnitude (which is enabled by the presence of the multiplication operation, by which one can perform successive squaring inside the associated algebraic circuit).









\section{FIXP-Completeness of CDS-Clearing}\label{sec:fixp}
Our first main result shows that \problem\ and its strong approximation variant are $\textsf{FIXP}_{(a)}$ complete.
What is remarkable about our proof for this, is that our hardness reduction does not start from any existing $\textsf{FIXP}$-hard problem. Rather, we show that we can take an arbitrary algebraic circuit and encode it in a direct way in the form of a financial system. Hence, our polynomial time hardness reduction is implicitly defined to work from to any arbitrary fixed point problem in $\textsf{FIXP}$. The reduction is constructed by devising various financial network gadgets which enforce that certain banks in the system have recovery rates that are the result of applying one of the operators in $\textsf{FIXP}$'s arithmetic base to the recovery rates of two other banks in the system: In other words, we can design our financial systems such that the interrelation between the recovery rates mimics a computation through an arbitrary algebraic circuit.
\begin{theorem}\label{thm:fixp}
\problem\ is $\textsf{FIXP}$-complete, and its strong approximation version is $\textsf{FIXP}_a$-complete.
\end{theorem}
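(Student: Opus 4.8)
Theorem~\ref{thm:fixp} bundles a membership claim and a hardness claim, which I would treat separately. For membership, I would argue directly that the map $f_I$ of \eqref{eq:f} is a $\textsf{FIXP}$ function. Given a non-degenerate instance $I=(N,e,c)$, the liability $l_i(r)$ is affine in $r$ and the asset value $a_i(r)$ is quadratic in $r$, both with rational coefficients read off from $e$ and $c$; hence each is computed by a polynomial-size circuit over $\{+,-,*\}$, and $f_I(r)_i = a_i(r)/\max\{l_i(r),a_i(r)\}$ is then produced with a single $\max$-gate and a single division gate. Non-degeneracy guarantees $\max\{l_i(r),a_i(r)\}>0$ at every coordinate that is a debtor or a reference bank, and the remaining coordinates are discarded exactly as in the footnote after the existence theorem, so the division is harmless; as the circuit is polynomial-time constructible and \problem\ is polynomially computable, we get \problem$\in\textsf{FIXP}$ (recall from Remark~\ref{rem:1} that allowing a division gate does not enlarge the class), and its strong approximation version is therefore in $\textsf{FIXP}_a$ by definition.

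The hard direction is $\textsf{FIXP}$-hardness, and the plan is to build a polynomial-time SL-reduction \emph{directly} from the generic fixed-point problem rather than from a known complete problem. Given $\Pi\in\textsf{FIXP}$, an instance $I\in\Pi$, and the algebraic circuit $C_I$ computing $F_I\colon[0,1]^n\to[0,1]^n$, I would output a non-degenerate financial system $\rho(I)$ together with a designated set of $n$ ``output banks'' so that the clearing vectors of $\rho(I)$, restricted to these banks, are exactly the fixed points of $F_I$ up to a fixed, polynomial-time computable linear map. The reduction proceeds in two phases. First, using the machinery of \cite{etessami2010complexity}, I would assume without loss of generality that $C_I$ uses only the basis $\{\max,+,*\}$ with rational constants and that the value on every wire stays within $[0,1]$ --- this normalization is needed because recovery rates live in $[0,1]$, whereas a raw $\textsf{FIXP}$ circuit can carry negative or large intermediate values. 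Second, I would replace each gate of the normalized circuit by a constant-size financial-network gadget (a handful of banks, debt contracts, and CDSes) that forces the recovery rate of a distinguished output bank to equal the gate's operation applied to the recovery rates of two input banks; the basic primitive here is that a CDS with reference bank $v$ makes the quantity $1-r_v$ available, so $r_v$ and its complement are both accessible. Wiring the gadgets along the arcs of $C_I$ and identifying the construction's output banks with its input banks turns the clearing condition of $\rho(I)$ into precisely $r=F_I(r)$ on the output coordinates.

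The main obstacle I anticipate lies in the first phase: the unit-cube normalization of \cite{etessami2010complexity} uses division, and a faithful division gadget $r_c=r_a/r_b$ appears incompatible with non-degeneracy. Realizing $r_b$ in a denominator requires bank $c$'s liabilities to scale with $r_b$ and hence to vanish at $r_b=0$; repairing non-degeneracy (giving $c$ external assets or an auxiliary debt) then makes $f_I$ return $1$ at $r_b=0$ instead of $r_a/r_b$, and the gadget fails. I would circumvent this by showing that the normalization can be rerun with $\sqrt{\cdot}$ in place of $/$ while keeping $f_I$ well defined: square roots occur naturally as exact clearing values, because a bank whose assets are routed back to it through a short cycle satisfies a quadratic clearing equation (as in the instance with solution $1-\sqrt{2}/2$ discussed earlier), and one verifies that the only divisions the normalization genuinely needs are by constant powers of two, which $\sqrt{\cdot}$ together with multiplication can simulate. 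I would therefore add a square-root gadget to the toolkit and route the normalization through it.

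The remaining work is to build and verify the gadget family --- rational constants, $\max$, $+$, $*$, complement/copy, and $\sqrt{\cdot}$ --- where I expect the bulk of the effort to lie in the multiplication and square-root gadgets (these are the sources of irrationality and super-polynomial precision, and for each one must solve the self-referential clearing equations to confirm the intended behaviour) and in establishing \emph{robustness}: since recovery rates in a financial system are globally coupled, each gadget must compute its operation at \emph{every} clearing vector of $\rho(I)$, so one has to check that distinct gadgets interact only through their designated input and output banks and never induce spurious feedback. Once correctness and robustness are established, $\rho$ is a polynomial-time SL-reduction and \problem\ is $\textsf{FIXP}$-hard; since SL-reductions carry strong $\epsilon$-approximate fixed points through the associated linear map, the same $\rho$ gives $\textsf{FIXP}_a$-hardness of the strong approximation version, and together with the membership argument this yields $\textsf{FIXP}_{(a)}$-completeness.
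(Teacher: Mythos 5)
Your proposal follows essentially the same route as the paper's proof: membership via the natural circuit for $f_I$ with non-degeneracy ruling out division by zero, and hardness via a direct gate-by-gate encoding of a unit-cube-normalized $\textsf{FIXP}$ circuit into financial gadgets, with the division/non-degeneracy clash resolved exactly as in the paper by simulating the only needed divisions (by constant powers of two) with a square-root gadget plus squaring. The only cosmetic difference is that you would keep $\max$ in the basis and build a $\max$ gadget, whereas the paper eliminates $\max/\min$ via absolute-difference gates (though it also exhibits $\max/\min$ gadgets in an appendix), so both variants go through.
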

\begin{proof}
Containment of \problem\ in $\textsf{FIXP}$ is immediate: The clearing vectors for an instance $I = (N,e,c) \in \problem$ are the fixed points of the function $f_I$ defined coordinate-wise by
\begin{equation*}
    f_I(r)_i = \frac{a_i(r)}{\max\{l_i(r),a_i(r)\}}
\end{equation*}
as in (\ref{eq:f}). The functions $l_i(r)$ and $a_i(r)$ are defined in (\ref{eq:liabilities}) and (\ref{eq:assets}), from which it is clear that $f_I$ can be computed using a polynomial size algebraic circuit with only of $\{\max, +, *\}$, and rational constants. Note that non-degeneracy of $I$ prevents division by $0$, so that the output of the circuit is well-defined for every $x \in [0,1]^n$.~\footnote{Although not explicitly stated in our definitions, we can always assume that in our networks there are no isolated vertices with no assets and no liabilities.} This shows that \problem\ is in $\textsf{FIXP}$ and that its strong approximation version is in $\textsf{FIXP}_a$.

For the $\textsf{FIXP}$-hardness of the problem, let $\Pi$ be an arbitrary problem in $\textsf{FIXP}$. We describe a polynomial-time reduction from $\Pi$ to \problem. Let $I \in \Pi$ be an instance, let $F_I : [0,1]^n \rightarrow [0,1]^n$ be $I$'s associated fixed point function, and let $C_I$ be the algebraic circuit corresponding to $F_I$. As a pre-processing step, we convert $C_I$ to an equivalent alternative circuit $C_I'$ that satisfies that all the signals propagated by all gates in $C_I'$ and all the used rational constants in $C_I'$ are contained in the interval $[0,1]$. The transformed circuit $C_I'$ may contain two additional type of gates: Division gates and gates that computes the absolute value of the difference of two operands. We will refer to the latter type of gate as an \emph{absolute difference gate}. The circuit $C_I'$ will not contain any subtraction gates, and will not contain $\max$ and $\min$ gates either. 

The transformation procedure for $C_I$ follows the same approach of the transformation given in Theorem 18 of \cite{etessami2010complexity} where the 3-Player Nash equilibrium problem is proved $\textsf{FIXP}$-complete, and borrows some important ideas from there. Nonetheless, there are important differences in our transformation, starting with the fact that we use a different set of types of gates in our circuit. 

First, transforming $C_I$ into a circuit with only non-negative rational constants is trivial, since we can introduce subtraction gates combined with the rational constant $0$ (which we will get rid of later on in the transformation process). 

As a next step, we remove all $\min$ and $\max$ gates and replace them with addition, subtraction, multiplication, and absolute difference gates through the identities $\max\{a,b\} = (1/2)\cdot ((a+b) + |a-b|)$ and $\min\{a,b\} = (1/2)\cdot((a+b) - |a-b|)$. Let the resulting circuit be $C_I''$. 

The third step in the transformation is to move all the subtraction gates to the top of the circuit, right before the output gates, which results in exactly one subtraction gate being present for each output variable. 
This step is realised by representing each (potentially negative) signal $s$ in $C_I''$ that is propagated between two arithmetic gates, by two separate signals $s^+, s^- \geq 0$, which represent the positive and negative parts of $s$ respectively, and are guaranteed to have values such that $s^+ - s^- = s$: Each arithmetic gate $g_i$ in the circuit is replaced (sequentially, from the bottom of the circuit upward) by gates that operate on the separate positive and negative parts of the original input signals as follows. 

Suppose $g_i$ is a node outputting a constant rational value $c > 0$, then we may replace $g_i$ by two constant nodes, one which outputs $c$, the positive part of the resulting separated signal, and one which outputs $0$, representing the negative part of the signal. 

Suppose next that $g_i$ is an addition gate operating originally on the outputs $s$ and $t$ of two other gates. Then, we use the observation that for the two signals $s$ and $t$, separated into positive and negative parts $s^+, s^-, t^+, t^-$, it holds that $(s^+ - s^-) + (t^+ - t^-) = (s^+ + t^+) - (s^- + t^-)$. Therefore, $g_i$ can be replaced by two addition gates $g_i^+$ and $g_i^-$, where $g_i^+$ operates on signals $s^+$ and $t^+$, and $g_i^-$ operates on gates $s^-$ and $t^-$. 

If $g_i$ is a multiplication gate, note that for two inputs $s$ and $t$, separated into positive and negative parts $s^+,s^-,t^+,t^-$, it holds that $(s^+ - s^-)\cdot(t^+ - t^-) = (s^+t^+ + s^-t^-) - (s^+t^- + s^-t^+)$, so we may similarly replace each multiplication gate with four multiplication gates and two addition gates accordingly.

When $g_i$ is an absolute difference gate, we note that for two inputs $s$ and $t$, separated into positive and negative parts $s^+,s^-,t^+,t^-$, it holds that $|(s^+ - s^-) - (t^+ - t^-)| = |(s^+ + t^-) - (s^- + t^+)| - 0$ so that we can replace $g_i$ with two addition gates, one absolute difference gate, and one constant gate outputting $0$ (representing the negative part of the resulting signal).

Lastly, when $g_i$ is a subtraction gate in $C_I''$, we remove it and connect the gates that provide the inputs to $g_i$ appropriately to the subsequent gates that $g_i$ points to. The final subtraction gates introduced in the top layer of the circuit are directly connected to the output gates, and subtract the positive and negative parts of the final output signal, so as to generate the correct $n$ outputs. Since the function $F_I$ maps $[0,1]^n$ to itself, and the resulting circuit is equivalent to the original circuit $C_I$, we now have a circuit where all signals in the circuit are non-negative, and where we have introduced the absolute difference operation as an additional type of gate into the circuit. Next, we perform a straightforward step that eliminates the remaining $m$ subtraction gates altogether, by replacing them with absolute difference gates. This gives us an equivalent circuit, as we know that the result of the $n$ subtractions at the top of the circuit are positive for all possible inputs. We call this resulting circuit without subtraction gates $C_I'''$.

The last step is to transform the circuit $C_I'''$ into a circuit where all signals inside the circuit are in $[0,1]$ for all possible inputs to the circuit. To do so, in case there are rational constants exceeding $1$ present in the circuit, we first multiply all the rational constants by the inverse $c \in [0,1]$ of the largest rational constant in the circuit. We add the appropriate gates at the start of the circuit that multiply all input gates by $c$, and we add division gates at the end of the circuit that divide the final signal by $c$, before being propagated to the output gates. Furthermore, for each multiplication gate, we add a division gate that divides its result by $c$. This results in a circuit where all rational constant nodes are in $[0,1]$, and all internal signals of the original circuit are essentially scaled by a factor of $c$. The resulting circuit is thus equivalent to $C_I'''$.\footnote{Note that the division by $c$ that we introduce right after every multiplication gate is needed because multiplying two scaled signals $c\cdot s$ and $c\cdot s'$ results in $c^2 s s'$, hence dividing this result by $c$ results in the desired output signal $css'$.} 

Now that we have that all the rational constants are in $[0,1]$, our final step is transforming the circuit further in such a way that all signals inside the circuit are contained in $[0,1]$ as well.
Observe that the magnitude of any signal in the circuit is at most doubly exponential in the size of $C_I'''$ (this can be attained when a certain signal $s > 1$ goes through a chain of successive squaring operations, using multiplication gates). We thus perform the following final transformation to enforce that all signals stay in $[0,1]$: Let $d$ be a number, bounded by a polynomial in $|I|$, such that all signals in the circuit are strictly smaller than $2^{2^d}$ for every input in $[0,1]^n$. We add to the start of the circuit an auxiliary circuit $T$ that computes $t = 1/2^{2^d}$ by successively squaring the constant $1/2$ a number of $d$ times, and we use $T$ right after each input node and right after each rational constant node in order to multiply every input signal and every rational constant node by $t$. For each multiplication gate, we add a division gate such that its result is divided by $t$. Lastly, at the end of the circuit, we add a division gate just before each output gate, that divides the final signal by $t$. We now observe that each signal in the new circuit has effectively been multiplied by a factor of $t$, with exception of the signals directly propagated by the input gates, and directly entering the output gates, which still retain their original values.

This completes our pre-processing steps on the algebraic circuit, and we denote the resulting circuit by $C_I'$.
The circuit $C_I'$ has the desired properties that we are looking for: All signals in the circuit are in $[0,1]$ regardless of the input. Furthermore, the applied transformation ensures that $C_I'$ is equivalent to the original $C_I$ and thus represents $F_I$. The circuit can be obtained in a polynomial number of computation steps from $C_I$ and is thus polynomial-time computable from the instance $I$. The circuit $C_I'$ consists of $\{+,*,/\}$-gates, as well as absolute difference gates that compute the absolute value of the difference of two inputs. Lastly, there are rational constant nodes in the circuit where all such constants are in $[0,1]$.
For notational convenience, in the remainder of the proof we may treat $C_I'$ as the function $F_I$, hence we may write $C_I'(x) = y$ to denote $F_I(x)=y$. 

In the remainder of the proof, we define our reduction $\rho$ to \problem: We construct our instance $\rho(I)$ of \problem\ (i.e., a non-degenerate financial network) from the circuit $C_I'$. The instance $\rho(I)$ will have the property that its clearing vectors are in one-to-one correspondence with the fixed points of $C_I'$, and that banks $1,\ldots, n$ in our construction correspond to the input gates of $C_I'$. More precisely, our construction is such that for each fixed point $x$ of $C_I'$, in the corresponding clearing vector $r$ for $\rho(I)$ it holds that $(r_1,\ldots, r_n) = x$.

Our reduction works through a set financial system \emph{gadgets}, of which we prove that their recovery rates (under the clearing condition) must replicate the behaviour of each type of arithmetic operation that can occur in the circuit $C_I'$. This part of our hardness reduction shares similarities with the $\textsf{PPAD}$-hardness proof in \cite{schuldenzucker2017finding} who also design a set of gadgets to replicate the behaviour a certain circuit, although this circuit is of a different type, and our gadgets account for an entirely different set of operations.

Each of our gadgets has one or two \emph{input banks} that correspond to the input signals of one of the types of arithmetic gate, and there is an \emph{output bank} that corresponds to the output signal of the gate. For each of the gadgets, it holds that the output bank must have a recovery rate that equals the result of applying the respective arithmetic operation on the recovery rates of the input banks.

In our financial system, these gadgets are then connected together according to the structure of the circuit $C_I'$: Output banks of (copies of) gadgets are connected to input banks of other gadgets through a single unit-cost debt contract, which mimics the propagation of a signal between two gates of the arithmetic circuit. This results in a financial system whose behaviour replicates the behaviour of the arithmetic circuit. The first layer of the financial system consists of $n$ banks representing the $n$ input nodes of the circuit, and the last layer of the financial system has $n$ banks corresponding to the $n$ output nodes of the circuit. As a final step in our reduction, the $n$ output banks in the last layer are connected through a single unit-cost debt contract to the $n$ input banks. This last step enforces the recovery rates of the input banks (i.e. banks $1, \ldots, n$) are equal to the recovery rates of the last layer, under the clearing requirement. Consequently, any vector of clearing recovery rates for $\rho(I)$ must then correspond to a fixed point of $C_I'$, where the recovery rates of the first $n$ banks in the system equal those of the final $n$ banks, so that $C_I'(r_1, \ldots, r_n) = (r_1, \ldots, r_n)$, i.e., $(r_1, \ldots, r_n)$ is a fixed point of $C_I'$.

The above paragraphs comprise a high-level description of our reduction. We will now proceed to discuss the details. We start with defining our gadgets, using our graphical notation. As stated, our gadgets each have one or two input banks, and one output bank. Each gadget represents certain arithmetic operations, and gadgets can be combined with each other to form the arithmetic basis $\{+,*,/\}$ and the absolute difference operation defined above. 

We start with a straightforward addition gadget, named $g_+$, depicted in Figure \ref{fig:addgadget}. This gadget directly accounts for the addition operation in the arithmetic basis. In our figures, input banks are denoted by black arrows incoming from the left, and output banks correspond to black arrows pointing out of the bank. The black arrows represent connections to other gadgets, and these connections are always realised by a debt contract of unit cost, and are always from the output node of a gadget to an input node of another gadget. The assets that flow into an input gadget are thus always in $[0,1]$ and all gadgets are designed such that the incoming flow of assets must equal the clearing recovery rate of a bank. For example: For both input banks of the addition gadget, their liabilities are equal to $1$, given that their assets are equal to the incoming flow they receive, their recovery rate under any clearing vector (which must equal the ratio of assets and liabilities) is equal to the inflow along the black arc.

In the figures representing our gadgets, some of the nodes have been annotated with a formula in terms of the recovery rates of the input banks of the gadget, subject to the resulting values being in the interval $[0,1]$. This can be seen for example in the output node of our addition gadget in Figure \ref{fig:addgadget}. Such a formula represents the value that a clearing recovery rate must satisfy for the respective node. It is straightforward to verify that the given formulas are correct for each of our gadgets. 

Since all signals inside $C_I'$ are guaranteed to be in $[0,1]$ for all input vectors, our financial system gadgets can readily be used and connected to each other to construct a financial system that corresponds to $C_I'$, i.e., such that the clearing recovery rates of the input and output banks of each of the gadgets must correspond to each of the signals inside the circuit $C_I'$.

\begin{figure}[htbp!]
    \centering
\begin{tikzpicture}
[shorten >=1pt,node distance=2cm,initial text=]
\tikzstyle{every state}=[draw=black!50,very thick]
\tikzset{every state/.style={minimum size=0pt}}
\tikzstyle{accepting}=[accepting by arrow]
\node[state,initial] (1) {$r_1$};
\node[state,initial] (2)[below of=1] {$r_2$};
\node[state,accepting](3) [below right of=1,xshift=5mm,yshift=5mm] {$r_1 + r_2$};
\draw[blue,->,very thick] (1)--node[midway,black,yshift=2mm]{1}(3);
\draw[blue,->,very thick] (2)--node[midway,black,yshift=3mm]{1}(3);
\end{tikzpicture}
\caption{Addition gadget $g_+$.}
    \label{fig:addgadget}
\end{figure}
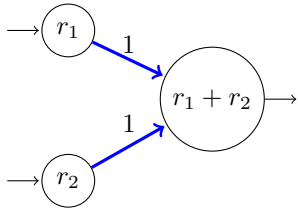

The remaining gadgets are displayed in Figures \ref{fig:duplicationgadget} to \ref{fig:absdifferencegadget}, in Appendix \ref{apx:gadgets}. 


Besides gadgets for the necessary arithmetic operations, our reduction employs an additional \emph{duplication gadget} $g_{\text{dup}}$ that can be used to connect the output of a particular gadget to the input of more than one subsequent gadget. This gadget is displayed in Figure \ref{fig:duplicationgadget}.
Gadget $g_{\text{dup}}$ furthermore has the convenient property that it can be used for multiplication by a rational $c \in [0,1]$: One of the output nodes has the recovery of the input node, while the other has this recovery rate multiplied by $c$. When choosing $c = 1$ the input recovery rate is effectively duplicated. 

The gadgets for multiplication and division, $g_{*}$ and $g_{/}$ are given in simplified form in Figures \ref{fig:multiplicationgadget} and \ref{fig:divisiongadget}. These gadgets work as intended, but the problem with them is that they do not satisfy the non-degeneracy condition that we assume instances of \problem\ to have.
It is possible to transform these into non-degenerate gadgets, but this process is technically detailed, in particular for the case of division: It requires replacing some of the divisions in the circuit $C_I'$ by taking successive square-roots followed by successive squaring operations, where proper care has to be taken to ensure that the results of all these operations stay within the interval $[0,1]$.
The non-degenerate versions of our gadgets (which interestingly includes a gadget that outputs square roots of input recovery rates), and the additional transformation of $C_I'$ that  is required for this, are described in Appendix \ref{apx:nondegenerate}.

The absolute difference gadget $g_{\text{abs}}$ is given in Figure \ref{fig:absdifferencegadget}. It uses $g_{\text{dup}}$ and $g_{+}$, as sub-gadgets in its definition, as well as the sub-gadget $g_{\text{pos}-}$ (given in Figure \ref{fig:subtractiongadget}) that takes two input recovery rates $r_1$ and $r_2$, and produces an output recovery rate of $\max\{r_1-r_2,0\}$.

Besides this set of gadgets that correspond to the arithmetic base of the circuit $C_I'$, in order to translate the circuit $C_I'$ appropriately into a financial system, we also need the ability to  generate rational constants as inputs to the gadgets (which correspond to the rational constant nodes of $C_I'$). This is straightforward: Any rational recovery rate $c \in [0,1]$ can be generated using a single node $i$ with a single outgoing debt contract of unit cost, that can be pointed towards an input node of any gadget copy.

We provide some further financial system gadgets in Appendix \ref{apx:furthergadgets}. These gadgets demonstrate the ability of financial systems to simulate the $\max$ and $\min$ operations. We included these for the interested reader, but those gadgets are otherwise not used in our reduction, since $C_I'$ does not have any $\min$ and $\max$ gates. 

With the set of gadgets we have now defined, we can proceed to interconnect copies of gadgets to create financial systems of which the clearing recovery rates of the input and output gadgets respects an arbitrary algebraic computations over the arithmetic basis $\{+,-,*,/\}$, the absolute difference operator, and rational constants. In particular, we can connect $n$ input nodes, say banks $\{1,\ldots, n\}$, to a network of gadgets that implements the algebraic circuit $C_I'$. We connect its $n$ outputs, say banks $(m-n+1, \ldots, m)$ (where $m$ is the number of banks in the resulting financial system) pairwise to the $n$ input nodes, and we define the resulting financial system to be $\rho(I)$. It is clear that $\rho(I)$ can be constructed in polynomial time from $C_I'$, and since $C_I'$ can be constructed in polynomial time from $I$, the financial system $\rho(I)$ takes polynomial time to compute.

Next, we show that clearing vectors of $\rho(I)$ correspond in a polynomial-time computable way (in this case: trivially) to fixed points of $C_I'$, thereby proving the correctness of the reduction and completing the proof. Let $r$ be a clearing vector of $\rho(I)$. We show that $(r_1, \ldots, r_n)$ is a fixed point of $I$. Due to the fact that $r$ is clearing, and by the definition of the gadgets, we know that the assets flowing into nodes $(1,\ldots, n)$ are respectively $(r_1, \ldots, r_n)$, and come from nodes $(m-n+1, \ldots, m)$. Hence, the recovery rates of nodes $(m-n+1, \ldots, m)$ are equal to $(r_1,\ldots,r_n)$. By construction of our network, (i.e., by correspondence of our gadgets and the way they are connected to each other, following the structure of the algebraic circuit $C_I'$), it must then hold that $C_I'(r_1, \ldots, r_n) = (r_1, \ldots, r_n)$. Hence, $(r_1, \ldots, r_n)$ is a fixed point of $I$. 

This completes the proof, as the above is sufficient to establish $\textsf{FIXP}$-completeness. $\textsf{FIXP}_a$ completeness of the strong approximation variant is immediate, since any strong $\epsilon$-approximation of the recovery rate vector of $R(I)$ corresponds in the same manner to a strong $\epsilon$-approximate fixed point of $C_I'$.

In addition, it is straightforward to see that fixed points of $C_I'$ correspond to recovery rates of $\rho(I)$, so that indeed $\rho(I)$ captures the complete set of fixed points of $C_I'$: Let $x$ be a fixed point of $C_I'$. Now construct a vector of recovery rates for $\rho(I)$ by setting $(r_1, \ldots, r_n) = x$ and compute the remaining recovery rates of the nodes inside the gadgets in the natural way, as described by the arithmetic operations that each of the gadgets represents. Since $x$ is a fixed point of $C_I'$, the recovery rates of banks $(m-n+1,\ldots, m)$ will be set to $x$, which causes the recovery rates of banks $(1,\ldots, n)$ (and thereby also the entire financial system) to satisfy the clearing condition.
\end{proof}

\section{A Sufficient Structural Condition for Irrational Solutions}\label{sec:irrationality}
In \cite{schuldenzucker2016clearing}, it was pointed out that financial systems with CDS contracts can encode polynomials. This claim was made informally and therefore its meaning is somewhat imprecise. However, it raises some interesting questions about which properties of financial systems generate irrational solutions. 

In this section we investigate the existence of irrational solutions in financial systems in more depth. Our starting point is the observation that irrational clearing recovery rates can only arise under certain structural conditions on the financial system. For example, if the system has no CDSes at all, a rational clearing vector is guaranteed to exist and can be found in polynomial time, by the algorithms in \cite{eisenberg2001systemic} (one of which comes down to solving a simple linear program). Another example applies when we consider the directed graph containing all the arcs from all debtors to their corresponding creditors, and include also the arcs from reference banks of CDSes to the corresponding debtors: If this graph is acyclic, then it can be shown (as we will see below) that every clearing vector must be rational. 

Which structural conditions must exactly hold in a financial system for irrational clearing vectors to potentially exist? We aim to characterise such structural conditions of the financial system graph independently of the relevant numerical values in these systems (i.e., the values of the external assets, and the notionals on the debt contracts and CDSes): Given a partially specified instance of a financial system where these numerical values are not specified, and thus only the debtors, creditors, and reference banks (in the case of CDSes) are provided to us for each contract in the instance, can we identify in which cases there are numerical coefficients (notionals $c$ and external assets $e$) for the instance such that under these coefficients, no rational solutions exist? In this section, we present a set of sufficient structural conditions that provides a partial answer to this question. 
In the subsequent section, we study the complementary goal of identifying under which conditions a rational solution must exist, and formulate a second set of structural conditions that are close to our former irrationality conditions.



We define a type of auxiliary coloured directed graph, associated to a financial system, for which we examine specific types of cycles,
and prove that the presence of such a cycle is a structurally sufficient condition for irrational solutions to arise, in the sense that we can then set the rational coefficients such that every clearing vector of the system is irrational.  


\subsection{Switched Cycles}
Assume an instance $I = (N,e,c)$ of \problem\ and let $G_I$ be its contract graph. We construct an auxiliary coloured directed graph $G_{I,\text{aux}}$ as follows: We include all the arcs of the contract graph in $G_{I,\text{aux}}$, which retain their blue and orange colours. Furthermore, for every pair $(i,j)$ such that there exists a CDS $(i,j,R) \in \mathcal{CDS}$ (for some $R \in N$), we ad a red-coloured arc $(i,j)$ to $G_{I,\text{aux}}$. Thus, in $G_{I,\text{aux}}$ there is at most one red, one orange, and one blue arc between every ordered pair of nodes. We refer to the resulting tricoloured directed graph as simply the \emph{auxiliary graph} of $I$. An example of a financial system and its auxiliary graph is given in Figure \ref{fig:auxgraphexample}. 
\begin{figure}[htb]
    \centering
\begin{tikzpicture}[shorten >=1pt,node distance=2cm,initial text=]
\tikzstyle{every state}=[draw=black!50,very thick]
\tikzset{every state/.style={minimum size=0pt}}
\node[state] (1) {$1$}; 
\node[state] (2) [right of=1] {$2$};
\node[state] (3) [right of=2] {$3$};
\node[state] (4) [right of=3] {$4$};
\draw[orange,very thick,->,snake=snake] (2)--(1);
\draw[blue,very thick,->] (2)--(3);
\draw[blue,very thick,->] (3)--(4);
\node[state] (5) [below of=1] {$5$};
\node[state] (6) [below of=2] {$6$};
\node[state] (7) [below of=3] {$7$};
\node[state] (8) [below of=4] {$8$};
\draw[blue,very thick,->] (6)--(5);
\draw[blue,very thick,->] (7)--(6);
\draw[orange,very thick,->,snake=snake] (7)--(8);
\path [orange,->,draw,dashed,thick] (6) -- ($ (2) !.5! (1) $);
\path [orange,->,draw,dashed,thick] (3) -- ($ (7) !.5! (8) $);
\end{tikzpicture}
\qquad
\begin{tikzpicture}[shorten >=1pt,node distance=2cm,initial text=]
\tikzstyle{every state}=[draw=black!50,very thick]
\tikzset{every state/.style={minimum size=0pt}}
\node[state] (1) {$1$}; 
\node[state] (2) [right of=1] {$2$};
\node[state] (3) [right of=2] {$3$};
\node[state] (4) [right of=3] {$4$};
\draw[orange,very thick,->,snake=snake] (2)--(1);
\draw[blue,very thick,->] (2)--(3);
\draw[blue,very thick,->] (3)--(4);
\node[state] (5) [below of=1] {$5$};
\node[state] (6) [below of=2] {$6$};
\node[state] (7) [below of=3] {$7$};
\node[state] (8) [below of=4] {$8$};
\draw[blue,very thick,->] (6)--(5);
\draw[blue,very thick,->] (7)--(6);
\draw[orange,very thick,->,snake=snake] (7)--(8);
\draw[red,very thick,->] (6)--(2);
\draw[red,very thick,->] (3)--(7);
\path [orange,->,draw,dashed,thick] (6) -- ($ (2) !.5! (1) $);
\path [orange,->,draw,dashed,thick] (3) -- ($ (7) !.5! (8) $);
\end{tikzpicture}
\caption{A financial system $G_I$ represented as its contract graph, and the corresponding auxiliary graph $G_{I,\text{aux}}$. The dashed lines used in our original graphical notation are retained, for clarity. The coefficients along the arcs and on the nodes are omitted.}
    \label{fig:auxgraphexample}
\end{figure}
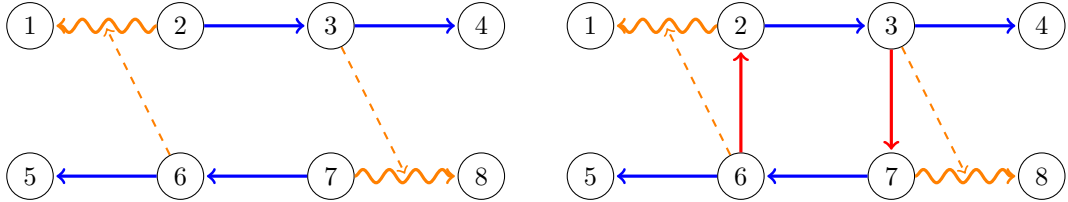


\begin{definition}\label{def:DAG}
A financial system is called \emph{acyclic} iff its auxiliary graph does not contain any directed cycle.
\end{definition}
The financial system of Figure \ref{fig:auxgraphexample} is not acyclic, since its auxiliary graph contains the cycle $(2,3,7,6)$. Acyclic financial systems turn out to be easy to analyse: the clearing recovery rate vector is always rational. 
\begin{proposition}
Every acyclic financial system has only rational solutions.
\end{proposition}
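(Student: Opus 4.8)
The plan is to exploit the acyclicity of the auxiliary graph $G_{I,\text{aux}}$ to process the banks in a topological order and show that each recovery rate is forced to be a rational number computable by finitely many field operations on rational data. The key observation is that the auxiliary graph contains all the dependencies that matter for the clearing condition: the value $r_i = \min\{1, a_i(r)/l_i(r)\}$ depends on $r_j$ for every bank $j$ that either pays $i$ (an arc into $i$ in the contract graph) or is a reference bank of a CDS in which $i$ is the debtor (a dashed line, captured by the orange reference arc into $i$'s outgoing CDS arc, hence by the red arc emanating from the reference bank). So I would first argue carefully that if $r_i$ depends on $r_j$ in either of these two ways, then $G_{I,\text{aux}}$ contains an arc $(j,i)$ (blue/orange for the payment dependency, and for the reference-bank dependency one uses the red arc $(R,i')$ together with... — actually here I need to be slightly careful, see below). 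Granting this, acyclicity gives a topological order $\pi$ on $N$ such that $r_i$ depends only on $\{r_j : \pi(j) < \pi(i)\}$.

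Next I would proceed by induction along this topological order. For the base case, a bank $i$ with no in-arcs in $G_{I,\text{aux}}$ receives no payments from other banks and is the debtor of no CDS (since a CDS-debtor $i$ has a red arc from its reference bank into... — again the reference-bank arc; if $i$ is a debtor in a CDS with reference $R$ then by construction there is a red arc, and it points into $i$ only if we define things that way, otherwise one must track the orange dashed line). So $a_i(r) = e_i$ is rational and $l_i(r) = \sum_j c_{i,j}^\emptyset$ is rational (no CDS liabilities since $i$ is not a CDS debtor, or its CDS liabilities are $0$), whence $r_i = \min\{1, e_i/l_i\}$ is rational (with the degenerate case $l_i = 0$ giving $r_i = 1$). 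For the inductive step, assume $r_j \in \mathbb{Q}$ for all $j$ earlier in the order. Then $l_i(r) = c^\emptyset_{i,\cdot} + \sum_k (1-r_k) c^k_{i,\cdot}$ is a rational number (each $r_k$ appearing here is a reference bank of a CDS with debtor $i$, hence earlier in the order), and $a_i(r) = e_i + \sum_j r_j l_{j,i}(r)$ is rational (each $r_j$ here is a payer of $i$, hence earlier, and $l_{j,i}(r)$ is rational by the same CDS argument applied to $j$). Therefore $r_i = \min\{1, a_i(r)/l_i(r)\}$ is rational, using non-degeneracy to rule out division by zero when $l_i(r) = 0$ is handled by the separate clause $r_i = 1$. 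This closes the induction and shows every coordinate of every clearing vector is rational.

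The main obstacle, and the step I would be most careful about, is correctly relating the CDS reference-bank dependency to an arc in $G_{I,\text{aux}}$: the red arc added for $(i,j,R)$ goes from $i$ to $j$ (between debtor and creditor), so it captures the same dependency direction as the orange CDS arc, not the reference dependency $R \to i$. The dependency of $r_i$ on $r_R$ is instead encoded by the dashed line from $R$ to the arc $(i,j)$ — which is present in $G_{I,\text{aux}}$ as part of the retained original notation, but is not a proper arc. I would need either (a) to note that a cleaner formulation replaces the dashed line by a genuine arc $(R,i)$ in the auxiliary graph, or (b) to define acyclicity with respect to the graph that includes "reference-bank-to-debtor" arcs — and indeed the paper's prose just before the statement says exactly this ("include also the arcs from reference banks of CDSes to the corresponding debtors"), so I would reconcile the formal Definition~\ref{def:DAG} with that description, or simply invoke that an auxiliary cycle would be induced. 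Modulo pinning down this graph-theoretic bookkeeping, the argument is a routine topological-order induction, and I do not expect any genuine difficulty beyond it.

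\smallskip\noindent\emph{Remark on an alternative framing.} One can phrase the same argument without explicit induction: since $G_{I,\text{aux}}$ is a DAG, the system of clearing equations, read in topological order, is a triangular system in which each $r_i$ is determined by a $\min$ of $1$ and a ratio of $\mathbb{Q}$-affine expressions in the previously-determined variables; a triangular system over $\mathbb{Q}$ with piecewise-rational update maps has a rational solution, and uniqueness of the solution (not needed for the statement, but true here) follows because each step is deterministic given the previous ones.
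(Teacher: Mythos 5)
Your proposal is correct and follows essentially the same route as the paper: topologically order the acyclic auxiliary graph and induct along it, observing that $a_i(r)$ and $l_i(r)$ involve only recovery rates of banks earlier in the order, so each $r_i$ is the $\min$ of $1$ and a ratio of rationals (with $l_i(r)=0$ handled by the clause $r_i=1$). Your concern about the red-arc direction is resolved exactly as you suspect: despite the literal wording of the formal definition, the red arcs are intended to run from the reference bank to the CDS debtor (as drawn in the figures and as used in the switched-on/off definitions), which is precisely what turns the reference-bank dependency into an arc and makes the topological-order argument — the paper's own proof — go through.
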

\begin{proof}
Assume an acyclic financial system $I = (N,e,c)$ and let $G$ and $G_{\text{aux}} = (N, A_{\text{aux}})$ be its contract graph and auxiliary graph respectively. Since $I$ is acyclic, $G_{\text{aux}}$ contains no cycles, meaning that for any $(i,j,R) \in \mathcal{CDS}$, there is no path from $i$ to $R$. Since $G_{\text{aux}}$ is a directed acyclic graph (DAG), we can rearrange its nodes in topological order in polynomial time (see e.g. \cite{kleinberg2006algorithm}). 
Let $T$ be the topological ordering of $G_{aux}$. Without loss of generality, we assume that each node $i \in N$ is also the $i$'th node in $T$ so that outgoing arcs of $i$ point to higher-numbered nodes, and incoming arcs of $i$ point to lower-numbered nodes. 

We observe that the recovery rate $r_i$ of any node $i \in N$ is determined by only the recovery rates of banks $j < i$, under the clearing condition, which holds because both the assets $a_i(r)$ and liabilities $l_i(r)$ under any clearing vector do not depend on any of the recovery rates $r_{i+1}, \ldots, r_n$ (following (\ref{eq:liabilities}) and (\ref{eq:assets})). This means that we can straightforwardly compute a clearing vector of recovery rates by iterating over the banks in topological order $T$, and using (\ref{eq:liabilities}) and (\ref{eq:assets}) to compute $r_i$ from $r_1, \ldots r_{i-1}$ in each iteration $i$. 

In each iteration, the computation of the numerator and denominator of the recovery rate involves summations of terms containing multiplications, subtractions, and additions of recovery rates with other recovery rates and with rational constants. The numerator and denominator are thus rationals if all preceding recovery rates are rationals as well. Since the recovery rate computed in the first iteration is rational, by induction all computed recovery rates are rational. 
%
%
%
\end{proof}
 
This initial insight leads us to focus on specifying cyclic structures that might be capable of generating irrationality. The classical results of \cite{eisenberg2001systemic} show that financial systems without credit default swaps always have rational (and polynomial time computable) recovery rates. Thus, for irrational solutions to emerge, we know that the presence of just \emph{any} cycle is not sufficient, and that CDSes must be involved in some way. 

\begin{definition}[Switched off/switched on nodes]
We define a node $i \in N$ as \emph{switched off} iff its number of incoming red arcs equals 1, and $i$ does not have outgoing blue arcs (this implies that all CDSes of which $i$ is the debtor share the same reference bank). We define a node $i \in N$ as \emph{switched on} iff one of the following holds:
\begin{itemize}
    \item its number of incoming red arcs exceeds 1;
    \item its number of incoming red arcs equals 1 and there is at least one outgoing blue arc.
\end{itemize} 
\end{definition}  
We note that \emph{switched on} and \emph{switched off} nodes are not complements of each other: A node that is not a debtor in any CDS is neither switched on nor switched off.  

Figure \ref{fig:switches} below illustrates these notions.
\begin{figure}[htbp!]
    \centering
\begin{tikzpicture}[shorten >=1pt,node distance=2cm,initial text=]
\tikzstyle{every state}=[draw=black!50,very thick]
\tikzset{every state/.style={minimum size=0pt}}
\node[state] (1) {}; 
\node[state] (2) [right of=1] {off};
\node[state] (3) [above right of=2]{};
\node[state] (4) [below right of=2]{};
\draw[orange,very thick,snake=snake,->](2)--(3);
\draw[orange,very thick,snake=snake,->](2)--(4);
\draw[red,very thick,->](1)--(2);
\path[orange,->,draw,dashed,thick] (1) -- ($ (2) !.5! (3) $);
\path[orange,->,draw,dashed,thick] (1) -- ($ (2) !.5! (4) $);
\end{tikzpicture}
\qquad
\begin{tikzpicture}[shorten >=1pt,node distance=2cm,initial text=]
\tikzstyle{every state}=[draw=black!50,very thick]
\tikzset{every state/.style={minimum size=0pt}}
\node[state] (1) {}; 
\node[state] (2) [right of=1] {on};
\node[state] (3) [above right of=2] {};
\node[state] (4) [below right of=2] {};
\draw[orange,very thick,snake=snake,->](2)--(3);
\draw[blue,very thick,->](2)--(4);
\draw[red,very thick,->](1)--(2);
\path[orange,->,draw,dashed,thick] (1) -- ($ (2) !.5! (3) $);
\end{tikzpicture}
\qquad
\begin{tikzpicture}[shorten >=1pt,node distance=2cm,initial text=]
\tikzstyle{every state}=[draw=black!50,very thick]
\tikzset{every state/.style={minimum size=0pt}}
\node[state] (1) {}; 
\node[state] (2) [right of=1] {on};
\node[state] (3) [above right of=2] {};
\node[state] (4) [below right of=2] {};
\node[state] (5)[below of=1]{};
\draw[red,very thick,->](5)--(2);
\draw[orange,very thick,snake=snake,->](2)--(3);
\draw[orange,very thick,->,snake=snake](2)--(4);
\draw[red,very thick,->](1)--(2);
\path[orange,->,draw,dashed,thick] (1) -- ($ (2) !.5! (3) $);
\path[orange,->,draw,dashed,thick] (5) -- ($ (2) !.5! (4) $);
\end{tikzpicture}
\caption{One switched off and two switched on nodes.}
    \label{fig:switches}
\end{figure}
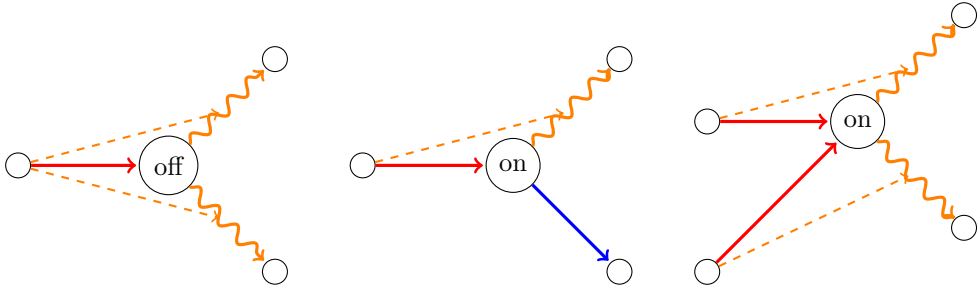

\begin{definition}[Switched Cycles]
We call a cycle \emph{red} iff it has at least one red arc. 
A cycle $C$ is called \emph{weakly switched} iff it is red, and for at least one red arc $(u,v)$ in $C$, $v$ is switched on. 
A cycle is called \emph{strongly switched} iff it is red, and for each red arc $(u,v)$ in $C$, $v$ is switched on.
\end{definition}


The notion of switched cycles is central to the occurrence of irrational solutions in financial systems. We will show that when a non-degenerate financial system $I$ has a strongly switched cycle, and satisfies a certain additional technical condition holds (to be specified later), there exist coefficients for the financial system under which all clearing vectors of $I$ are irrational.
Conversely, in the subsequent section we show that in the absence of any \emph{weakly} switched cycle, all clearing vectors of $I$ are guaranteed to be rational regardless of its coefficients, and the computational problem of computing an exact fixed point in such instances lies in a complexity class that is closely related to $\textsf{PPAD}$.
  
   
\subsection{Rewriting Rules for Strongly Switched Cycles}
In this section, we first present a framework for formulating strongly switched cycles, and we present various ``rewriting rules'' that capture an equivalence relation between strongly switched cycles in terms of their recovery rates. We begin by defining a set of primitive financial systems, represented in the form of their auxiliary graph, which we call \emph{fragments}. These fragments lack coefficients: They specify only some of the contracts present among a set of nodes, but do not contain any specification of the notionals on these contracts, and do not contain a specification of the external assets of any of the banks. Each of these fragments has a designated start and end node, and each of these fragments has the property that there is a unique path from the start to the end node. Furthermore, each fragment has the property that all nodes are directly connected to the unique path from start to end node.

We then define a binary operation that concatenates copies of fragments with each other, by identifying the start node of one copy of a fragment with the end node of a copy of another fragment. This results in auxiliary graphs of financial systems that are obtainable by ``stringing'' together fragments. We refer to graphs obtainable through this concatenation operation as \emph{fragment strings}. A fragment string represents a path (starting at the start node of the first fragment and ending at the end node of the last fragment) along with some further neighbouring nodes that are directly attached to the main path through arcs.

In turn, the start and end nodes of the paths that can be formed this way can be connected together: The end node of the last fragment in a fragment string can be identified with the start node of the first fragment in the fragment string, and this will create a (coefficientless) auxiliary graph of some financial system that contains a single cycle, along with some neighbouring nodes with arcs that are attached to the nodes in the main cycle. We call such a graph a \emph{fragment cycle}.

We will explain below in detail how these fragments are defined and how to form cycles with them. Subsequently, we will equip each fragment with particular choices of coefficients (i.e., notionals and external assets) that we will later on show to generate unique irrational solutions when composed into fragment cycles.

We study the resulting set of fragment cycle graphs, constructible through attaching such fragments to each other, with the following goal in mind:
Given an arbitrary financial system $I$, we will show that if a certain subset of the graphs constructible from our set of fragments occurs as a subgraph in $G_{I,\text{aux}}$, then there exists a choice of rational coefficients for $I$ such that all clearing vectors of $I$ are irrational. This will yield us our intended structural characterisation of irrational financial systems. Therefore, the fragments we will define, and the graphs obtainable from stringing them together, should be interpreted as subgraphs of a larger financial system under consideration, and we want to determine whether this larger financial system is susceptible to having irrational clearing vectors.


\subsubsection{Fragment Strings}
We denote by $\mathcal{G}$ the set of all fragments that we will use. All fragments of $\mathcal{G}$ are defined in Figure \ref{fig:base}, presented in our tricoloured graphical notation. Each fragment has designated start and end nodes, which are indicated by short incoming and outgoing black arrows, respectively. It can be verified that for each fragment indeed there is a single path from input to output node, as we claimed previously. Furthermore, as also claimed above, each node of any given fragment is either on this path, or directly connected to it by a single (incoming or outgoing) arc. Note that there are many fragments in $\mathcal{G}$ that are highly similar, and the names of our fragments are chosen such that highly similar fragments differ only in their superscripts: For example, the fragments $g_1^a, g_1^b, g_1^c$, and $g_1^d$ all have in common that the main path from the input node to the output node consist of a red arc followed by an orange arc that corresponds to the same CDS as the red arc, and these four fragments only differ in the colours of two arcs attached to the main path (along with one or two external reference banks, denoted by $c, c_1$, or $c_2$). The two sets of fragments $\{g_2^a,g_2^b,g_2^c,g_2^d\}$ and $\{g_3^a,g_3^d\}$ are related in a similar fashion. 

We define a binary merging operation on ordered pairs of fragments $(a,b)$, where every pair $(a,b)$ is mapped to a graph obtained by taking disjoint copies of $a$ and $b$, and connecting the two copies together by identifying the end node of $a$ with the start node of $b$. We define the new start node and end node of the resulting system to be the start node of the copy of $a$ and the end node of the copy of $b$, respectively. We denote the result of the merge operation on fragments $a$ and $b$ symbolically by the notation $ab$. A \emph{fragment string} is a fragment obtainable from fragments in $\mathcal{G}$ using any number of sequential applications of the merge operation. We let $\mathcal{GS}$ be the set of fragment strings (i.e., the closure of $\mathcal{G}$ under the \emph{merge} operation).

\begin{figure}[htbp!]
    \centering
\scalebox{0.7}{\begin{tikzpicture}
[shorten >=1pt,node distance=2cm,initial text=]
\tikzstyle{every state}=[draw=black!50,very thick]
\tikzset{every state/.style={minimum size=0pt}}
\tikzstyle{accepting}=[accepting by arrow]
\node[state,initial] (1) {$1$};
\node[state] (2)[above of=1] {$2$};
\node[state] (3)[left of=2]{$3$};
\node[state] (4)[right of=1]{$4$};
\node[state,accepting] (5)[right of=2]{$5$};
\draw[blue,->,very thick] (2)--(3);
\draw[orange,->,very thick,snake=snake] (2)--(5);
\draw[red,->,very thick] (1)--(2);
\draw[blue,->,very thick] (1)--(4);
\path [orange,-,draw,dashed,thick] (1) -- ($ (2) !.5! (5) $);
\node[black,left of=1,xshift=1cm,yshift=1cm]{$g_1^a$};
\end{tikzpicture}
\quad
\begin{tikzpicture}
[shorten >=1pt,node distance=2cm,initial text=]
\tikzstyle{every state}=[draw=black!50,very thick]
\tikzset{every state/.style={minimum size=0pt}}
\tikzstyle{accepting}=[accepting by arrow]
\node[state,initial] (1) {$1$};
\node[state] (2)[above of=1] {$2$};
\node[state] (3)[left of=2]{$3$};
\node[state] (4)[right of=1]{$4$};
\node[state,accepting] (5)[right of=2]{$5$};
\node[state] (6)[above right of=3,xshift=-5mm, yshift=-5mm]{$c$};
\draw[orange,->,very thick,snake=snake] (2)--(3);
\draw[orange,->,very thick,snake=snake] (2)--(5);
\draw[red,->,very thick] (1)--(2);
\draw[blue,->,very thick] (1)--(4);
\path [orange,-,draw,dashed,thick] (1) -- ($ (2) !.5! (5) $);
\path [orange,-,draw,dashed,thick] (6) -- ($ (2) !.5! (3) $);
\draw[red,->,very thick] (6)--(2);
\node[black,left of=1,xshift=1cm,yshift=1cm]{$g_1^b$};
\end{tikzpicture}
\quad
\begin{tikzpicture}
[shorten >=1pt,node distance=2cm,initial text=]
\tikzstyle{every state}=[draw=black!50,very thick]
\tikzset{every state/.style={minimum size=0pt}}
\tikzstyle{accepting}=[accepting by arrow]
\node[state,initial] (1) {$1$};
\node[state] (2)[above of=1] {$2$};
\node[state] (3)[left of=2]{$3$};
\node[state] (4)[right of=1]{$4$};
\node[state,accepting] (5)[right of=2]{$5$};
\node[state] (6)[above right of=1,xshift=-5mm,yshift=-5mm]{$c$};
\draw[blue,->,very thick] (2)--(3);
\draw[orange,->,very thick,snake=snake] (2)--(5);
\draw[red,->,very thick] (1)--(2);
\draw[orange,->,very thick,snake=snake] (1)--(4);
\path [orange,-,draw,dashed,thick] (1) -- ($ (2) !.5! (5) $);
\path [orange,-,draw,dashed,thick] (6) -- ($ (1) !.5! (4) $);
\draw[red,->,very thick] (6)--(1);
\node[black,left of=1,xshift=1cm,yshift=1cm]{$g_1^c$};
\end{tikzpicture}
\quad
\begin{tikzpicture}
[shorten >=1pt,node distance=2cm,initial text=]
\tikzstyle{every state}=[draw=black!50,very thick]
\tikzset{every state/.style={minimum size=0pt}}
\tikzstyle{accepting}=[accepting by arrow]
\node[state,initial] (1) {$1$};
\node[state] (2)[above of=1] {$2$};
\node[state] (3)[left of=2]{$3$};
\node[state] (4)[right of=1]{$4$};
\node[state,accepting] (5)[right of=2]{$5$};
\node[state] (6)[above right of=1,xshift=-5mm,yshift=-5mm]{$c_1$};
\node[state] (7)[above right of=3,xshift=-5mm,yshift=-5mm]{$c_2$};
\draw[orange,->,very thick,snake=snake] (2)--(3);
\draw[orange,->,very thick,snake=snake] (2)--(5);
\draw[red,->,very thick] (1)--(2);
\draw[orange,->,very thick,snake=snake] (1)--(4);
\path [orange,-,draw,dashed,thick] (1) -- ($ (2) !.5! (5) $);
\path [orange,-,draw,dashed,thick] (6) -- ($ (1) !.5! (4) $);
\path [orange,-,draw,dashed,thick] (7) -- ($ (2) !.5! (3) $);
\draw[red,->,very thick] (6)--(1);
\draw[red,->,very thick] (7)--(2);
\node[black,left of=1,xshift=1cm,yshift=1cm]{$g_1^d$};
\end{tikzpicture}}

\vspace{1cm}

\scalebox{0.7}{\begin{tikzpicture}
[shorten >=1pt,node distance=2cm,initial text=]
\tikzstyle{every state}=[draw=black!50,very thick]
\tikzset{every state/.style={minimum size=0pt}}
\tikzstyle{accepting}=[accepting by arrow]
\node[state,initial] (1) {$1$};
\node[state] (2)[above of=1]{$2$};
\node[state] (3)[left of=2]{$3$};
\node[state] (4)[right of=1]{$4$};
\node[state,accepting] (5)[right of=2]{$5$};
\draw[blue,->,very thick] (2)--(5);
\draw[orange,->,very thick,snake=snake] (2)--(3);
\draw[red,->,very thick] (1)--(2);
\draw[blue,->,very thick] (1)--(4);
\path[orange,-,draw,dashed,thick] (1) -- ($ (2) !.5! (3) $);
\node[black,left of=1,xshift=1cm,yshift=1cm]{$g_2^a$};
\end{tikzpicture}
\quad
\begin{tikzpicture}
[shorten >=1pt,node distance=2cm,initial text=]
\tikzstyle{every state}=[draw=black!50,very thick]
\tikzset{every state/.style={minimum size=0pt}}
\tikzstyle{accepting}=[accepting by arrow]
\node[state,initial] (1) {$1$};
\node[state] (2)[above of=1]{$2$};
\node[state] (3)[left of=2]{$3$};
\node[state] (4)[right of=1]{$4$};
\node[state,accepting] (5)[right of=2]{$5$};
\node[state] (6)[above right of=2,xshift=-5mm,yshift=-5mm]{$c$};
\draw[orange,->,very thick,snake=snake] (2)--(5);
\draw[orange,->,very thick,snake=snake] (2)--(3);
\draw[red,->,very thick] (1)--(2);
\draw[red,->,very thick] (6)--(2);
\draw[blue,->,very thick] (1)--(4);
\path[orange,-,draw,dashed,thick] (1) -- ($ (2) !.5! (3) $);
\path[orange,-,draw,dashed,thick] (6) -- ($ (2) !.5! (5) $);
\node[black,left of=1,xshift=1cm,yshift=1cm]{$g_2^b$};
\end{tikzpicture}
\quad
\begin{tikzpicture}
[shorten >=1pt,node distance=2cm,initial text=]
\tikzstyle{every state}=[draw=black!50,very thick]
\tikzset{every state/.style={minimum size=0pt}}
\tikzstyle{accepting}=[accepting by arrow]
\node[state,initial] (1) {$1$};
\node[state] (2)[above of=1]{$2$};
\node[state] (3)[left of=2]{$3$};
\node[state] (4)[right of=1]{$4$};
\node[state,accepting] (5)[right of=2]{$5$};
\node[state] (6)[above right of=1,xshift=-5mm,yshift=-5mm]{$c$};
\draw[blue,->,very thick] (2)--(5);
\draw[orange,->,very thick,snake=snake] (2)--(3);
\draw[red,->,very thick] (1)--(2);
\draw[orange,->,very thick,snake=snake] (1)--(4);
\draw[red,->,very thick] (6)--(1);
\path[orange,-,draw,dashed,thick] (1) -- ($ (2) !.5! (3) $);
\path[orange,-,draw,dashed,thick] (6) -- ($ (1) !.5! (4) $);
\node[black,left of=1,xshift=1cm,yshift=1cm]{$g_2^c$};
\end{tikzpicture}
\quad
\begin{tikzpicture}
[shorten >=1pt,node distance=2cm,initial text=]
\tikzstyle{every state}=[draw=black!50,very thick]
\tikzset{every state/.style={minimum size=0pt}}
\tikzstyle{accepting}=[accepting by arrow]
\node[state,initial] (1) {$1$};
\node[state] (2)[above of=1]{$2$};
\node[state] (3)[left of=2]{$3$};
\node[state] (4)[right of=1]{$4$};
\node[state,accepting] (5)[right of=2]{$5$};
\node[state] (6)[above right of=1,xshift=-5mm,yshift=-5mm]{$c_1$};
\node[state] (7)[above right of=2,xshift=-5mm,yshift=-5mm]{$c_2$};
\draw[orange,->,very thick,snake=snake] (2)--(5);
\draw[orange,->,very thick,snake=snake] (2)--(3);
\draw[red,->,very thick] (1)--(2);
\draw[orange,->,very thick,snake=snake] (1)--(4);
\draw[red,->,very thick] (6)--(1);
\draw[red,->,very thick] (7)--(2);
\path[orange,-,draw,dashed,thick] (1) -- ($ (2) !.5! (3) $);
\path[orange,-,draw,dashed,thick] (6) -- ($ (1) !.5! (4) $);
\path[orange,-,draw,dashed,thick] (7) -- ($ (2) !.5! (5) $);
\node[black,left of=1,xshift=1cm,yshift=1cm]{$g_2^d$};
\end{tikzpicture}}

\vspace{1cm}

\begin{tikzpicture}
[shorten >=1pt,node distance=2cm,initial text=]
\tikzstyle{every state}=[draw=black!50,very thick]
\tikzset{every state/.style={minimum size=0pt}}
\tikzstyle{accepting}=[accepting by arrow]
\node[state,initial] (1) {$1$};
\node[state,accepting above] (4)[above of=1]{$4$};
\node[state] (3)[left of=4]{$3$};
\node[state] (2)[right of=1]{$2$};
\draw[orange,->,very thick,snake=snake] (4)--(3);
\draw[red,->,very thick] (1)--(4);
\draw[blue,->,very thick] (1)--(2);
\path[orange,-,draw,dashed,thick] (1) -- ($ (4) !.5! (3) $);
\node[black,left of=1,xshift=1cm,yshift=1cm]{$g_3^a$};
\end{tikzpicture}
\quad
\begin{tikzpicture}
[shorten >=1pt,node distance=2cm,initial text=]
\tikzstyle{every state}=[draw=black!50,very thick]
\tikzset{every state/.style={minimum size=0pt}}
\tikzstyle{accepting}=[accepting by arrow]
\node[state,initial] (1) {$1$};
\node[state,accepting above] (4)[above of=1]{$4$};
\node[state] (3)[left of=4]{$3$};
\node[state] (2)[right of=1]{$2$};
\node[state] (6)[above right of=1,xshift=-5mm,yshift=-5mm]{$c$};
\draw[orange,->,very thick,snake=snake] (4)--(3);
\draw[red,->,very thick] (1)--(4);
\draw[orange,->,very thick,snake=snake] (1)--(2);
\draw[red,->,very thick] (6)--(1);
\path[orange,-,draw,dashed,thick] (1) -- ($ (4) !.5! (3) $);
\path[orange,-,draw,dashed,thick] (6) -- ($ (1) !.5! (2) $);
\node[black,left of=1,xshift=1cm,yshift=1cm]{$g_3^b$};
\end{tikzpicture}

\vspace{1cm}

\begin{tikzpicture}
[shorten >=1pt,node distance=2cm,initial text=]
\tikzstyle{every state}=[draw=black!50,very thick]
\tikzset{every state/.style={minimum size=0pt}}
\tikzstyle{accepting}=[accepting by arrow]
\node[state,initial] (1) {$1$};
\node[state,accepting] (2)[right of=1] {$2$};
\draw[blue,very thick,->](1)--(2);
\node[black,above of=1,xshift=1.1cm,yshift=-1.5cm]{$d_1$};
\end{tikzpicture}
\qquad
\begin{tikzpicture}
[shorten >=1pt,node distance=2cm,initial text=]
\tikzstyle{every state}=[draw=black!50,very thick]
\tikzset{every state/.style={minimum size=0pt}}
\tikzstyle{accepting}=[accepting by arrow]
\node[state,initial] (1) {$1$};
\node[state,accepting] (2)[right of=1] {$2$};
\node[state] (3)[above right of=1,xshift=-5mm]{$c$};
\draw[red,very thick,->](3)--(1);
\draw[orange,very thick,snake=snake,->](1)--(2);
\path[orange,-,draw,dashed,thick] (3) -- ($ (1) !.5! (2) $);
\node[black,left of=1,xshift=1.5cm,yshift=0.5cm]{$d_2$};
\end{tikzpicture}
\caption{The fragments in $\mathcal{G}$. Each fragment is labeled with a name that we will use to refer to the individual fragments. We have $\mathcal{G} = \{g_1^a,g_1^b,g_1^c,g_1^d,g_2^a,g_2^b,g_2^c,g_2^d,g_3^a,g_3^b,d_1,d_2\}$ }
    \label{fig:base}
\end{figure}
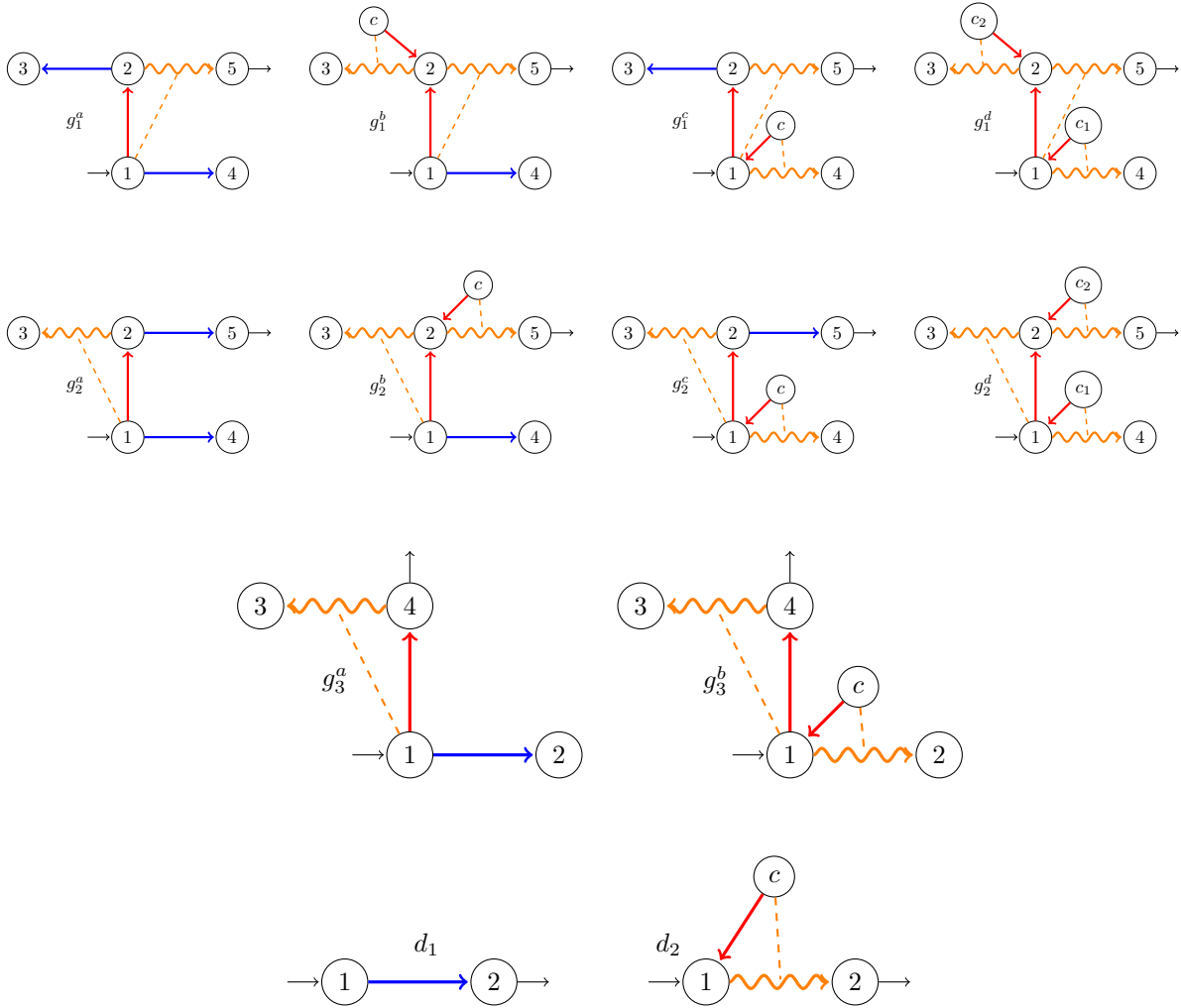

We may turn any fragment string into a \emph{fragment cycle} by identifying (i.e., connecting) its start node with its end node. We use the following symbolic notation for fragment cycles: For a fragment string $x_1x_2\cdots x_{k-1}x_k$ of $k$ fragments, the corresponding fragment cycle will be written symbolically by marking the first fragment $x_1$ as $\dot{x_1}$ and appending $\dot{x_1}$ to the end of the string, i.e., we write $\dot{x_1}x_2\cdots x_{k-1}x_k\dot{x_1}$ to denote the fragment cycle corresponding to fragment string $x_1x_2\cdots x_{k-1}x_k$.

\begin{definition}
We define $\mathcal{GC}$ to be the set of all fragment cycles, i.e., the graphs $\dot{x}g_s\dot{x}$, where $x \in \mathcal{G}$ and $g_s \in \mathcal{GS}$. 
\end{definition}



\subsubsection{Arithmetic Fragment Strings}
A fragment in $\mathcal{G}$ can correspond to many concrete financial systems: The blue and yellow arcs in our fragments represent debt contracts, but their notionals are unspecified. Similarly, the nodes represent banks, but a specification of their external assets is not given. A fragment equipped with coefficients specifying these notionals and external assets will be referred to as an \emph{arithmetic fragment}. Figure \ref{fig:arithgad_example} presents the set of arithmetic versions ofthe fragments in $\mathcal{G}$ that we will be using. We use the notational convention that $x'$ or $x''$ is an arithmetic version of a fragment $x \in \mathcal{G}$. For some of these arithmetic fragments, we have annotated the end nodes with red labels that indicate the assets of the end node under any clearing vector as a function of the recovery rate $r$ of the start node of the fragment.
Note that if the external assets of a bank are set to $0$, our convention is to omit the blue label at the respective node. Furthermore, all nodes labeled with $c, c_1$, and $c_2$ are assumed to have a recovery rate of $0$, which is achieved by setting the external assets of such banks to $0$ and setting the coefficients in the financial system (in which the fragment is embedded) such that $c$ has a strictly positive liability.

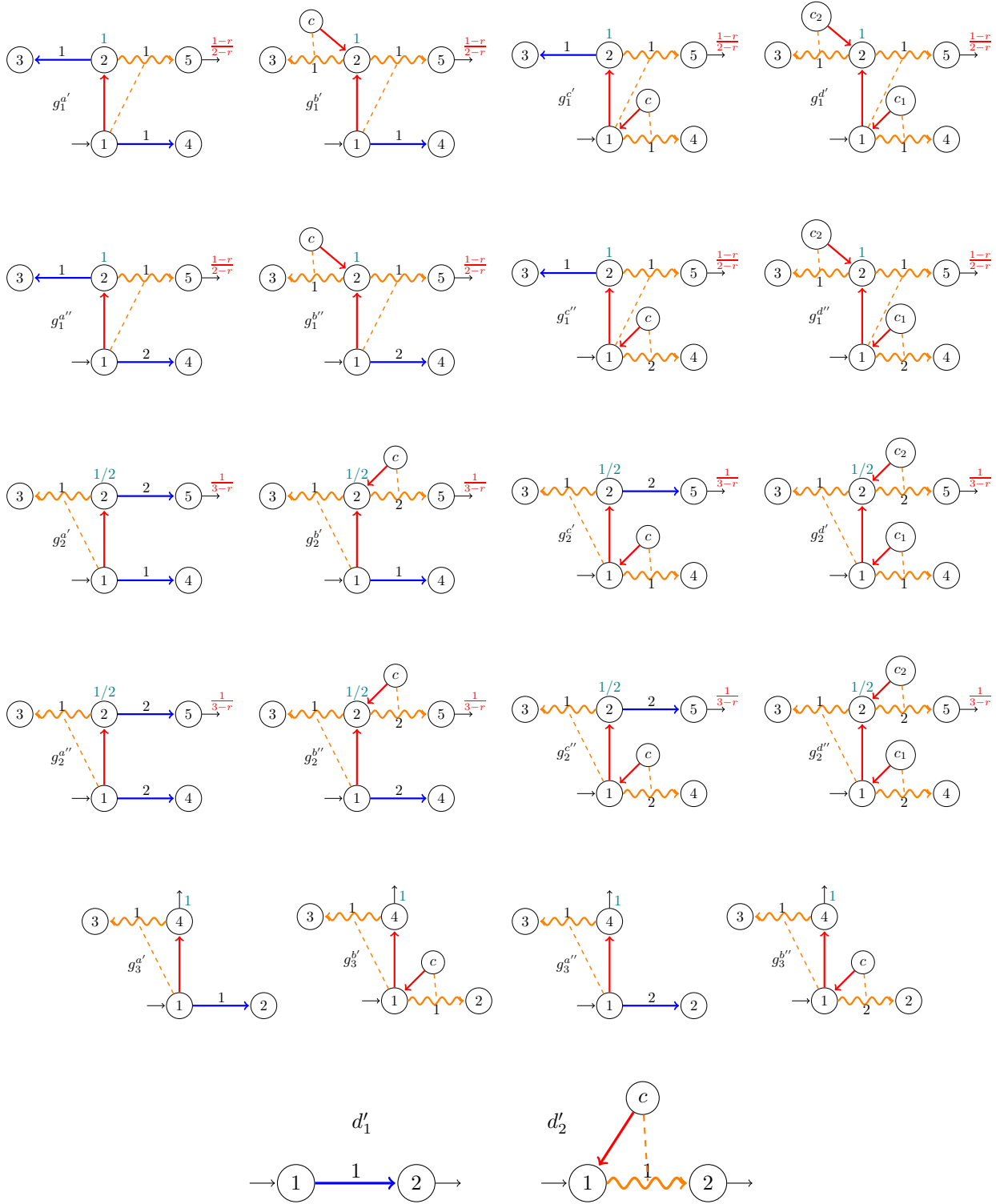
\begin{figure}[htbp!]
    \centering
\scalebox{0.7}{\begin{tikzpicture}
[shorten >=1pt,node distance=2cm,initial text=]
\tikzstyle{every state}=[draw=black!50,very thick]
\tikzset{every state/.style={minimum size=0pt}}
\tikzstyle{accepting}=[accepting by arrow]
\node[state,initial] (1) {$1$};
\node[state] (2)[above of=1] {$2$};
\node[state] (3)[left of=2]{$3$};
\node[state] (4)[right of=1]{$4$};
\node[state,accepting] (5)[right of=2]{$5$};
\draw[blue,->,very thick] (2)--node[midway,black,yshift=2mm]{1}(3);
\draw[orange,->,very thick,snake=snake] (2)--node[midway,black,yshift=2mm]{1}(5);
\draw[red,->,very thick] (1)--(2);
\draw[blue,->,very thick] (1)--node[midway,black,yshift=2mm]{1}(4);
\path [orange,-,draw,dashed,thick] (1) -- ($ (2) !.5! (5) $);
\node[teal,above of=2,yshift=-1.5cm]{1};
\node[red,right of=5,xshift=-1.2cm,yshift=3mm]{$\frac{1-r}{2-r}$};
\node[black,left of=1,xshift=1cm,yshift=1cm]{$g_1^{a'}$};
\end{tikzpicture}
\quad
\begin{tikzpicture}
[shorten >=1pt,node distance=2cm,initial text=]
\tikzstyle{every state}=[draw=black!50,very thick]
\tikzset{every state/.style={minimum size=0pt}}
\tikzstyle{accepting}=[accepting by arrow]
\node[state,initial] (1) {$1$};
\node[state] (2)[above of=1] {$2$};
\node[state] (3)[left of=2]{$3$};
\node[state] (4)[right of=1]{$4$};
\node[state,accepting] (5)[right of=2]{$5$};
\node[state] (6)[above right of=3,xshift=-5mm, yshift=-5mm]{$c$};
\draw[orange,->,very thick,snake=snake] (2)--node[midway,black,yshift=-2mm]{1}(3);
\draw[orange,->,very thick,snake=snake] (2)--node[midway,black,yshift=2mm]{1}(5);
\draw[red,->,very thick] (1)--(2);
\draw[blue,->,very thick] (1)--node[midway,black,yshift=2mm]{1}(4);
\path [orange,-,draw,dashed,thick] (1) -- ($ (2) !.5! (5) $);
\path [orange,-,draw,dashed,thick] (6) -- ($ (2) !.5! (3) $);
\draw[red,->,very thick] (6)--(2);
\node[teal,above of=2,yshift=-1.5cm]{1};
\node[red,right of=5,xshift=-1.2cm,yshift=3mm]{$\frac{1-r}{2-r}$};
\node[black,left of=1,xshift=1cm,yshift=1cm]{$g_1^{b'}$};
\end{tikzpicture}
\quad
\begin{tikzpicture}
[shorten >=1pt,node distance=2cm,initial text=]
\tikzstyle{every state}=[draw=black!50,very thick]
\tikzset{every state/.style={minimum size=0pt}}
\tikzstyle{accepting}=[accepting by arrow]
\node[state,initial] (1) {$1$};
\node[state] (2)[above of=1] {$2$};
\node[state] (3)[left of=2]{$3$};
\node[state] (4)[right of=1]{$4$};
\node[state,accepting] (5)[right of=2]{$5$};
\node[state] (6)[above right of=1,xshift=-5mm,yshift=-5mm]{$c$};
\draw[blue,->,very thick] (2)--node[midway,black,yshift=2mm]{1}(3);
\draw[orange,->,very thick,snake=snake] (2)--node[midway,black,yshift=2mm]{1}(5);
\draw[red,->,very thick] (1)--(2);
\draw[orange,->,very thick,snake=snake] (1)--node[midway,black,yshift=-2mm]{1}(4);
\path [orange,-,draw,dashed,thick] (1) -- ($ (2) !.5! (5) $);
\path [orange,-,draw,dashed,thick] (6) -- ($ (1) !.5! (4) $);
\draw[red,->,very thick] (6)--(1);
\node[teal,above of=2,yshift=-1.5cm]{1};
\node[red,right of=5,xshift=-1.2cm,yshift=3mm]{$\frac{1-r}{2-r}$};
\node[black,left of=1,xshift=1cm,yshift=1cm]{$g_1^{c'}$};
\end{tikzpicture}
\quad
\begin{tikzpicture}
[shorten >=1pt,node distance=2cm,initial text=]
\tikzstyle{every state}=[draw=black!50,very thick]
\tikzset{every state/.style={minimum size=0pt}}
\tikzstyle{accepting}=[accepting by arrow]
\node[state,initial] (1) {$1$};
\node[state] (2)[above of=1] {$2$};
\node[state] (3)[left of=2]{$3$};
\node[state] (4)[right of=1]{$4$};
\node[state,accepting] (5)[right of=2]{$5$};
\node[state] (6)[above right of=1,xshift=-5mm,yshift=-5mm]{$c_1$};
\node[state] (7)[above right of=3,xshift=-5mm,yshift=-5mm]{$c_2$};
\draw[orange,->,very thick,snake=snake] (2)--node[midway,black,yshift=-2mm]{1}(3);
\draw[orange,->,very thick,snake=snake] (2)--node[midway,black,yshift=2mm]{1}(5);
\draw[red,->,very thick] (1)--(2);
\draw[orange,->,very thick,snake=snake] (1)--node[midway,black,yshift=-2mm]{1}(4);
\path [orange,-,draw,dashed,thick] (1) -- ($ (2) !.5! (5) $);
\path [orange,-,draw,dashed,thick] (6) -- ($ (1) !.5! (4) $);
\path [orange,-,draw,dashed,thick] (7) -- ($ (2) !.5! (3) $);
\draw[red,->,very thick] (6)--(1);
\draw[red,->,very thick] (7)--(2);
\node[teal,above of=2,yshift=-1.5cm]{1};
\node[red,right of=5,xshift=-1.2cm,yshift=3mm]{$\frac{1-r}{2-r}$};
\node[black,left of=1,xshift=1cm,yshift=1cm]{$g_1^{d'}$};
\end{tikzpicture}}

\vspace{1cm}

\scalebox{0.7}{\begin{tikzpicture}
[shorten >=1pt,node distance=2cm,initial text=]
\tikzstyle{every state}=[draw=black!50,very thick]
\tikzset{every state/.style={minimum size=0pt}}
\tikzstyle{accepting}=[accepting by arrow]
\node[state,initial] (1) {$1$};
\node[state] (2)[above of=1] {$2$};
\node[state] (3)[left of=2]{$3$};
\node[state] (4)[right of=1]{$4$};
\node[state,accepting] (5)[right of=2]{$5$};
\draw[blue,->,very thick] (2)--node[midway,black,yshift=2mm]{1}(3);
\draw[orange,->,very thick,snake=snake] (2)--node[midway,black,yshift=2mm]{1}(5);
\draw[red,->,very thick] (1)--(2);
\draw[blue,->,very thick] (1)--node[midway,black,yshift=2mm]{2}(4);
\path [orange,-,draw,dashed,thick] (1) -- ($ (2) !.5! (5) $);
\node[teal,above of=2,yshift=-1.5cm]{1};
\node[red,right of=5,xshift=-1.2cm,yshift=3mm]{$\frac{1-r}{2-r}$};
\node[black,left of=1,xshift=1cm,yshift=1cm]{$g_1^{a''}$};
\end{tikzpicture}
\quad
\begin{tikzpicture}
[shorten >=1pt,node distance=2cm,initial text=]
\tikzstyle{every state}=[draw=black!50,very thick]
\tikzset{every state/.style={minimum size=0pt}}
\tikzstyle{accepting}=[accepting by arrow]
\node[state,initial] (1) {$1$};
\node[state] (2)[above of=1] {$2$};
\node[state] (3)[left of=2]{$3$};
\node[state] (4)[right of=1]{$4$};
\node[state,accepting] (5)[right of=2]{$5$};
\node[state] (6)[above right of=3,xshift=-5mm, yshift=-5mm]{$c$};
\draw[orange,->,very thick,snake=snake] (2)--node[midway,black,yshift=-2mm]{1}(3);
\draw[orange,->,very thick,snake=snake] (2)--node[midway,black,yshift=2mm]{1}(5);
\draw[red,->,very thick] (1)--(2);
\draw[blue,->,very thick] (1)--node[midway,black,yshift=2mm]{2}(4);
\path [orange,-,draw,dashed,thick] (1) -- ($ (2) !.5! (5) $);
\path [orange,-,draw,dashed,thick] (6) -- ($ (2) !.5! (3) $);
\draw[red,->,very thick] (6)--(2);
\node[teal,above of=2,yshift=-1.5cm]{1};
\node[red,right of=5,xshift=-1.2cm,yshift=3mm]{$\frac{1-r}{2-r}$};
\node[black,left of=1,xshift=1cm,yshift=1cm]{$g_1^{b''}$};
\end{tikzpicture}
\quad
\begin{tikzpicture}
[shorten >=1pt,node distance=2cm,initial text=]
\tikzstyle{every state}=[draw=black!50,very thick]
\tikzset{every state/.style={minimum size=0pt}}
\tikzstyle{accepting}=[accepting by arrow]
\node[state,initial] (1) {$1$};
\node[state] (2)[above of=1] {$2$};
\node[state] (3)[left of=2]{$3$};
\node[state] (4)[right of=1]{$4$};
\node[state,accepting] (5)[right of=2]{$5$};
\node[state] (6)[above right of=1,xshift=-5mm,yshift=-5mm]{$c$};
\draw[blue,->,very thick] (2)--node[midway,black,yshift=2mm]{1}(3);
\draw[orange,->,very thick,snake=snake] (2)--node[midway,black,yshift=2mm]{1}(5);
\draw[red,->,very thick] (1)--(2);
\draw[orange,->,very thick,snake=snake] (1)--node[midway,black,yshift=-2mm]{2}(4);
\path [orange,-,draw,dashed,thick] (1) -- ($ (2) !.5! (5) $);
\path [orange,-,draw,dashed,thick] (6) -- ($ (1) !.5! (4) $);
\draw[red,->,very thick] (6)--(1);
\node[teal,above of=2,yshift=-1.5cm]{1};
\node[red,right of=5,xshift=-1.2cm,yshift=3mm]{$\frac{1-r}{2-r}$};
\node[black,left of=1,xshift=1cm,yshift=1cm]{$g_1^{c''}$};
\end{tikzpicture}
\quad
\begin{tikzpicture}
[shorten >=1pt,node distance=2cm,initial text=]
\tikzstyle{every state}=[draw=black!50,very thick]
\tikzset{every state/.style={minimum size=0pt}}
\tikzstyle{accepting}=[accepting by arrow]
\node[state,initial] (1) {$1$};
\node[state] (2)[above of=1] {$2$};
\node[state] (3)[left of=2]{$3$};
\node[state] (4)[right of=1]{$4$};
\node[state,accepting] (5)[right of=2]{$5$};
\node[state] (6)[above right of=1,xshift=-5mm,yshift=-5mm]{$c_1$};
\node[state] (7)[above right of=3,xshift=-5mm,yshift=-5mm]{$c_2$};
\draw[orange,->,very thick,snake=snake] (2)--node[midway,black,yshift=-2mm]{1}(3);
\draw[orange,->,very thick,snake=snake] (2)--node[midway,black,yshift=2mm]{1}(5);
\draw[red,->,very thick] (1)--(2);
\draw[orange,->,very thick,snake=snake] (1)--node[midway,black,yshift=-2mm]{2}(4);
\path [orange,-,draw,dashed,thick] (1) -- ($ (2) !.5! (5) $);
\path [orange,-,draw,dashed,thick] (6) -- ($ (1) !.5! (4) $);
\path [orange,-,draw,dashed,thick] (7) -- ($ (2) !.5! (3) $);
\draw[red,->,very thick] (6)--(1);
\draw[red,->,very thick] (7)--(2);
\node[teal,above of=2,yshift=-1.5cm]{1};
\node[red,right of=5,xshift=-1.2cm,yshift=3mm]{$\frac{1-r}{2-r}$};
\node[black,left of=1,xshift=1cm,yshift=1cm]{$g_1^{d''}$};
\end{tikzpicture}}

\vspace{1cm}

\scalebox{0.7}{\begin{tikzpicture}
[shorten >=1pt,node distance=2cm,initial text=]
\tikzstyle{every state}=[draw=black!50,very thick]
\tikzset{every state/.style={minimum size=0pt}}
\tikzstyle{accepting}=[accepting by arrow]
\node[state,initial] (1) {$1$};
\node[state] (2)[above of=1]{$2$};
\node[state] (3)[left of=2]{$3$};
\node[state] (4)[right of=1]{$4$};
\node[state,accepting] (5)[right of=2]{$5$};
\draw[blue,->,very thick] (2)--node[midway,black,yshift=2mm]{2}(5);
\draw[orange,->,very thick,snake=snake] (2)--node[midway,black,yshift=2mm]{1}(3);
\draw[red,->,very thick] (1)--(2);
\draw[blue,->,very thick] (1)--node[midway,black,yshift=2mm]{1}(4);
\path[orange,-,draw,dashed,thick] (1) -- ($ (2) !.5! (3) $);
\node[teal,above of=2,yshift=-1.5cm]{1/2};
\node[red,right of=5,xshift=-1.2cm,yshift=3mm]{$\frac{1}{3-r}$};
\node[black,left of=1,xshift=1cm,yshift=1cm]{$g_2^{a'}$};
\end{tikzpicture}
\quad
\begin{tikzpicture}
[shorten >=1pt,node distance=2cm,initial text=]
\tikzstyle{every state}=[draw=black!50,very thick]
\tikzset{every state/.style={minimum size=0pt}}
\tikzstyle{accepting}=[accepting by arrow]
\node[state,initial] (1) {$1$};
\node[state] (2)[above of=1]{$2$};
\node[state] (3)[left of=2]{$3$};
\node[state] (4)[right of=1]{$4$};
\node[state,accepting] (5)[right of=2]{$5$};
\node[state] (6)[above right of=2,xshift=-5mm,yshift=-5mm]{$c$};
\draw[orange,->,very thick,snake=snake] (2)--node[midway,black,yshift=-2mm]{2}(5);
\draw[orange,->,very thick,snake=snake] (2)--node[midway,black,yshift=2mm]{1}(3);
\draw[red,->,very thick] (1)--(2);
\draw[red,->,very thick] (6)--(2);
\draw[blue,->,very thick] (1)--node[midway,black,yshift=2mm]{1}(4);
\path[orange,-,draw,dashed,thick] (1) -- ($ (2) !.5! (3) $);
\path[orange,-,draw,dashed,thick] (6) -- ($ (2) !.5! (5) $);
\node[teal,above of=2,yshift=-1.5cm]{1/2};
\node[red,right of=5,xshift=-1.2cm,yshift=3mm]{$\frac{1}{3-r}$};
\node[black,left of=1,xshift=1cm,yshift=1cm]{$g_2^{b'}$};
\end{tikzpicture}
\quad
\begin{tikzpicture}
[shorten >=1pt,node distance=2cm,initial text=]
\tikzstyle{every state}=[draw=black!50,very thick]
\tikzset{every state/.style={minimum size=0pt}}
\tikzstyle{accepting}=[accepting by arrow]
\node[state,initial] (1) {$1$};
\node[state] (2)[above of=1]{$2$};
\node[state] (3)[left of=2]{$3$};
\node[state] (4)[right of=1]{$4$};
\node[state,accepting] (5)[right of=2]{$5$};
\node[state] (6)[above right of=1,xshift=-5mm,yshift=-5mm]{$c$};
\draw[blue,->,very thick] (2)--node[midway,black,yshift=2mm]{2}(5);
\draw[orange,->,very thick,snake=snake] (2)--node[midway,black,yshift=2mm]{1}(3);
\draw[red,->,very thick] (1)--(2);
\draw[orange,->,very thick,snake=snake] (1)--node[midway,black,yshift=-2mm]{1}(4);
\draw[red,->,very thick] (6)--(1);
\path[orange,-,draw,dashed,thick] (1) -- ($ (2) !.5! (3) $);
\path[orange,-,draw,dashed,thick] (6) -- ($ (1) !.5! (4) $);
\node[teal,above of=2,yshift=-1.5cm]{1/2};
\node[red,right of=5,xshift=-1.2cm,yshift=3mm]{$\frac{1}{3-r}$};
\node[black,left of=1,xshift=1cm,yshift=1cm]{$g_2^{c'}$};
\end{tikzpicture}
\quad
\begin{tikzpicture}
[shorten >=1pt,node distance=2cm,initial text=]
\tikzstyle{every state}=[draw=black!50,very thick]
\tikzset{every state/.style={minimum size=0pt}}
\tikzstyle{accepting}=[accepting by arrow]
\node[state,initial] (1) {$1$};
\node[state] (2)[above of=1]{$2$};
\node[state] (3)[left of=2]{$3$};
\node[state] (4)[right of=1]{$4$};
\node[state,accepting] (5)[right of=2]{$5$};
\node[state] (6)[above right of=1,xshift=-5mm,yshift=-5mm]{$c_1$};
\node[state] (7)[above right of=2,xshift=-5mm,yshift=-5mm]{$c_2$};
\draw[orange,->,very thick,snake=snake] (2)--node[midway,black,yshift=-2mm]{2}(5);
\draw[orange,->,very thick,snake=snake] (2)--node[midway,black,yshift=2mm]{1}(3);
\draw[red,->,very thick] (1)--(2);
\draw[orange,->,very thick,snake=snake] (1)--node[midway,black,yshift=-2mm]{1}(4);
\draw[red,->,very thick] (6)--(1);
\draw[red,->,very thick] (7)--(2);
\path[orange,-,draw,dashed,thick] (1) -- ($ (2) !.5! (3) $);
\path[orange,-,draw,dashed,thick] (6) -- ($ (1) !.5! (4) $);
\path[orange,-,draw,dashed,thick] (7) -- ($ (2) !.5! (5) $);
\node[teal,above of=2,yshift=-1.5cm]{1/2};
\node[red,right of=5,xshift=-1.2cm,yshift=3mm]{$\frac{1}{3-r}$};
\node[black,left of=1,xshift=1cm,yshift=1cm]{$g_2^{d'}$};
\end{tikzpicture}}

\vspace{1cm}

\scalebox{0.7}{\begin{tikzpicture}
[shorten >=1pt,node distance=2cm,initial text=]
\tikzstyle{every state}=[draw=black!50,very thick]
\tikzset{every state/.style={minimum size=0pt}}
\tikzstyle{accepting}=[accepting by arrow]
\node[state,initial] (1) {$1$};
\node[state] (2)[above of=1]{$2$};
\node[state] (3)[left of=2]{$3$};
\node[state] (4)[right of=1]{$4$};
\node[state,accepting] (5)[right of=2]{$5$};
\draw[blue,->,very thick] (2)--node[midway,black,yshift=2mm]{2}(5);
\draw[orange,->,very thick,snake=snake] (2)--node[midway,black,yshift=2mm]{1}(3);
\draw[red,->,very thick] (1)--(2);
\draw[blue,->,very thick] (1)--node[midway,black,yshift=2mm]{2}(4);
\path[orange,-,draw,dashed,thick] (1) -- ($ (2) !.5! (3) $);
\node[teal,above of=2,yshift=-1.5cm]{1/2};
\node[red,right of=5,xshift=-1.2cm,yshift=3mm]{$\frac{1}{3-r}$};
\node[black,left of=1,xshift=1cm,yshift=1cm]{$g_2^{a''}$};
\end{tikzpicture}
\quad
\begin{tikzpicture}
[shorten >=1pt,node distance=2cm,initial text=]
\tikzstyle{every state}=[draw=black!50,very thick]
\tikzset{every state/.style={minimum size=0pt}}
\tikzstyle{accepting}=[accepting by arrow]
\node[state,initial] (1) {$1$};
\node[state] (2)[above of=1]{$2$};
\node[state] (3)[left of=2]{$3$};
\node[state] (4)[right of=1]{$4$};
\node[state,accepting] (5)[right of=2]{$5$};
\node[state] (6)[above right of=2,xshift=-5mm,yshift=-5mm]{$c$};
\draw[orange,->,very thick,snake=snake] (2)--node[midway,black,yshift=-2mm]{2}(5);
\draw[orange,->,very thick,snake=snake] (2)--node[midway,black,yshift=2mm]{1}(3);
\draw[red,->,very thick] (1)--(2);
\draw[red,->,very thick] (6)--(2);
\draw[blue,->,very thick] (1)--node[midway,black,yshift=2mm]{2}(4);
\path[orange,-,draw,dashed,thick] (1) -- ($ (2) !.5! (3) $);
\path[orange,-,draw,dashed,thick] (6) -- ($ (2) !.5! (5) $);
\node[teal,above of=2,yshift=-1.5cm]{1/2};
\node[red,right of=5,xshift=-1.2cm,yshift=3mm]{$\frac{1}{3-r}$};
\node[black,left of=1,xshift=1cm,yshift=1cm]{$g_2^{b''}$};
\end{tikzpicture}
\quad
\begin{tikzpicture}
[shorten >=1pt,node distance=2cm,initial text=]
\tikzstyle{every state}=[draw=black!50,very thick]
\tikzset{every state/.style={minimum size=0pt}}
\tikzstyle{accepting}=[accepting by arrow]
\node[state,initial] (1) {$1$};
\node[state] (2)[above of=1]{$2$};
\node[state] (3)[left of=2]{$3$};
\node[state] (4)[right of=1]{$4$};
\node[state,accepting] (5)[right of=2]{$5$};
\node[state] (6)[above right of=1,xshift=-5mm,yshift=-5mm]{$c$};
\draw[blue,->,very thick] (2)--node[midway,black,yshift=2mm]{2}(5);
\draw[orange,->,very thick,snake=snake] (2)--node[midway,black,yshift=2mm]{1}(3);
\draw[red,->,very thick] (1)--(2);
\draw[orange,->,very thick,snake=snake] (1)--node[midway,black,yshift=-2mm]{2}(4);
\draw[red,->,very thick] (6)--(1);
\path[orange,-,draw,dashed,thick] (1) -- ($ (2) !.5! (3) $);
\path[orange,-,draw,dashed,thick] (6) -- ($ (1) !.5! (4) $);
\node[teal,above of=2,yshift=-1.5cm]{1/2};
\node[red,right of=5,xshift=-1.2cm,yshift=3mm]{$\frac{1}{3-r}$};
\node[black,left of=1,xshift=1cm,yshift=1cm]{$g_2^{c''}$};
\end{tikzpicture}
\quad
\begin{tikzpicture}
[shorten >=1pt,node distance=2cm,initial text=]
\tikzstyle{every state}=[draw=black!50,very thick]
\tikzset{every state/.style={minimum size=0pt}}
\tikzstyle{accepting}=[accepting by arrow]
\node[state,initial] (1) {$1$};
\node[state] (2)[above of=1]{$2$};
\node[state] (3)[left of=2]{$3$};
\node[state] (4)[right of=1]{$4$};
\node[state,accepting] (5)[right of=2]{$5$};
\node[state] (6)[above right of=1,xshift=-5mm,yshift=-5mm]{$c_1$};
\node[state] (7)[above right of=2,xshift=-5mm,yshift=-5mm]{$c_2$};
\draw[orange,->,very thick,snake=snake] (2)--node[midway,black,yshift=-2mm]{2}(5);
\draw[orange,->,very thick,snake=snake] (2)--node[midway,black,yshift=2mm]{1}(3);
\draw[red,->,very thick] (1)--(2);
\draw[orange,->,very thick,snake=snake] (1)--node[midway,black,yshift=-2mm]{2}(4);
\draw[red,->,very thick] (6)--(1);
\draw[red,->,very thick] (7)--(2);
\path[orange,-,draw,dashed,thick] (1) -- ($ (2) !.5! (3) $);
\path[orange,-,draw,dashed,thick] (6) -- ($ (1) !.5! (4) $);
\path[orange,-,draw,dashed,thick] (7) -- ($ (2) !.5! (5) $);
\node[teal,above of=2,yshift=-1.5cm]{1/2};
\node[red,right of=5,xshift=-1.2cm,yshift=3mm]{$\frac{1}{3-r}$};
\node[black,left of=1,xshift=1cm,yshift=1cm]{$g_2^{d''}$};
\end{tikzpicture}}

\vspace{1cm}

\scalebox{0.7}{\begin{tikzpicture}
[shorten >=1pt,node distance=2cm,initial text=]
\tikzstyle{every state}=[draw=black!50,very thick]
\tikzset{every state/.style={minimum size=0pt}}
\tikzstyle{accepting}=[accepting by arrow]
\node[state,initial] (1) {$1$};
\node[state,accepting above] (4)[above of=1]{$4$};
\node[state] (3)[left of=4]{$3$};
\node[state] (2)[right of=1]{$2$};
\draw[orange,->,very thick,snake=snake] (4)--node[midway,black,yshift=2mm]{1}(3);
\draw[red,->,very thick] (1)--(4);
\draw[blue,->,very thick] (1)--node[midway,black,yshift=2mm]{1}(2);
\path[orange,-,draw,dashed,thick] (1) -- ($ (4) !.5! (3) $);
\node[teal,above of=4,xshift=2mm,yshift=-1.5cm]{1};
\node[black,left of=1,xshift=1cm,yshift=1cm]{$g_3^{a'}$};
\end{tikzpicture}
\quad
\begin{tikzpicture}
[shorten >=1pt,node distance=2cm,initial text=]
\tikzstyle{every state}=[draw=black!50,very thick]
\tikzset{every state/.style={minimum size=0pt}}
\tikzstyle{accepting}=[accepting by arrow]
\node[state,initial] (1) {$1$};
\node[state,accepting above] (4)[above of=1]{$4$};
\node[state] (3)[left of=4]{$3$};
\node[state] (2)[right of=1]{$2$};
\node[state] (6)[above right of=1,xshift=-5mm,yshift=-5mm]{$c$};
\draw[orange,->,very thick,snake=snake] (4)--node[midway,black,yshift=2mm]{1}(3);
\draw[red,->,very thick] (1)--(4);
\draw[orange,->,very thick,snake=snake] (1)--node[midway,black,yshift=-2mm]{1}(2);
\draw[red,->,very thick] (6)--(1);
\path[orange,-,draw,dashed,thick] (1) -- ($ (4) !.5! (3) $);
\path[orange,-,draw,dashed,thick] (6) -- ($ (1) !.5! (2) $);
\node[teal,above of=4,xshift=2mm,yshift=-1.5cm]{1};
\node[black,left of=1,xshift=1cm,yshift=1cm]{$g_3^{b'}$};
\end{tikzpicture}
\quad
\begin{tikzpicture}
[shorten >=1pt,node distance=2cm,initial text=]
\tikzstyle{every state}=[draw=black!50,very thick]
\tikzset{every state/.style={minimum size=0pt}}
\tikzstyle{accepting}=[accepting by arrow]
\node[state,initial] (1) {$1$};
\node[state,accepting above] (4)[above of=1]{$4$};
\node[state] (3)[left of=4]{$3$};
\node[state] (2)[right of=1]{$2$};
\draw[orange,->,very thick,snake=snake] (4)--node[midway,black,yshift=2mm]{1}(3);
\draw[red,->,very thick] (1)--(4);
\draw[blue,->,very thick] (1)--node[midway,black,yshift=2mm]{2}(2);
\path[orange,-,draw,dashed,thick] (1) -- ($ (4) !.5! (3) $);
\node[teal,above of=4,xshift=2mm,yshift=-1.5cm]{1};
\node[black,left of=1,xshift=1cm,yshift=1cm]{$g_3^{a''}$};
\end{tikzpicture}
\quad
\begin{tikzpicture}
[shorten >=1pt,node distance=2cm,initial text=]
\tikzstyle{every state}=[draw=black!50,very thick]
\tikzset{every state/.style={minimum size=0pt}}
\tikzstyle{accepting}=[accepting by arrow]
\node[state,initial] (1) {$1$};
\node[state,accepting above] (4)[above of=1]{$4$};
\node[state] (3)[left of=4]{$3$};
\node[state] (2)[right of=1]{$2$};
\node[state] (6)[above right of=1,xshift=-5mm,yshift=-5mm]{$c$};
\draw[orange,->,very thick,snake=snake] (4)--node[midway,black,yshift=2mm]{1}(3);
\draw[red,->,very thick] (1)--(4);
\draw[orange,->,very thick,snake=snake] (1)--node[midway,black,yshift=-2mm]{2}(2);
\draw[red,->,very thick] (6)--(1);
\path[orange,-,draw,dashed,thick] (1) -- ($ (4) !.5! (3) $);
\path[orange,-,draw,dashed,thick] (6) -- ($ (1) !.5! (2) $);
\node[teal,above of=4,xshift=2mm,yshift=-1.5cm]{1};
\node[black,left of=1,xshift=1cm,yshift=1cm]{$g_3^{b''}$};
\end{tikzpicture}}

\vspace{1cm}

\begin{tikzpicture}
[shorten >=1pt,node distance=2cm,initial text=]
\tikzstyle{every state}=[draw=black!50,very thick]
\tikzset{every state/.style={minimum size=0pt}}
\tikzstyle{accepting}=[accepting by arrow]
\node[state,initial] (1) {$1$};
\node[state,accepting] (2)[right of=1] {$2$};
\draw[blue,very thick,->](1)--node[midway,black,yshift=2mm]{1}(2);
\node[black,above of=1,xshift=1.1cm,yshift=-1.0cm]{$d_1'$};
\end{tikzpicture}
\qquad
\begin{tikzpicture}
[shorten >=1pt,node distance=2cm,initial text=]
\tikzstyle{every state}=[draw=black!50,very thick]
\tikzset{every state/.style={minimum size=0pt}}
\tikzstyle{accepting}=[accepting by arrow]
\node[state,initial] (1) {$1$};
\node[state,accepting] (2)[right of=1] {$2$};
\node[state] (3)[above right of=1,xshift=-5mm]{$c$};
\draw[red,very thick,->](3)--(1);
\draw[orange,very thick,snake=snake,->](1)--node[midway,black,yshift=2mm]{1}(2);
\path[orange,-,draw,dashed,thick] (3) -- ($ (1) !.5! (2) $);
\node[black,left of=1,xshift=1.5cm,yshift=1cm]{$d_2'$};
\end{tikzpicture}
\caption{Our full set of arithmetic fragments. When the external assets of a bank are set to $0$, our notational convention is to omit the blue label at the respective node. All nodes labeled with $c, c_1$, and $c_2$, are assumed to have a recovery rate of $0$, which is achieved by setting the external assets of $c$ to $0$ and setting the coefficients in the financial system (in which the fragment is embedded) such that $c$ has a strictly positive liability.}
    \label{fig:arithgad_example}
\end{figure}




Many of the arithmetic fragments in Figure \ref{fig:arithgad_example} are highly similar, and this has been reflected in their names: For $i \in [3]$ and $j \in \{a,b,c,d\}$, the arithmetic fragment $g_i^{j'}$ differs from $g_i^{j''}$ in only a single contract's notional. 
Similarly to non-arithmetic fragment strings, we may string together arithmetic versions of fragment strings, and we may apply our symbolic notation to denote strings and cycles of arithmetic fragments.

Next, we study the recovery rates of strings and cycles of arithmetic fragments. We start with the following observations.
\begin{observation}\label{obs:1}
Let $x_1'$ and $x_2'$ be any two consecutive arithmetic fragments in a string or cycle $C$ of arithmetic fragments. Let $r$ be the recovery rate of the start node of $x_1'$ under a clearing vector of $C$ under the assumption that all nodes labeled with $c, c_1,$ and $c_2$ have recovery rate $0$. 
\begin{itemize}
    \item If $x_1' \in \{g_i^{j'}, g_i^{j''}\ :\ i \in [2], j \in \{a,b,c,d\}\}$ and $x_2' \in \{g_i^{j'}\ :\ i \in [2], j \in \{a,b,c,d\}\} \cup \{d_1',d_2'\}$, then the recovery rate of the end node of $x_1'$, is $(1-r)/(2-r)$ or $1/(3-r)$, as indicated by the red labels in Figure \ref{fig:arithgad_example}.
    \item If $x_1' \in \{g_3^{j'}, g_3^{j''}\ :\ j \in \{a,b\}\}$ and $x_2' \in \{g_i^{j''} : i \in [3], j \in \{a,b,c,d\}\}$, then the recovery rate of the end node of $x_1'$ is $1/(3-r)$.
\end{itemize}
\end{observation}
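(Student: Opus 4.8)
The plan is to verify directly, for each arithmetic fragment $x_1'$ listed, that the assets flowing into its end node under any clearing vector equal the claimed rational expression in $r$, where $r$ is the recovery rate of the start node. The key structural feature I would exploit is the one highlighted in the setup: each fragment has a \emph{unique} path from start to end node, and every other node is attached to this path by a single arc, so the recovery rates propagate along the path in a determined way once $r$ is fixed and the recovery rates of the external reference nodes $c, c_1, c_2$ are set to $0$. Concretely, for the $g_1$-family the main path is a red arc $(1,2)$ followed by an orange arc $(2,5)$ belonging to the same CDS; since node $2$ is switched on, node $1$'s recovery rate $r$ makes node $2$ a debtor in a CDS referencing $1$, so the liability on arc $(2,5)$ is $(1-r)\cdot 1$. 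One then computes node $2$'s assets and liabilities: node $2$ receives the payment from the red/CDS arc into it — but more carefully, node $2$'s assets come from its external assets ($=1$ in the $g_1$-fragments) together with whatever flows in along the path, and its liabilities are the sum over its outgoing arcs, which by construction total $(1-r) + \text{(the other outgoing arc)}$. Working through the ratio $a_2/l_2$ and then multiplying by the notional on $(2,5)$, and using that the start node of $x_2'$ has liability normalised to $1$ on its incoming unit-cost debt contract, yields assets $(1-r)/(2-r)$ at node $5$. The $g_2$-family is handled identically but with external assets $1/2$ at node $2$ and notional $2$ on the arc into node $5$, giving $1/(3-r)$.

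First I would isolate the single general computation that all cases reduce to: a node $v$ on the path with external assets $e_v \in \{0, 1/2, 1\}$, a single incoming path-arc carrying some flow, a CDS/orange outgoing arc whose liability depends on $r$, and possibly additional outgoing arcs to nodes of recovery rate $0$ (which contribute to $l_v$ but whose payments vanish downstream). I would write $r_v = \min\{1, a_v/l_v\}$, argue that in the relevant instantiations $r_v < 1$ so the $\min$ is not active (this needs the notionals chosen in Figure \ref{fig:arithgad_example} to force $a_v < l_v$ for $r \in [0,1]$, which is a finite case check), and then read off $a_{\text{end}} = r_v \cdot (\text{notional on the path-arc out of } v) \cdot (\text{CDS factor if applicable})$. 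The second bullet, concerning $g_3^{a'}, g_3^{a''}, g_3^{b'}, g_3^{b''}$ composed with a $g_i^{j''}$ successor, is the case where the end node $4$ of $g_3$ is a debtor whose outgoing CDS references node $1$; here one checks that the assets at node $4$ under the clearing vector, as a function of the start node's rate, come out to $1/(3-r)$ — again a direct computation using the external assets label $1$ at node $4$ and the notional $1$ on the orange arc $(4,3)$, combined with the notional on the arc into the successor fragment.

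The main obstacle I anticipate is purely bookkeeping rather than conceptual: ensuring that the $\min\{1, \cdot\}$ clipping in the clearing condition is genuinely inactive at every node along every path for all $r \in [0,1]$, so that the recovery rates are given by the clean ratios rather than being truncated to $1$. This is where the specific choices of notionals ($1$ versus $2$) and external assets ($1$ versus $1/2$) in Figure \ref{fig:arithgad_example} matter, and it must be checked fragment-by-fragment; the single-path structure plus the assumption that $c, c_1, c_2$ have recovery rate $0$ keeps each such check elementary. A secondary point to be careful about is the role of the successor fragment $x_2'$: the claim is about the \emph{assets} at the end node of $x_1'$, and one must confirm that attaching $x_2'$ (whose start node has a unit-cost incoming debt contract from the end node of $x_1'$) does not feed anything back into $x_1'$ that would change node $2$'s (or node $4$'s) assets — which holds precisely because the end node of $x_1'$ has no incoming arc from within $x_2'$, so the computation is one-directional along the path.
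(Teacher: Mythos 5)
Your overall strategy is the intended one --- the paper treats this statement as a direct verification: compute assets and liabilities along the unique start-to-end path of $x_1'$ under the clearing condition, with the nodes labeled $c$, $c_1$, $c_2$ fixed at recovery rate $0$, and check that the $\min\{1,\cdot\}$ in the clearing condition is never active. Your within-fragment computations are essentially right (in the $g_1$-fragments node $2$ has assets $1$ and liabilities $2-r$, so the payment into the end node is $(1-r)/(2-r)$; in the $g_2$-fragments the analogous computation gives $1/(3-r)$), and your point that nothing flows back into $x_1'$ from its successor is the correct supporting observation.

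There is, however, a genuine gap in how you compose consecutive fragments. You assume the end node of $x_1'$ is linked to the start node of $x_2'$ by a unit-cost debt contract (as in the gadget wiring of the \textsf{FIXP} reduction), but the merge operation of Section~\ref{sec:irrationality} \emph{identifies} the end node of $x_1'$ with the start node of $x_2'$: they are one and the same bank, with no connecting contract. This is precisely why the hypotheses on $x_2'$ appear in the statement: the liability of the shared node is the notional on the outgoing arc of $x_2'$'s start node, which equals $1$ for single-primed fragments and for $d_1'$, $d_2'$ (first bullet) and equals $2$ for double-primed fragments (second bullet). Under your reading, the end node of a $g_3$-fragment would carry liabilities $(1-r)+1$ and hence recovery rate $1/(2-r)$, not the claimed $1/(3-r)$; the correct computation is assets $1$ divided by $(1-r)+2 = 3-r$, where the $2$ is the successor's own outgoing notional rather than a connecting unit contract, and the condition that $x_2'$ be double-primed is exactly what supplies it. Relatedly, in the second bullet you assert that the \emph{assets} of node $4$ come out to $1/(3-r)$ --- its assets are $1$; $1/(3-r)$ is its recovery rate --- and in the first bullet the restriction on $x_2'$ plays no role in your argument, whereas it is what guarantees the shared node's liability is $1$ so that its recovery rate equals the incoming assets. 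Repairing the composition step (identification of nodes, successor's outgoing notional as the shared node's liability) is what makes both bullets, and in particular the primed versus double-primed distinction, come out as stated.
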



Next we provide a notion of \emph{equivalence} among fragment strings.
\begin{definition}
Let $x^1_s$, $x^2_s$ be two arithmetic fragment strings. 
We say that $x^1_s$ and $x^2_s$ are \emph{equivalent} iff the recovery rate of the end node of $x^1_s$ equals the recovery rate of the end node of $x^2_s$ for all possible choices $r \in [0,1]$ of the recovery rate of the input node of $x_1^s$ and $x_2^s$ respectively, under the assumption that all nodes labeled with $c$, $c_1$, and $c_2$ have a recovery rate of $0$.
\end{definition}

Using the notion of equivalence, we provide a set of \emph{rewriting rules} for the symbolic formulations of our arithmetic fragment strings and cycles. We can use such rules to rewrite a given fragment string or cycle into an equivalent one that is simpler to analyse with respect to the output recovery rate. 
By repeatedly applying such rewriting rules, we enable ourselves to produce a family of equivalent reformulations of a given arithmetic fragment string or cycle, such that the clearing recovery rate vector (and in particular the output rate of the rewritten fragment string) can be proved to be irrational. 


The rewriting rules will all be stated under the previously made assumption that nodes labeled with $c$, $c_1$, and $c_2$ in the  fragments all have recovery rate $0$.

\begin{description}
\item[Rule 0:] In any arithmetic fragment string or cycle, we may replace an occurrence of a fragment $g_i^{j'}$, where $i \in [3]$ and $j \in \{a,b,c,d\}$, with the fragment $g_i^{a'}$. Similarly, we may replace an occurrence of a fragment $g_i^{j''}$ with the fragment $g_i^{a''}$. It is straightforward to see that the recovery rate of the end node in the replaced fragment has not changed as a function of the recovery rate of the start node, and therefore the resulting fragment string is equivalent to the original.

\item[Rule 1:] In any arithmetic fragment string or cycle, we may replace an occurrence of a fragment $g_2^{a'}$ (respectively $g_2^{a''}$) by $g_1^{a'} g_1^{a'}$ respectively $g_1^{a''}g_1^{a'}$ if the fragment $g_2^{a'}$ (or respectively $g_2^{a''}$) is followed by one of the fragments in $\{g_1^{a'}, g_2^{a'}, g_3^{a'},d_1',d_2'\}$. 
This rewriting rule is valid because the recovery rate of the end node of $g_2^{a'}$ and $g_2^{a''}$ is equal to $1/(3-r)$, and 
the recovery rate of $g_1^{a'}g_1^{a'}$ is given by 
\begin{equation*}
    \frac{1-\frac{1-r}{2-r}}{2-\frac{1-r}{2-r}} = \frac{1}{3-r}.
\end{equation*}

\item[Rule 2:] In any arithmetic fragment string or cycle, we may replace an occurrence of a consecutive pair of fragments $g_3'g_i^{a''}$, where $g'_3 \in \{g_3^{a'}, g_3^{a''}\}$, and $i \in [3]$, by the fragments $g_2^{a'}g_i^{a'}$. By Observation \ref{obs:1}, the recovery rates of the end nodes of $g_3'$ and $g_2^{a'}$ are identical under this substitution, under any clearing vector, so that the two fragment strings are equivalent. 
\item[Rule 3:] In any arithmetic fragment string or cycle, we may remove an occurrence of $d'_1$ or $d'_2$. This substitution is straightforward from the fact that both $d'_1$ and $d'_2$ just transfer the recovery rate from the start to the end node. 
\end{description}

\begin{example}
Given the fragment cycle $\dot{g_1^a}g_2^b d_1d_2\dot{g_1^a}$, we may choose the coefficients in order to obtain the arithmetic fragment cycle $\dot{g_1^{a'}}g_2^{b'}d'_1d'_2\dot{g_1^{a'}}$. The following two fragment cycles are then equivalent.
\begin{enumerate}
    \item  $\dot{g_1^{a'}}g_2^{a'}d'_1d'_2\dot{g_1^{a}}$ (by applying Rule 0),
    \item  $\dot{g_1^{a'}}g_2^{a'}\dot{g_1^{a'}}$ (by applying Rule 3),
    \item $\dot{g_1^{a'}}g_1^{a'}g_1^{a'}\dot{g_1^{a'}}$ (by applying Rule 1).
\end{enumerate}
\end{example}

\subsection{Irrationality of Strongly Switched Cycles}

Consider any instance of a financial system $I$. Let $G_{I,\text{aux}}$ be its auxiliary graph, and suppose that this system has a strongly switched cycle. Then, this cycle is composed entirely of our fragments in Figure \ref{fig:base}. This is formalised as follows. We will use the notation $V(G)$ and $E(G)$ to denote the set of vertices and the set of arcs of a directed graph $G$, respectively.

\begin{definition}\label{def:agreement}
Let $G'$ be a fragment cycle, and let $C'$ be the unique directed cycle in $G'$. The fragment cycle $G'$ is said to \emph{agree with a cycle $C$ of $G_{I,\text{aux}}$} iff there is a mapping $g : V(G') \rightarrow V(G_{I,\text{aux}})$ with the following properties:
\begin{itemize}
    \item For all $(v,w) \in E(G')$, the arc $(g(v),g(w))$ is in $E(G_{I,\text{aux}})$ and has the same color as $(v,w)$. 
    \item $g$ restricted to the domain $V(C')$ defines a bijection between $V(C')$ and $V(C)$.  
    \item For each CDS $(i,j,R)$ in $G'$, $(g(i),g(j),g(R))$ is a CDS of $G$.  
\end{itemize}
Note that the above points imply that $g$ restricted to $V(C')$ defines an arc-color-preserving isomorphism between $C'$ and $C$. However, this isomorphism property does not necessarily extend to node sets larger than $C'$: nodes in $V(G')$ that are not in $V(C')$ may be mapped by $g$ to the same vertex of $G_{I,\text{aux}}$.

Furthermore, we define the fragment cycle $G'$ to \emph{simply} agree with a cycle $C$ of $G_{I,\text{aux}}$, if $G'$ agrees with cycle $C$ of $G_{I,\text{aux}}$ through a mapping $g$ for which it additionally holds that
\begin{itemize}
    \item all nodes outside $C'$ are mapped to vertices outside $C$,
    \item For every pair of nodes $\{u,v\} \subseteq V(G')$, where $u$ is a node labeled with $c$, $c_1$, or $c_2$ (in Figure \ref{fig:arithgad_example}) and $v$ is labeled with a number (in Figure \ref{fig:arithgad_example}), $g(u) \not= g(v)$, and
    \item for every node $u$ of $G'$ labeled with $c$, $c_1$, or $c_2$, $g(u)$ has an outgoing arc pointing towards a node not in $C'$.  
\end{itemize} 
\end{definition}

The notion of \emph{simple} agreement defined above is a somewhat technical one. Informally stated, it is a mild condition that requires that the neighbouring nodes of $C'$ are sufficiently ``independent'' from each other and from the cycle $C$, under the mapping $g$. This brings us to the definition of a \emph{simple} strongly switched cycle.~\footnote{When compared with the informal statement of our second main theorem in the introduction, this notion makes more precise what we claimed about the off-cycle paths between the nodes in the cycles.}

\begin{definition}
A cycle $C$ of $G_{I,\text{aux}}$ is a \emph{simple} strongly switched cycle iff $C$ is strongly switched, and for each red arc $(u,v)$ of $C$ there are non-red arcs $(u, u')$ and $(v,v')$ such that $u',v' \not\in C$. Furthermore, if $(u,u')$ or $(v,v')$ is orange, then the reference bank $R$ of the corresponding CDS is not in $C$ and $R$ has an outgoing non-red arc pointing to a node not in $C$.
\end{definition}

From the definition of our fragments $\mathcal{G}$, it is straightforward to see that our fragments can represent any strongly switched cycle: If $G_{I,\text{aux}}$ has a strongly switched cycle $C$, then there is a fragment cycle $G'$ consisting of fragments in $\mathcal{G}$ such that $G'$ agrees with $C$ of $G_{I,\text{aux}}$. Similarly, if $G_{I,\text{aux}}$ has a simple strongly switched cycle $C$, then there is a fragment cycle $G'$ consisting of fragments in $\mathcal{G}$ such that $G'$ simply agrees with $C$ of $G_{I,\text{aux}}$. All nodes of $C$ that are switched on correspond to the $2$-labeled nodes of a $g_2^j$ or $g_1^j$ fragment, for some $j \in \{a,b,c,d\}$. 

Next, we prove two lemmas that show that we can set the coefficients in any strongly switched fragment cycle such that the fragment cycle admits only irrational clearing recovery rates. We start by considering formulations consisting only of $g_1^j$ fragments.

\begin{lemma}\label{cla:g1}
For all fragment cycles $C \in \mathcal{CGS}$ consisting of only fragments in $\{g_1^j\ :\ j \in \{a,b,c,d\}\}$, there exist coefficients such that the clearing recovery rate vector of $C$ is irrational (under the assumption that all nodes labeled with $c$,$c_1$, and $c_2$ have a recovery rate of $0$).
\end{lemma}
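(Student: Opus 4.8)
The plan is to turn the fragment cycle into a composition of a single fractional--linear map with itself, and then pin down that map's unique fixed point in $[0,1]$. First I would choose coefficients: equip every fragment $g_1^j$ ($j\in\{a,b,c,d\}$) occurring in $C$ with the coefficients of its single--primed arithmetic version $g_1^{j'}$ from Figure~\ref{fig:arithgad_example} (one may alternatively first apply \textbf{Rule~0} to replace each fragment by $g_1^{a'}$). With this choice, every consecutive pair of fragments in $C$ falls under the first case of Observation~\ref{obs:1}, so under any clearing vector of $C$ (with the nodes labeled $c,c_1,c_2$ at recovery rate $0$, as assumed) each copy transforms the recovery rate $r$ of its start node into $\phi(r):=(1-r)/(2-r)$ at its end node. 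Since $\phi(r)\le 1/2<1$ for all $r\in[0,1]$, the truncation $\min\{1,\cdot\}$ in the clearing condition is never active along the main cycle, so these transformations are exact equalities. Traversing all $k$ copies (where $k$ is the number of fragments of $C$) and using that the end node of the last copy is identified with the start node of the first, the recovery rate $r^\star$ of that node must satisfy $r^\star=\phi^k(r^\star)$.

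Next I would show that $\phi^k$ has a unique fixed point in $[0,1]$ and that it is irrational. The map $\phi$ is strictly decreasing on $[0,1]$, with $\phi([0,1])=[0,1/2]$, $\phi([0,1/2])=[1/3,1/2]\subseteq[0,1/2]$, and $|\phi'(r)|=(2-r)^{-2}\le 4/9$ on $[0,1/2]$. Hence $\phi$ restricts to a contraction of the invariant interval $[0,1/2]$, so by Banach's fixed--point theorem it has a unique fixed point $r^\star$ there, and $\phi^n(x)\to r^\star$ for every $x\in[0,1/2]$. Because $\phi^k(r)\in\phi([0,1])=[0,1/2]$ for all $r\in[0,1]$, every fixed point of $\phi^k$ lies in $[0,1/2]$, and iterating $\phi^k$ then forces it to equal $r^\star$. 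Solving $\phi(r^\star)=r^\star$ gives $r^{\star 2}-3r^\star+1=0$, whose unique root in $[0,1]$ is $r^\star=(3-\sqrt5)/2$, an irrational number. (Equivalently, $\phi$ is the M\"obius transformation with matrix $\bigl(\begin{smallmatrix}-1&1\\-1&2\end{smallmatrix}\bigr)$, whose eigenvalue ratio $-(3+\sqrt5)/2$ has no integer power equal to $1$; hence $\phi^k$ is never the identity and shares its two fixed points $(3\pm\sqrt5)/2$ with $\phi$.)

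To finish, note that a clearing vector of $C$ exists (existence is guaranteed, see \cite{schuldenzucker2016clearing}); by the preceding analysis every such vector assigns the irrational value $r^\star=(3-\sqrt5)/2$ to the start node of each fragment of $C$, and in fact the entire clearing vector is then uniquely determined, the remaining off--cycle nodes being pure creditors (recovery rate $1$) or the assumed rate--$0$ nodes $c,c_1,c_2$. Hence every clearing recovery rate vector of $C$ is irrational. I expect the main obstacle to lie not in the computation above but in the surrounding verification: one must check that the coefficients chosen for the $g_1^{j'}$ fragments (together with whatever liabilities are needed to force the $c$--nodes to recovery rate $0$ and to satisfy non--degeneracy) yield a valid financial system, that no truncation is triggered at any node of $C$, and --- crucially --- that the off--cycle nodes contribute nothing back into the cycle that would perturb the map $r\mapsto\phi(r)$; concretely, this amounts to checking that the hypotheses of Observation~\ref{obs:1} hold for every consecutive pair of fragments, which here is always a primed $g_1$ fragment followed by a primed $g_1$ fragment, and such a pair lies in the permitted set of that observation.
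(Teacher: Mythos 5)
Your proposal is correct, and it follows the paper's overall strategy (choose the single-primed coefficients, invoke Rule~0 and Observation~\ref{obs:1} so that each fragment realises the map $\phi(r)=(1-r)/(2-r)$ along the cycle, then close the cycle and solve a quadratic). The one place where you genuinely diverge is in how the closing condition $r=\phi^k(r)$ is resolved: the paper computes $\phi^k$ explicitly, proving by induction that the $i$th end node has rate $(f_i-rf_{i-2})/(f_{i+2}-rf_i)$ with Fibonacci coefficients, and then uses the identity $f_{k+2}+f_{k-2}=3f_k$ to reduce the closing equation to $r^2-3r+1=0$ independently of $k$; you instead avoid the explicit iterate by an analytic argument --- $\phi$ maps $[0,1]$ into the invariant interval $[0,1/2]$, on which it is a contraction ($|\phi'|\le 4/9$), so fixed points of $\phi^k$ are forced to coincide with the unique fixed point of $\phi$, and solving $\phi(r)=r$ gives $(3-\sqrt5)/2$. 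Both routes are sound and land on the same quadratic; the paper's Fibonacci computation gives a concrete closed form for the recovery rate after $i$ fragments (which is mildly informative in its own right), whereas your contraction argument is shorter, requires no induction, and additionally delivers uniqueness of the on-cycle rates essentially for free. Your closing remarks about what still needs checking (validity of the coefficients, the $c$-nodes having rate $0$, no truncation, no feedback from off-cycle nodes) correctly identify exactly the points the paper delegates to the construction of the arithmetic fragments and to Observation~\ref{obs:1}, so nothing essential is missing.
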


\begin{proof}
Consider a fragment cycle consisting exclusively of only fragments in $\{g_1^j\ :\ j \in \{a,b,c,d\}\}$. For all $j \in \{a,b,c,d\}$, fix the coefficients of all $g_1^j$ fragments in the cycle to obtain the arithmetic version $g_1^{j'}$. Use rewriting Rule 0 to replace all $g_1^{j'}$ occurrences by $g_1^{a'}$. The resulting arithmetic fragment cycle consists of a number of consecutive copies of $g_1^{a'}$, say $k$ of them. Consider now any clearing vector for the fragment cycle. denoting the $r \in [0,1]$. We prove by induction that the end node of the $i$th fragment has recovery rate equal to $(f_i-rf_{i-2})/(f_{i+2}-rf_i)$, where $f_i$ is the $i$th Fibonacci number, with $f_0 = 0$.

As pointed out in Observation \ref{obs:1}, the recovery rate of the end node of the first fragment equals $\frac{1-r}{2-r}$. Repeating this argument once, we obtain that the recovery rate of the end node of the second fragment equals 
\begin{equation*}
\frac{1-\frac{1-r}{2-r}}{2-\frac{1-r}{2-r}} = \frac{1}{3-r} = \frac{f_2 - f_0r}{f_{4}-f_2r},
\end{equation*}
proving the base case.


Next, assume that the claim holds for the $i$th fragment in the fragment cycle. Taking the recovery rate of $(f_{i}-rf_{i-2})/(f_{i+2}-rf_i)$ as the recovery rate of the start node of fragment $i+1$, we obtain that the recovery rate of its end node is
\begin{eqnarray*}
    \frac{f_{i+2}-rf_i -f_i + rf_{i-2}}{2f_{i+2}-2rf_i -f_i + rf_{i-2}} = \frac{f_{i+1} - rf_{i-1}}{f_i + 2f_{i+1} -rf_{i-2} - 2rf_{i-1}} & = & \frac{f_{i+1} - rf_{i-1}}{f_{i+2} + f_{i+1} - r(f_{i} + f_{i-1})} = \frac{f_{i+1} - rf_{i-1}}{f_{i+3} - rf_{i+1}},
\end{eqnarray*}
which establishes the inductive step.



We know that the end node of the last fragment in the fragment cycle has a recovery rate that coincides with the recovery rate $r$ of the start node of the first fragment. Therefore, in a clearing vector of recovery rates, it holds that $r = \frac{f_n - rf_{k-2}}{f_{k+2}-rf_k}$ which is equivalent to solving the equation $r^2f_k - (f_{k+2}+f_{k-2})r + f_k = 0$. Since $f_{k+2}+f_{k-2} = f_{k+1} + f_k + f_{k-2} = 2f_k +f_{k-1} +f_{k-2} = 3f_k$, computing the recovery rate of the initial node 1 comes down to solving the quadratic equation $r^2-3r+1 = 0$. Solving this equation we obtain that the only solution in $[0,1]$ is $r = (3-\sqrt{5})/2$ which is irrational, thus the clearing recovery rate vector of the strongly switched arithmetic fragment cycle is irrational and is unique.
\end{proof}

The next lemma extends the above to a larger class of arithmetic fragments.
\begin{lemma}\label{lem:2}
For all fragment cycles composed of fragments $\mathcal{G}$ in which every occurrence of a fragment in $\{g_3^j\ :\ j \in \{a,b\}\}$ is followed by a fragment in $\{g_i^j\ :\ i \in [2], j \in \{a,b,c,d\}\}$, there exist coefficients such that the clearing recovery rate vector of $C$ is irrational (under the assumption that all nodes labeled with $c$, $c_1$, and $c_2$ have a recovery rate of $0$).
\end{lemma}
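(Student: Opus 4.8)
The plan is to reduce the given arithmetic fragment cycle to one built entirely out of copies of $g_1^{a'}$, using the rewriting rules of the previous subsection, and then invoke Lemma~\ref{cla:g1}. Write $C = \dot{x_1}x_2\cdots x_k\dot{x_1}$ for the fragment cycle under consideration, where each $x_\ell \in \mathcal{G}$. The hypothesis on $C$ is precisely the condition that lets us meet the precondition of Rule~2 at every occurrence of a $g_3$-fragment.

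\textbf{Choosing coefficients.} For every fragment $x_\ell$ that immediately follows a fragment in $\{g_3^a,g_3^b\}$ in $C$, fix its coefficients so as to obtain the \emph{double}-primed arithmetic version; this is possible since, by hypothesis, such an $x_\ell$ lies in $\{g_i^j : i\in[2],\ j\in\{a,b,c,d\}\}$, and every fragment of that form admits a double-primed arithmetic version (see Figure~\ref{fig:arithgad_example}). For every remaining $g_i$- or $g_3$-fragment choose the single-primed version, and for every $d_1$- or $d_2$-fragment choose $d_1'$ or $d_2'$. Since each fragment in a cycle has a unique predecessor, this assignment is well defined; moreover the hypothesis forbids the configuration $g_3 g_3$, so no $g_3$-fragment ever needs to be double-primed. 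Now apply Rule~0 throughout, normalizing every $g_i^{j'}$ to $g_i^{a'}$ and every $g_i^{j''}$ to $g_i^{a''}$; the result is an equivalent arithmetic fragment cycle.

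\textbf{Eliminating the $g_3$-, $d$-, and $g_2$-fragments.} By construction, every occurrence of $g_3^{a'}$ in the normalized cycle is immediately followed by some $g_i^{a''}$ with $i\in\{1,2\}$, so Rule~2 applies and rewrites the pair $g_3^{a'}g_i^{a''}$ to $g_2^{a'}g_i^{a'}$. Distinct $g_3$-fragments have distinct successors and Rule~2 introduces no new $g_3$-fragment and consumes the double-primed fragment that was paired with it, so finitely many applications yield an equivalent cycle over $\{g_1^{a'},g_2^{a'},d_1',d_2'\}$. Next, apply Rule~3 repeatedly to delete all $d_1'$ and $d_2'$ occurrences, obtaining an equivalent cycle over $\{g_1^{a'},g_2^{a'}\}$; here we use that $C$ contains at least one $g_1$- or $g_2$-fragment (which holds in particular when $C$ is red, and hence when it is strongly switched), so that the cycle does not degenerate into the empty cycle. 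Finally, in the resulting cycle every $g_2^{a'}$ is followed by a $g_1^{a'}$ or a $g_2^{a'}$, hence by a fragment in $\{g_1^{a'},g_2^{a'},g_3^{a'},d_1',d_2'\}$, so Rule~1 applies and rewrites $g_2^{a'}$ to $g_1^{a'}g_1^{a'}$; each such step strictly decreases the number of $g_2^{a'}$-fragments and introduces none, so finitely many applications produce an equivalent cycle consisting solely of copies of $g_1^{a'}$.

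\textbf{Conclusion and main obstacle.} At this point Lemma~\ref{cla:g1} applies directly to the reduced cycle: under the coefficients fixed above, the clearing recovery rate of the start node is the unique solution in $[0,1]$ of $r^2-3r+1=0$, namely $r=(3-\sqrt5)/2$, which is irrational. Since equivalent arithmetic fragment strings induce, by definition of equivalence (and Observation~\ref{obs:1}), the same expression for the recovery rate of the end node in terms of the start-node rate, all of the rewriting steps above preserve the fixed-point equation satisfied by the start-node rate; hence the clearing recovery rate vector of $C$ has an irrational component and is unique. The only real content beyond bookkeeping is the verification, in the first step, that the coefficient assignment is globally consistent with the cycle structure — so that Rule~2's precondition ($g_3$ followed by a double-primed $g_i$) can be enforced simultaneously at every $g_3$-occurrence — and this is exactly where the hypothesis that every $g_3^j$ is followed by a fragment in $\{g_i^j : i\in[2]\}$ is used. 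A secondary point to check is that the rewriting rules, although phrased for fragment strings, remain valid across the wrap-around point of a cycle, and that Rule~1 can indeed be iterated until no $g_2^{a'}$ remains.
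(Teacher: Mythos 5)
Your proof is correct and follows essentially the same route as the paper's: the same coefficient assignment (double-priming exactly the fragments that follow a $g_3$-fragment, single-priming the rest), normalisation via Rule~0, elimination of the $g_3$-, $d$-, and $g_2$-fragments via Rules~2, 3, and 1 (the paper applies Rule~3 before Rule~2, but the order is immaterial since the hypothesis already makes each $g_3$ adjacent to its double-primed successor), and conclusion via Lemma~\ref{cla:g1}. Your side remark ruling out the degenerate all-$d$ cycle is a detail the paper glosses over and is harmless in the context in which the lemma is applied.
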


\begin{proof}
Consider any fragment cycle with the property described in the claim. Fix the coefficients of all fragments as follows.
\begin{itemize}
    \item For a fragment $f = g_i^j, i \in [3], j \in \{a,b,c,d\}$ occurring in the cycle, if the fragment preceding it is in $\{g_3^j\ :\ j \in \{a,b\}\}$, turn $f$ into the arithmetic fragment $g_i^{j''}$.
    \item For a fragment $f = g_i^j, i \in [3], j \in \{a,b,c,d\}$ occurring in the cycle, if the fragment preceding it is not in $\{g_3^j\ :\ j \in \{a,b\}\}$, turn $f$ into the arithmetic fragment $g_i^{j'}$.
    \item Turn every occurrence of $d_1$ into $d_1'$, and turn every occurrence of $d_2$ into $d_2'$.
\end{itemize}

Given the resulting arithmetic fragment cycle, we apply rewriting Rule 0 to all fragments in order to obtain a fragment cycle consisting only of arithmetic fragments in $\{g_1^{a'},g_1^{a''},g_2^{a'},g_2^{a''},g_3^{a'},g_3^{a''},d_1',d_2'\}$. We then use Rule 3 to remove all occurrences of $d_1'$ and $d_2'$ from the cycle, followed by Rule $2$ to remove all occurrences of $g_3$ from the cycle, followed by Rule $1$ to remove all occurrences of $g_2$ from the cycle, resulting in an arithmetic fragment cycle consisting of only a sequence of copies of $g_1^a$. Claim \ref{cla:g1} now completes the proof.
\end{proof}

We may now use these last two claims to yield the main result of this section.
\begin{theorem}
Let $I$ be a non-degenerate financial system 
such that $G_{I,\text{aux}}$ has a simple strongly switched cycle. Then there exist rational coefficients for $I$ such that all clearing vectors of $I$ are irrational.
\end{theorem}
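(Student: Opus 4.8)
The plan is to reduce the statement to Lemma~\ref{cla:g1} and Lemma~\ref{lem:2} in three steps: translate $C$ into a fragment cycle, use the coefficient choices behind Lemma~\ref{lem:2} to make that fragment cycle's unique clearing solution irrational, and then extend those coefficients to a full rational coefficient vector for $I$ without perturbing the recovery-rate equations on $C$.

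For Step~1, since $C$ is a simple strongly switched cycle, the structural correspondence observed just before Lemma~\ref{cla:g1} yields a fragment cycle $G'$ over the fragments in $\mathcal{G}$ together with a mapping $g$ witnessing that $G'$ simply agrees with $C$. One then checks, using simplicity, that $G'$ may be taken to satisfy the hypothesis of Lemma~\ref{lem:2}, i.e.\ that every $g_3^j$-fragment in $G'$ is followed by a $g_1^j$- or $g_2^j$-fragment: the head of the red arc of a $g_3^j$ is its \emph{end} node, which inside the fragment carries no blue out-arc, so its ``switched-on'' status in $C$ must come from the next fragment in the cycle, and the escape arcs guaranteed by simplicity pin down which fragment that can be. (Any residual configuration not literally covered by Lemma~\ref{lem:2} is absorbed by one more routine application of Rules~0--3.) For Step~2, instantiate the coefficients of the fragments of $G'$ exactly as in the proof of Lemma~\ref{lem:2} --- each $g_i^j$ becoming $g_i^{j''}$ if preceded by a $g_3^j$-fragment and $g_i^{j'}$ otherwise, and each $d_1,d_2$ becoming $d_1',d_2'$. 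Lemma~\ref{lem:2} then reduces the resulting arithmetic fragment cycle, via Rules~0, 3, 2 and 1, to a cycle of copies of $g_1^{a'}$ and invokes Lemma~\ref{cla:g1}: under the assumption that all nodes labelled $c,c_1,c_2$ have recovery rate $0$, every clearing vector is unique on the nodes of $C'$ and assigns each of them the recovery rate $(3-\sqrt5)/2$, which is irrational.

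Step~3 lifts this to all of $I$, and is where I expect the real work to lie. Keep the notionals and external assets prescribed by the arithmetic fragments on the arcs and nodes of $G'$, set every other contract notional of $I$ to $0$, and set the external assets of every bank outside $G'$ to $0$. Then the assets and liabilities of each node of $C'$ are exactly those induced by $G'$, so the fixed-point equations restricted to $C'$ coincide with those of Step~2. To discharge the ``$c$-node recovery rate $0$'' assumption, invoke simplicity: each $g(u)$ with $u$ labelled $c,c_1,c_2$ has an outgoing non-red arc to a node outside $C'$; route a positive-notional debt contract along it to a fresh zero-asset sink, so that $g(u)$ has a positive liability and, receiving no payments, recovery rate $0$ in every clearing vector. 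Finally, restore non-degeneracy by equipping every CDS-debtor or reference bank of $I$ that would otherwise violate it with a dangling unit debt contract to a fresh zero-asset bank; the arithmetic fragments already supply such contracts for the nodes inside $G'$, and simplicity ensures the remaining helper arcs and banks can be placed off $C'$, so its equations stay untouched. Consequently every clearing vector $r$ of $I$ has the coordinates indexed by $V(C')$ equal to the irrational $(3-\sqrt5)/2$, so $r\notin\mathbb{Q}^n$, i.e.\ every clearing vector of $I$ is irrational. The main obstacle is precisely this step: ``freezing'' $I$ everywhere outside the cycle --- zeroing notionals, pinning the reference banks to recovery rate $0$, and repairing non-degeneracy --- without disturbing the tightly coupled recovery-rate equations on $C'$, which the simplicity hypothesis is exactly designed to enable.
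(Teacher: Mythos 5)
Your proposal is correct in outline and follows essentially the same route as the paper's proof: pick a fragment cycle $G'$ that simply agrees with $C$, instantiate its fragments with the arithmetic coefficients behind Lemma~\ref{lem:2} so that the recovery rates around the cycle are forced to the irrational value $(3-\sqrt{5})/2$, and then extend to coefficients for the whole of $I$ by zeroing external assets and notionals outside the image of the agreement map, with the simplicity conditions guaranteeing that the images of the $c$-, $c_1$-, $c_2$-labelled nodes receive a positive liability (hence recovery rate $0$) and that no payments leak into the embedded cycle. One caveat on Step~3: you \emph{add} fresh zero-asset banks and dangling contracts to repair non-degeneracy, but the statement only permits choosing rational coefficients for the \emph{given} instance $I$, so you may not introduce new banks or contracts. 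This repair is also unnecessary: the conclusion (``all clearing vectors of $I$ are irrational'') does not require the re-coefficiented instance to be non-degenerate, and the paper does exactly what the rest of your Step~3 already does --- set notional $1$ on the existing arcs that simplicity guarantees out of the images of the $c$-labelled nodes, and notional $0$ plus zero external assets everywhere else outside the image of $g$. A smaller remark on Step~1: your claim that simplicity ``pins down'' which fragment follows a $g_3^j$-fragment is not right, since simplicity constrains only off-cycle arcs, and a run of consecutive red arcs on $C$ genuinely forces a $g_3$-fragment to be followed by another $g_3$-fragment, a case outside the literal hypothesis of Lemma~\ref{lem:2}. However, the paper itself invokes Lemma~\ref{lem:2} for an arbitrary simple strongly switched cycle without addressing this case (and the rewriting machinery, notably Rule~2 with $i=3$ together with the double-prime coefficient assignment, does extend to it), so on this point your write-up is no weaker than the paper's own argument; just do not present that patch as following from simplicity.
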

\begin{proof}
Let $C$ be a strongly switched cycle of $G_{I,\text{aux}}$ and let $G'$ be a fragment cycle that simply agrees with $C$ through a mapping $g$ satisfying the conditions stated in Definition \ref{def:agreement}. By Lemma \ref{lem:2}, there are notionals $c$ and external assets $e$ for $G'$ such that all clearing vectors of $G'$ are irrational, under the assumption that the nodes labeled with $c$, $c_1$, and $c_2$ have a recovery rate of $0$. In $G_{I,\text{aux}}$, we can now set the notionals and external assets on the vertices and arcs such that they agree with $c$ and $e$ through the mapping $g$: 
\begin{itemize}
    \item For each $v \in V(G')$, we set the external assets of $g(v)$ to $e_v$.
    \item For each $(v,w) \in E(G')$ such that $(v,w)$ is a blue arc, we set the notional on the contract $(g(v),g(w))$ to $c_{v,w}$.
    \item For each $(v,w) \in E(G')$ such that $(v,w)$ is an orange arc, we set the notional on the contract $(g(v),g(w))$ to $c_{v,w}^R$, where $R$ is the reference bank corresponding to the CDS arc $(v,w)$.
\end{itemize}
This assignment of coefficients is well-defined by the properties of $g$ stated in Definition \ref{def:agreement} (i.e., there are no two arcs or vertices that get assigned multiple conflicting coefficients this way).  
We set the remaining coefficients of $G_{I,\text{aux}}$ (i.e., the coefficients on the arcs and vertices outside the image of $g$) as follows:
\begin{itemize}
    \item We set the external assets to $0$ for every node $v \in V(G_{I,\text{aux}})$ that is not in the image of $g$.
    \item We set the notional to $1$ on every arc $(v,w) \in E(G_{I,\text{aux}})$ such that $g^{-1}(v)$ is a node labeled with $c$, and $w$ is not in the image of $g$.
    \item We set the notional to $0$ on every arc $(v,w) \in E(G_{I,\text{aux}})$ that does not satisfy the condition in the point above.
\end{itemize}
Note that by the simplicity property of the agreement between $G'$ and $C$ (see Definition \ref{def:agreement}), the second point in the above list ensures that for each node $v \in V(G')$ labeled with $c$, $c_1$ and $c_2$, the node $g(v)$ has a recovery rate of $0$. Furthermore, if we denote by $G''$ the subgraph of $G$ formed by the image of $g$ (i.e., $G''$ is the projection of $G'$ to $G_{I,\text{aux}}$ through $g$), we can see that the above setting of the coefficients outside of the image of $g$ ensures that no payments flow from $G''$ to any node outside $G''$ under any clearing vector. It then follows by Lemma \ref{lem:2} and the simple agreement properties, that under this setting of the coefficient of $G_{I,\text{aux}}$, every clearing vector is irrational (and in particular these irrational recovery rates emerge in the nodes of $G''$). This establishes our claim.
\end{proof}

\section{Financial Systems with Guaranteed Rational Solutions}\label{sec:rationality}

In the previous section, we identified a sufficient structural condition for the ability of a financial system to have irrational clearing vectors. In this section we investigate how close these conditions are to a characterisation, by attempting to answer the opposite question: Under which structural conditions are rational clearing vectors guaranteed to exist in a financial system? The answer to this relates again to the notion of switched cycles: We will show that if a given non-degenerate financial system does not possess any weakly switched cycle, then there must exist clearing vectors of the system that are rational. We investigate furthermore the computational complexity of finding a clearing vector in this case: Solutions can, informally stated, be computed by solving a linear number of \textsf{PPAD}-complete problems. This latter result is achieved through identifying a natural class of financial systems for which the problem of computing an exact fixed point is \textsf{PPAD}-complete.

The results in this section indicate that the structural conditions for irrationality formulated in the previous section do close in on a characterisation, although there is still a ``gray area'' left: For those instances of financial systems that do have weakly switched cycles, but do not have any simple strongly switched cycles, we are not yet able to determine by the structural interrelationships of the financial contracts whether these systems are likely to possess rational or irrational solutions. This forms an interesting remaining problem that we leave open.

The main result we will prove in this section is thus the following.
\begin{theorem}\label{thm:rationality}
Let $I$ be a non-degenerate financial system. If $G_{I,\text{aux}}$ does not have any weakly switched cycles, then all clearing vectors of $I$ are rational.
\end{theorem}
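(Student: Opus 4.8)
The plan is to prove the theorem by a strong-component decomposition of the dependency structure of the clearing equations, processing the components in topological order, and showing that within each component the clearing equations collapse — after eliminating the nonlinear (product) terms with the help of the structural hypothesis — to a fixed-point system of \textsf{Linear-FIXP} type, whose solutions are rational.

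First I would set up the dependency relation. By (\ref{eq:liabilities})--(\ref{eq:assets}) and (\ref{eq:clearing}), the value $r_i$ of a clearing vector depends only on $r_j$ for $j$ such that $G_{I,\text{aux}}$ has an arc $j\to i$, together with $r_R$ for each reference bank $R$ of a CDS whose creditor is $i$; in the latter case $G_{I,\text{aux}}$ already contains the length-two path $R\to(\text{debtor})\to i$ (a red arc followed by the matching orange arc), so the strongly connected components of $G_{I,\text{aux}}$ coincide with those of this dependency relation. Fix a clearing vector $r^*$ and let $S_1,\dots,S_q$ be these SCCs in topological order; I induct on $t$, assuming $r^*_j\in\mathbb{Q}$ for every $j\in S_1\cup\dots\cup S_{t-1}$, and prove the same for $S:=S_t$.

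The core of the argument is a handful of structural observations about $S$, driven by the absence of weakly switched cycles together with non-degeneracy. (i) If $i\in S$ is switched on and $R$ is a reference bank of a CDS with debtor $i$, then $R\notin S$: otherwise, since $S$ is strongly connected, $G_{I,\text{aux}}$ would contain a cycle through the red arc $R\to i$ whose head $i$ is switched on — a weakly switched cycle. Hence $l_i(r^*)$ is a fixed rational number (it depends only on notionals and on the already-rational values $r^*_R$ for references $R\notin S$); the same holds trivially for $i\in S$ that are debtors in no CDS. (ii) By point 2 of Definition~\ref{def:nondegenerate} every reference bank has an outgoing blue arc, so no switched-off node is a reference bank; thus for a switched-off node $i\in S$ the bare value $r^*_i$ appears in no clearing equation — only the effective payment rate $s_i:=r^*_i(1-r^*_{R_i})$ does, where $R_i$ is the unique reference of $i$'s CDSes, and $s_i$ appears only, scaled by a notional, in the assets of the creditors of those CDSes. (iii) That reference $R_i$ is itself not switched off, so $r^*_{R_i}$ is either a rational constant ($R_i\notin S$) or one of the ordinary variables of (i). With these in hand I would introduce the variables $x_i:=r^*_i$ for $i\in S$ not switched off and $y_i:=s_i$ for switched-off $i\in S$, all lying in $[0,1]$. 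A case analysis over the terms of $a_i(r^*)$ then shows that every product $r_jr_k$ there has a factor that is either an upstream rational constant or equals some $y_j$: the only way a product is created is a CDS payment $r_j(1-r_R)$, and by (i)--(iii) either the reference $R$ lies outside $S$ (constant) or the debtor $j$ is switched off (so the payment is $y_j$ times a notional). Hence $a_i(r^*)$ is affine in $(x,y)$ with rational coefficients, and rewriting (\ref{eq:clearing}) — multiplying through by $l_i(r^*)$ in the switched-off case, where $l_i(r^*)=(1-r^*_{R_i})L_i$ for a rational constant $L_i$ — turns the restricted clearing conditions into a system $z=\Phi(z)$, $z=(x,y)$, with $\Phi:[0,1]^{|S|}\to[0,1]^{|S|}$ computed by an algebraic circuit over $\{+,-,\max,\min\}$ using only multiplication by rational constants. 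By Theorem~\ref{prop:linearfixpppad} and the rationality of \textsf{Linear-FIXP} solutions, $z$, and therefore $r^*$ restricted to $S$, is rational, which closes the induction.

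The main obstacle I anticipate is the product-elimination step: one has to verify exhaustively that after the substitution $r_i\leftrightarrow s_i$ at switched-off nodes no product of two component-internal variables survives either in any $a_i(r^*)$ or in any $l_i(r^*)$, which needs the full case split over how a product can be created (a CDS payment $r_{\text{debtor}}(1-r_{\text{reference}})$, feeding either the creditor's assets or, after cross-multiplication, the debtor's own equation) together with the dichotomy that in each case (i) forces the reference out of $S$ or (ii) linearises the payment to $s_i$. Secondary technical points to treat carefully are: the exact correspondence between cycles of $G_{I,\text{aux}}$ and of the dependency relation used to justify the SCC reduction; the degenerate sub-cases where $l_i(r^*)=0$ inside a component (so $r^*_i=1$), which must be checked to be consistent with the linearised equations; and — to obtain the conclusion for \emph{every} clearing vector rather than merely the existence of a rational one — the fact that the component-wise system, being of \textsf{Linear-FIXP} type with all data rational, has every fixed point rational.
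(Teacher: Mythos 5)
Your proposal is correct and follows essentially the same route as the paper: the paper likewise decomposes $G_{I,\text{aux}}$ into strongly connected components, processes them in topological order with incoming payments folded into external assets and externally-referenced CDSes frozen into debt contracts, and observes that the absence of weakly switched cycles forces each component to satisfy a ``dedicated CDS debtor'' property, after which Lemma \ref{lem:PPADdedicated} performs exactly your change of variables from the recovery rate $r_i$ of a switched-off debtor to its payment $r_i(1-r_{R_i})$ (your observations (i)--(iii) being precisely the content of that lemma and of the claim that every SCC has the dedicated-debtor property) in order to land in $\textsf{Linear-FIXP}$. The one soft spot you flag at the end --- that \emph{every} fixed point of the resulting piecewise-linear system is rational, which the universally quantified statement needs but which fails in general (e.g., for a pure debt cycle with zero external assets every point on the diagonal is clearing) --- is equally present in the paper's own proof, which in fact only constructs a single rational clearing vector.
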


We start by showing that for a particular subclass of financial systems without weakly switched cycles, the clearing vector computation problem lies in \textsf{Linear-FIXP}, which is equal to \textsf{PPAD} (as per Theorem \ref{prop:linearfixpppad}), and thus the clearing vectors of such financial system must have polynomial size rational coefficients. 

\begin{definition}
An instance $I = (N,e,c)$ of a financial system is said to have the \emph{dedicated CDS debtor property} iff for every node $i \in N$ that is a debtor of at least one CDS of $I$, the following holds: There are no debt contracts (with a non-zero notional) in which $i$ is the debtor, and all CDSes (with a non-zero notional) for which $i$ is the debtor share the same reference bank.
\end{definition}

\begin{lemma}\label{lem:PPADdedicated}
(The exact computation version of) \problem\ restricted to non-degenerate financial systems with the dedicated CDS debtor property is \textsf{PPAD}-complete.
\end{lemma}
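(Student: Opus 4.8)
The statement has two parts: membership in \textsf{PPAD} (via \textsf{Linear-FIXP}) and \textsf{PPAD}-hardness. The plan is to prove membership by showing that, under the dedicated CDS debtor property together with non-degeneracy, the fixed-point function $f_I$ of \eqref{eq:f} can be represented by an algebraic circuit over the basis $\{+,-,\max,\min\}$ together with multiplication by \emph{rational constants only} (no arbitrary multiplication), so that the problem lies in \textsf{Linear-FIXP} $=$ \textsf{PPAD} by Theorem \ref{prop:linearfixpppad}. The key observation is that the only place arbitrary multiplication could enter $f_I$ is in a term of the form $(1-r_k)c_{i,j}^k$ appearing inside $l_i(r)$, or in the product $p_{j,i}(r)=r_j\cdot l_{j,i}(r)$ appearing inside $a_i(r)$. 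For a debtor $i$ of a CDS, the dedicated CDS debtor property forces $l_i(r) = (1-r_R)\sum_{j} c_{i,j}^R = (1-r_R)\cdot C_i$ for a single fixed reference bank $R=R(i)$ and a rational constant $C_i := \sum_j c_{i,j}^R$; hence $l_i(r)$ is an affine function of the single variable $r_R$, computable with multiplication by rational constants only. Consequently $p_{i,j}(r) = r_i \cdot l_{i,j}(r) = r_i(1-r_R)\cdot c_{i,j}^R$ is a product of \emph{two} recovery-rate variables at most --- still a genuine bilinear term. I would handle this by the standard trick: the recovery rate $r_i$ of such a debtor $i$ is itself $\min\{1, a_i(r)/l_i(r)\}$, and I would argue that one can instead carry $p_{i,j}(r)$ directly as the relevant quantity and express it linearly. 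More carefully: let me introduce, for each CDS debtor $i$ with reference $R$, the payment-scaling identity $p_{i,j}(r) = \min\{C_i(1-r_R),\ a_i(r)\}\cdot \frac{c_{i,j}^R}{C_i}$, which follows from $r_i \cdot l_{i,j}(r) = \min\{1,a_i(r)/l_i(r)\} l_{i,j}(r) = \min\{l_{i,j}(r), a_i(r) l_{i,j}(r)/l_i(r)\}$ and the fact that $l_{i,j}(r)/l_i(r) = c_{i,j}^R/C_i$ is a rational constant (using $l_i(r)>0$ by non-degeneracy). Thus $p_{i,j}(r)$ is obtained from $a_i(r)$ and $r_R$ using only $\min$, $+$, $-$, and multiplication by rational constants. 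Since $a_i(r)$ is a sum of incoming payments $p_{j,i}(r)$, and incoming payments from \emph{non}-CDS-debtors $j$ are simply $r_j c_{j,i}$ with $c_{j,i}$ a rational constant (a debt contract has constant notional), and incoming payments from CDS-debtors $j$ are handled by the identity above, an inductive argument over a suitable ordering of the nodes shows the whole of $f_I$ is a \textsf{Linear-FIXP} circuit. I must be careful that this recursion is well-founded: the expression for $p_{i,j}(r)$ refers to $a_i(r)$, which refers to payments into $i$; but a payment into $i$ could come from another CDS debtor whose payment refers to \emph{its} assets, etc. This does not create a problem because we are building a circuit computing $f_I(r)$ as a function of the input vector $r$, not solving for a fixed point: every $a_i(r)$, $l_i(r)$, $p_{i,j}(r)$ is a fixed (closed-form) function of $r$, and each such function, by the identities above, is expressible in the linear basis with the variables $r$ as inputs; so there is no circularity, only a DAG of gate dependencies keyed on the contract structure. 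The remaining subtlety is the outermost $\min\{1,\cdot\}$ and the division by $\max\{l_i(r),a_i(r)\}$ in \eqref{eq:f} for nodes $i$ that are \emph{not} CDS debtors: such a node $i$ can still have $l_i(r)$ \emph{non-constant}? No --- if $i$ is not a CDS debtor then $l_i(r) = \sum_j c_{i,j}^\emptyset$ is a rational constant, so $f_I(r)_i = \min\{1, a_i(r)/l_i(r)\} = \min\{1, a_i(r)\cdot (1/l_i(r))\}$, again only multiplication by a rational constant (and if $l_i(r)=0$ then $r_i=1$ by definition, a constant gate). Hence every coordinate of $f_I$ is in the linear basis, giving membership in \textsf{Linear-FIXP} $=$ \textsf{PPAD}.

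**Hardness.** For \textsf{PPAD}-hardness, the natural route is to reduce from the weak approximation version of \problem\ (shown \textsf{PPAD}-hard in \cite{schuldenzucker2017finding}), or more cleanly, to reduce from \textsf{Linear-FIXP} directly using a restricted version of the gadget construction of Theorem \ref{thm:fixp}. The plan is to inspect the gadgets used in the \textsf{FIXP}-hardness reduction and observe that the addition gadget $g_+$, the duplication gadget $g_{\text{dup}}$ (including its multiplication-by-rational-constant capability), the sub-gadget $g_{\text{pos}-}$, the absolute-difference gadget $g_{\text{abs}}$, and rational-constant-input gadgets all respect the dedicated CDS debtor property --- the only gadgets that violate it are $g_*$ (arbitrary multiplication) and $g_/$ (division), and these are precisely the gadgets not needed when simulating a \textsf{Linear-FIXP} circuit, whose basis is $\{+,-,\max,\min\}$ plus multiplication by rational constants. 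Since $\max\{a,b\}=\tfrac12((a+b)+|a-b|)$ and $\min\{a,b\}=\tfrac12((a+b)-|a-b|)$ are realizable from $g_+$, $g_{\text{abs}}$, $g_{\text{pos}-}$ and the constant-multiplication feature of $g_{\text{dup}}$, an arbitrary \textsf{Linear-FIXP} circuit (suitably preprocessed so all signals lie in $[0,1]$, exactly as in the proof of Theorem \ref{thm:fixp} but now without any division being introduced, since the preprocessing divisions were only needed to rescale after arbitrary multiplications) can be encoded as a financial system built solely from these "safe" gadgets, and one then checks each such gadget individually satisfies the dedicated CDS debtor property and that the property is preserved under interconnecting gadgets by unit debt contracts. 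Closing the loop from outputs to inputs as before yields an instance whose clearing vectors correspond to fixed points of the original \textsf{Linear-FIXP} circuit, establishing \textsf{PPAD}-hardness. Combined with the membership argument, \problem\ restricted to non-degenerate systems with the dedicated CDS debtor property is \textsf{PPAD}-complete.

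**Main obstacle.** I expect the main difficulty to be the membership direction --- specifically, rigorously justifying that the bilinear-looking term $p_{i,j}(r)=r_i l_{i,j}(r)$ for a CDS debtor $i$ truly collapses to a linear-basis expression. The identity $r_i l_{i,j}(r) = \min\{l_{i,j}(r),\, a_i(r) l_{i,j}(r)/l_i(r)\}$ hinges on $l_{i,j}(r)/l_i(r)$ being a \emph{rational constant}, which is exactly what the dedicated CDS debtor property buys us (single reference bank $R$, no debt contracts, so $l_{i,j}(r) = (1-r_R)c_{i,j}^R$ and $l_i(r) = (1-r_R)\sum_{j'} c_{i,j'}^R$, whence the ratio is $c_{i,j}^R/\sum_{j'}c_{i,j'}^R$, provided that sum is nonzero --- and if it is zero then $i$ has no liabilities and $r_i=1$, so $p_{i,j}(r)=0$). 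One must also double-check the edge case $l_i(r)=0$ occurring at $r_R=1$: there $r_i=1$ by convention but $l_{i,j}(r)=0$ too, so $p_{i,j}(r)=0$, and the $\min$ expression also evaluates to $\min\{0,\cdot\}=0$ provided $a_i(r)\ge 0$, which holds; so the identity extends continuously and correctly to this boundary. Handling these degenerate boundary cases cleanly, and making the inductive circuit construction precise (choosing the right notion of "node ordering" so the gate DAG is acyclic), is where the real work lies; everything else is routine bookkeeping and an appeal to Theorem \ref{prop:linearfixpppad} and to the gadget toolkit of Theorem \ref{thm:fixp}.
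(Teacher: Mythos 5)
The membership half of your proposal has a genuine gap. Your key identity $p_{i,j}(r)=\min\{l_{i,j}(r),\,a_i(r)\,l_{i,j}(r)/l_i(r)\}$ is obtained by substituting $r_i=\min\{1,a_i(r)/l_i(r)\}$, i.e.\ the clearing condition itself; it therefore holds only at fixed points, not pointwise, so the circuit you describe does not compute the function $f_I$ of (\ref{eq:f}). Indeed no circuit over the linear basis can compute $f_I$: the assets $a_i(r)$ contain genuinely bilinear terms $r_j(1-r_k)c_{j,i}^k$ in the input variables, and for a CDS debtor $i$ the coordinate $f_I(r)_i$ requires division by the non-constant liability $l_i(r)=(1-r_R)\sum_k c_{i,k}^R$ --- a coordinate your argument never treats (you only handle non-debtor coordinates and debtor payments). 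Moreover, the recursive elimination of payments you invoke need not be well-founded: the dedicated CDS debtor property does not prevent a CDS debtor from being the creditor of another CDS debtor, so the claimed ``DAG of gate dependencies keyed on the contract structure'' can in fact be cyclic. The repair is precisely the step you gesture at (``carry $p_{i,j}$ directly as the relevant quantity'') but do not carry out: change the fixed-point function. The paper's proof takes as variables the payments on the CDS contracts together with the recovery rates of the non-debtor nodes, dropping the debtor recovery rates entirely; it defines $f'$ coordinate-wise by exactly your min-expression, which is then a minimum of two affine functions of the inputs, verifies that fixed points of $f'$ correspond to clearing vectors (the debtor rates being recoverable afterwards), and concludes via closure of $\textsf{Linear-FIXP}$ under polynomial-time reductions and Theorem \ref{prop:linearfixpppad}. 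Your identity is the right algebraic ingredient, but it must be used to define a new fixed-point problem, not to rewrite $f_I$.

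On hardness, the paper follows your first suggestion: it reuses the $\textsf{PPAD}$-hardness reduction of \cite{schuldenzucker2017finding}, observing that (after two trivial modifications of their amplifier and sum gadgets) the instances produced there already satisfy non-degeneracy and the dedicated CDS debtor property, and that weak approximation trivially reduces to exact computation; this verification of the property on the hard instances is the one point your sketch leaves implicit. Your alternative route --- directly encoding a $\textsf{Linear-FIXP}$ circuit with the ``safe'' gadgets of Theorem \ref{thm:fixp} --- is not as immediate as stated: the divisions in that preprocessing are not only introduced after multiplication gates but also at the output layer to undo the global scaling by $t=1/2^{2^d}$, and undoing that scaling by linear means is unavailable when all gadget values must lie in $[0,1]$ (multiplication by $2^{2^d}$ cannot be realised by polynomially many doublings), so one would instead have to reduce from a $[0,1]$-truncated circuit problem, as \cite{schuldenzucker2017finding} do, and check gadget-by-gadget that the dedicated CDS debtor property and non-degeneracy are preserved under interconnection.
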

\begin{proof}
Let $I = (N,e,c)$ be a non-degenerate financial system with the dedicated CDS debtor property. 
 
Let $D \subseteq N$ be the set of all nodes that are a debtor in at least one CDS of $I$. We first show that containment in \textsf{Linear-FIXP} holds for the special case where for each node in $D$. Finally, we will show that the problem is \text{Linear-FIXP}-hard as well.

First, note that by non-degeneracy of the instance $I$, for all $i \in D$ the reference bank $R$ of which $i$ is the debtor in all its CDS contracts satisfies that $R \in N$. We will show that the following function $f'$ is in \textsf{Linear-FIXP}: The function $f' : \mathbb{R}^{|\mathcal{CDS}|} \times [0,1]^|N\setminus D| \rightarrow \mathbb{R}^{|\mathcal{CDS}|} \times [0,1]^|N\setminus D|$ maps a vector of payments $p$ (one payment for each of the CDS contracts of $I$) combined with a vector of recovery rates $r'$ for the nodes in $N \setminus D$, to vectors of the same dimensions. The fixed points of $f'$ are those points $(p,r')$ for which there is a clearing vector $r$ such that $r'$ coincides with $r$ on $N \setminus D$, and $p$ coincides with the payments through the CDS contracts under $r$. In other words, when comparing our original function $f$ of (\ref{eq:f}) to $f'$, the difference is that $f'$ does not take the recovery rates of $D$ as arguments, but instead takes the payments made by $D$ as arguments. This change of fixed-point function is made in order to ensure that $f'$ can be computed using only the operations $\{+,-,\min\}$ and multiplications by constants. Given a fixed point $(p,r')$ of $f'$, it is then easy to compute a corresponding fixed point of the original $f$, as we can compute the recovery rates of $D$ from their payments $p$, given the recovery rates $r'$ of the remaining nodes. This places the problem in \textsf{Linear-FIXP}, as the class is closed under polynomial-time reductions (see Remark \ref{rem:1}), and hence also in \textsf{PPAD} by Theorem \ref{prop:linearfixpppad}.

The function $f'$ is defined as follows. Let $(p,r')$ denote a vector of payments on the CDS contracts of $I$ together with the recovery rates of $N\setminus D$. For a node $i \in N\setminus D$, $f'$ defines the recovery rate
\begin{equation*}
f'_i(p,r') = \min\left\{1,\frac{a_i(p, r')}{l_i(p, r')}\right\}, \text{ where } l_i(p,r') = \sum_{j \in N} c_{i,j}^{\emptyset}, \text{ and } a_i(p,r') = \sum_{j \in N \setminus D} r_j' c_{j,i}^{\emptyset} + \sum_{j\in D} p_{j,i}. 
\end{equation*}
For a node $i \in D$, if there is a CDS contract with debtor $i$ and creditor $j$, the payment on this contract specified by $f'$ is defined as follows. Let $R \in N\setminus D$ be the unique reference bank of the CDSes in which $i$ is the debtor.
\begin{equation}\label{eq:fprime}
f'_{i,j}(p,r') = \min\left\{1,\frac{a_i(p,r')}{l_i(p,r')}\right\}(1-r_R')c_{i,j}^R = \min\left\{(1-r_R')c_{i,j}^R,(1-r_R')c_{i,j}^R\frac{a_i(p,r')}{l_i(p,r')}\right\},
\end{equation}
where $a_i(p,r')$ are the assets of bank $i$,
\begin{equation*}
    a_i(p,r') = e_i + \sum_{k \in N\setminus D} r'_k c_{k,i}^{\emptyset} + \sum_{k \in D} p_{k,i},
\end{equation*} 
and $l_i(p,r')$ denotes the liabilities of bank $i$, given by:
\begin{equation*}
l_i(p,r') = (1 - r'_R)\sum_{k \in N} c_{i,k}^R.
\end{equation*}
We can therefore rewrite (\ref{eq:fprime}) to
\begin{equation*}
f'_{i,j}(p,r')  = \min\left\{\left(1-r_R'\right)c_{i,j}^R,c_{i,j}^R\frac{e_i + \sum_{k \in N\setminus D} r'_k}{\sum_{k \in N} c_{i,k}^R}\right\},
\end{equation*}
which makes it clear that we can compute $f'_{i,j}(p)$ using $\{+,-,\min\}$ and multiplication-by-constant gates. This places the problem of computing a fixed point for $f'$ (and in turn, for $f$) in \textsf{Linear-FIXP} and \textsf{PPAD}.

For \textsf{PPAD}-hardness, we simply refer to the reduction in \cite{schuldenzucker2017finding}, where it is proved that the weak approximation version of \problem\ is \text{PPAD}-complete: Their proof reduces instances of a known \text{PPAD}-hard problem into a non-degenerate instance of the weak approximation version of \problem, and the latter instance turns out to actually satisfy the dedicated CDS debtor property (where two trivial modifications need to be made to the \emph{amplifier} and \emph{sum} gadgets in the proof in \cite{schuldenzucker2017finding}). This shows \text{PPAD}-hardness for computing a weakly approximate clearing vector in a financial system with the dedicated CDS debtor property. This establishes \textsf{PPAD}-hardness of the exact computation version of the problem as well, since weak approximation trivially reduces to exact computation.
\end{proof}

The above \textsf{PPAD}-completeness result (and more precisely the \textsf{PPAD}-membership part of the result), shows that non-degenerate instances with the dedicated CDS debtor property must have polynomial size rational solutions. We use this fact to prove Theorem \ref{thm:rationality}.

\begin{proof}[Proof of Theorem \ref{thm:rationality}]
Consider the graph $D$ that has as its nodes the strongly connected components (SCCs) of $G_{I,\text{aux}}$, and has an arc from a node $S$ to a node $T$ if and only if there exists an arc in $G_{I,\text{aux}}$ that runs from a node in $S$ to a node in $T$. It is clear that $D$ is a directed acyclic graph.

We will show that we can find a rational clearing vector for $G_{I,\text{aux}}$ by finding rational clearing vectors of the separate SCCs of the system. However, both the assets and the liabilities of the nodes in a given SCC might depend on the contracts from outside the SCC that point into the SCC. Similarly, the liabilities of the nodes in the SCC might depend on arcs pointing from the SCC to external nodes. We may overcome this problem by including the outward-pointing arcs of an SCC into the subinstances that we aim to solve for, and by iterating over the SCCs according to the topological order of $D$: That is, we first find clearing vectors to the set $\mathcal{S}_1$ of SCCs that have no incoming arc in $D$. For such SCCs, the assets and liabilities of the nodes are not influenced by external arcs pointing into the SCC. We subsequently find clearing rates for the set of SCCs $\mathcal{S}_2$ that succeed $\mathcal{S}_1$ in the topological order defined by $D$. In general, we define $\mathcal{S}_j$ inductively as the set of SCCs that directly succeed $\mathcal{S}_{j-1}$ in the topological order defined by $D$, and we iteratively find clearing rates to the set of SCCs $\mathcal{S}_j$, given the clearing rates computed for $\mathcal{S}_1,\ldots,\mathcal{S}_{j-1}$, until we have obtained a clearing vector covering all nodes in the system. A crucial observation that motivates this approach is that the absence of any weakly switched cycle of $G_{I,\text{aux}}$ causes all SCCs to satisfy the dedicated CDS debtor property, and that therefore the clearing vector computation problem considered in each iteration lies in \textsf{PPAD}. At each iteration, we are thus guaranteed that there are rational recovery rates, and finding them requires solving a \textsf{PPAD}-complete problem.

However, there are some details required to make this approach work, which we will address next.

For an SCC $S$ of $G_{I,\text{aux}}$, define $N'(S)$ as the tricoloured subgraph of $G_{I,\text{aux}}$ formed by the set of all arcs that are going out of the nodes of $S$ (i.e., $N'(S)$ consists of $S$ itself and the set of arcs that point from $S$ to nodes outside of $S$). The colouring of the arcs is defined to correspond directly to the colouring in $G_{I,\text{aux}}$. We treat $N'(S)$ as a vertex-labeled and arc-labeled graph where the labels represent the notionals $c$ on the contracts and external assets $e$, according to the specification of $I$, in the usual way.
 
Note that $N'(S)$ does not in general define a valid financial subsystem of $I$, as there may be orange arcs in $N'(S)$ that do not have a corresponding red arc. The reason is that in $G_{I,\text{aux}}$ there may be such red arcs that point into $S$ from outside $S$, and hence are not included in the subgraph $N'(S)$. Another problem is that $N'(S)$ may have red arcs for which the corresponding orange arc is outside $N'(S)$. We will next modify $N'(S)$ accordingly, where we will furthermore allow ourselves to overwrite the external assets of the nodes with other values, as well as the notionals on the aforementioned set of orange arcs. To that end, let $E_o(S)$ be the set of orange arcs in $N'(S)$ for which there is no corresponding red arc present in $N'(S)$ and let $E_r(S)$ be the set of red arcs in $N'(S)$ for which there is no corresponding orange arc present in $N'(S)$. We now define the subgraph $N(S,(e_i)_{i \in V(N(S))},(c_e)_{e \in E_o(S)})$ which is obtained from $N'(S)$ by recolouring every orange arc $e \in E_o(S)$ into a blue arc and replacing its notional by $c_e$, removing every red arc in $E_r(S)$, and by replacing the external assets of every node $i \in V(N(S))$ by $e_i$.

Under this definition of $N(S,e,c)$, for any choice of non-negative vectors $e = (e_i)_{i \in V(N(S))}$ and $c = (c_e)_{e \in E_o(S)}$, we have the property $N(S)$ represents a valid financial system (i.e., where all orange arcs correspond to a CDS that is entirely contained in $N(S,e,c)$). Also, $N(S,e,c)$, has the dedicated CDS debtor property, which follows directly from $G_{I,\text{aux}}$ not having any weakly switched cycles. Therefore, from Lemma \ref{lem:PPADdedicated} it follows that $N(S,e,c)$ has a rational, polynomially sized clearing recovery rate vector, and finding one is \textsf{PPAD}-complete.

The algorithm by which we can find a rational vector of recovery rates now works as follows:
\begin{enumerate}
    \item Compute the recovery rates $r_{\mathcal{S}_1}$ for all  vertices in the SCCs $\mathcal{S}_1$. This is done by finding the recovery rates of the financial systems $N(S,e,c)$, where $S \in\mathcal{S}_1$ and $e$ and $c$ are defined as in $G_{I,\text{aux}}$. 
    \item Iterating over $j$, compute the recovery rates $r_{\mathcal{S}_j}$ given the recovery rates $r_{\mathcal{S}_1},\ldots, r_{\mathcal{S}_{j-1}}$ for the SCCs preceding $\mathcal{S}_j$ (according to the topological order defined by $j$). This is done by finding the recovery rates of the financial systems $N(S,e',c')$, where $S \in \mathcal{S}_j$, and the coefficients $e'$ and $c'$ are defined as follows:
    \begin{itemize}
        \item The external assets $e_i'$, for $i \in S$, is the sum of the original external assets $e_i$ as defined in instance $I$, and all the payments that are made to $i$ by nodes in the SCCs preceding $S$, under the recovery rates $r_{\mathcal{S}_1},\ldots, r_{\mathcal{S}_{j-1}}$ computed so far. 
        \item The notional $c'_{i,j}$ on the blue arc $(i,j) \in  E_o(S)$ of $N(S,e',c')$ (that is coloured orange in  $G'_{I,\text{aux}}$), is set to $(1-(r_{\mathcal{S}_{j-1}})_R)c_{i,j}^R$. Here $R$ denotes the reference bank of the CDS of $I$ that the orange arc $(i,j)$ of $G_{I,\text{aux}}$ associates to. 
    \end{itemize}
\end{enumerate}
Let $r$ be the resulting vector of recovery rates for $G_{I,\text{aux}}$, i.e., $r$ is obtained by combining the recovery rates $r_{\mathcal{S}_1},r_{\mathcal{S}_2}, \ldots$ computed through the above procedure. It can be proved inductively that $r$ is a clearing vector, by showing that for all nodes $i$, it holds that $f_i(r) = r$:

For an SCC $S \in \mathcal{S}_1$, all coefficients and arc colours in $N(S,e,c)$ correspond to those in $G_{I,\text{aux}}$.
Therefore, for every node in $i \in V(S)$ it holds that $a_i(r)$ under $G_{I,\text{aux}}$ equals $a_i(r_{\mathcal{S}_1})$ under $N(S,e,c)$. Similarly $l_i(r)$ under $G_{I,\text{aux}}$ equals  $l_i(r_{\mathcal{S}_1})$ under $N(S,e,c)$. Combining this with the fact that $f_i(r_{\mathcal{S}_1}) = (r_{\mathcal{S}_1})_i$ under $N(S,e,c)$, we conclude that it also holds that $f_i(r) = r$ under $G_{I,\text{aux}}$.

Next, we prove that under $G_{I,\text{aux}}$ it also holds that $f_i(r) = r$ for nodes in SCCs of $\mathcal{S}_j$, under the induction hypothesis that $f_i(r) = r$ for all nodes in SCCs of $\mathcal{S}_1 \cup \mathcal{S}_2 \cup \cdots$. Consider an SCC $S \in \mathcal{S}_j$ and a node $i \in V(S)$. First, we will establish that the payment along any arc of $N(S,e',c')$ under $r_{\mathcal{S}_j}$, equals the payment along the corresponding arc of $G_{I,\text{aux}}$ under $r$: This holds because (i.) all blue arcs in $N(S,e',c')$ that are not in $E_o(S)$ and are coming into $i$ have the same notional as in $G_{I,\text{aux}}$, so that the payment made through such an arc is equal among both $N(S,e',c')$ and $G_{I,\text{aux}}$; (ii.) all orange arcs in $N(S,e',c')$ that associate to CDSes for which the reference bank is in $S$ similarly have the same notional as in $G_{I,\text{aux}}$, so that the payment made through such an arc is equal among both $N(S,e',c')$ and $G_{I,\text{aux}}$ as well; (iii.) all blue arcs $(i,i')$ in $N(S,e',c')$ that are in $E_o(S)$ are orange arcs under $G_{I,\text{aux}}$, however, the notional on this arc in $N(S,e',c')$ is set to $(1-(r_{\mathcal{S}_{j-1}})_R)c_{i,i'}^R = (1-r_R)c_{i,i'}^R$, where $R$ is the reference bank of the CDS of which the orange arc $(i,i')$ is part. Thus, the notional on the blue arc $(i,i')$ under $N(S,e',c')$ equals the liability on the orange arc $(i,i')$ under $G_{I,\text{aux}}$ and $r$. Therefore, the payment going through the blue arc $(i,i')$ of $N(S,e',c')$ under $r_{\mathcal{S}_j}$ equals the payment going through the orange arc $(i,i')$ of $G_{I,\text{aux}}$ under $r$. 

This correspondence among payments in the two financial systems $N(S,e',c')$ (with recovery rates $r_{\mathcal{S}_j}$ and $G_{I,\text{aux}}$ (with recovery rates $r$) implies that $a_i(r)$ under $G_{I,\text{aux}}$ equals $a_i(r_{\mathcal{S}_j}$ under $N(S,e',c')$. Similarly $l_i(r)$ under $G_{I,\text{aux}}$ equals $l_i(r_{\mathcal{S}_j}$ under $N(S,e',c')$. Combining this with the fact that $f_i(r_{\mathcal{S}_j}) = (r_{\mathcal{S}_j})_i$ under $N(S,e',c')$, we conclude that it also holds that $f_i(r) = r$ under $G_{I,\text{aux}}$.
\end{proof}

The procedure outlined in the proof of Theorem \ref{thm:rationality} requires solving a \textsf{PPAD}-complete problem in each iteration, and the number of such iterations is at most linear in the instance size. Since solving each of these problems in \textsf{PPAD} yields a rational solution of size polynomial in the input, one might be tempted to think that the procedure in its entirety is capable of finding a polynomial size rational solution for any financial system that has no weakly switched cycles. Unfortunately, the latter is not true: Observe that in each iteration of the procedure, the \textsf{PPAD}-complete problem instance that is solved, is actually constructed using the rational recovery rate vectors that are computed in the preceding iterations. The coefficients in the \textsf{PPAD}-complete problem instance that is to be solved in any given iteration, are thus polynomially sized in the output recovery rates of the previous iteration. Altogether, this means that the coefficient sizes potentially grow by a polynomial factor in each iteration, and that the final recovery rates output by the procedure are potentially of exponential size.  

Indeed, there are examples of financial systems without weakly switched cycles for which the rational clearing vector has recovery rates that require an exponential number of bits to write down. A simple example is obtained by taking some of the gadgets in the reduction used in our \textsf{FIXP}-completeness result (Theorem \ref{thm:fixp}). By taking a duplication gadget (Figure \ref{fig:duplicationgadget}) followed by a multiplication gadget (Figure \ref{fig:multiplicationgadget} or \ref{fig:nondegeneratemultiplicationgadget}) that is connected to the two output nodes of the duplication gadget. We may then take multiple copies of these, and chain them together to form an acyclic financial system. If we now give the first node in the chain (i.e., the input node of the first duplication gadget) some small amount of positive external assets $c < 1$, this acyclic financial system essentially performs a sequence of successive squaring operations on the number $c$, under the unique clearing vector. The resulting recovery rates on the output nodes of the multiplication gadgets are then doubly exponentially small in magnitude, with respect to the number of squaring repetitions. Thus, the resulting clearing recovery rates require a number of bits that is exponential in the size of the financial system. 

If one is willing to discard the complexity issues that arise from working with large-size rational numbers, it is possible to study the procedure in the proof of Theorem \ref{thm:rationality} in the Blum-Shub-Smale model of computation. Under this computational model, any real number takes one unit of space to store, regardless of its size. Moreover, standard arithmetic operations are assumed to take unit time.\footnote{For a formal and more accurate definition of the Blum-Shub-Smale model, see the book \cite{blum1998complexity}.} The proof of Theorem \ref{thm:rationality} then implies that when one has oracle access to \textsf{PPAD}, it is possible to find rational clearing vectors in polynomial time under this model of computation. The class of problems polynomial time solvable under the Blum-Shub-Smale model is commonly denoted by $\textsf{P}_{\mathbb{R}}$. Hence, we obtain the following corollary.
\begin{corollary}\label{cor:rational}
The exact computation version of \problem, restricted to instances without weakly switched cycles, is in the complexity class $\textsf{P}_{\mathbb{R}}^{\textsf{PPAD}}$.
\end{corollary}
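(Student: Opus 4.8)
The plan is to take the constructive procedure already developed in the proof of Theorem~\ref{thm:rationality} verbatim, and simply account for its resource usage under the Blum--Shub--Smale model with a \textsf{PPAD} oracle. Recall the ingredients of that procedure: compute the strongly connected components of $G_{I,\text{aux}}$, form the condensation DAG $D$, fix a topological order, and process the SCCs in layers $\mathcal{S}_1,\mathcal{S}_2,\ldots$ defined by that order. When processing an SCC $S$ in layer $\mathcal{S}_j$, one forms the modified financial subsystem $N(S,e',c')$, where the external assets $e'$ of the nodes of $S$ are augmented by the payments received from SCCs in $\mathcal{S}_1\cup\cdots\cup\mathcal{S}_{j-1}$ under the recovery rates already computed, and where the notional on each ``dangling'' orange arc $(i,i')\in E_o(S)$ is set to $(1-(r_{\mathcal{S}_{j-1}})_R)c_{i,i'}^R$. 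By absence of weakly switched cycles, $N(S,e',c')$ has the dedicated CDS debtor property, so by Lemma~\ref{lem:PPADdedicated} a (rational) clearing vector of $N(S,e',c')$ can be obtained via one call to a \textsf{PPAD} oracle. Correctness of combining these per-SCC solutions into a clearing vector of $I$ is exactly the inductive argument already given in the proof of Theorem~\ref{thm:rationality}, so nothing new is needed there.

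\textbf{Resource accounting.} What remains is to observe that this procedure runs in polynomial time in the BSS model with a polynomial number of oracle calls. The purely combinatorial preprocessing (SCC decomposition, building $D$, topological sort) is polynomial-time and independent of the magnitudes of the numerical coefficients. The number of layers is at most $|N|$ and the total number of SCCs is at most $|N|$, so the \textsf{PPAD} oracle is invoked at most $|N|$ times in total. Between and within iterations, the only numerical work is updating the augmented external assets $e'$ and the rewritten notionals $c'$: each such update is a fixed-depth expression (sums of products of already-computed recovery rates with input coefficients $c$, $e$), and there are only polynomially many of them. In the BSS model each arithmetic operation costs one unit of time and each real occupies one unit of space, so the exponential blow-up in bit-length flagged in the discussion preceding this corollary is immaterial here. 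Hence the entire computation takes $\mathrm{poly}(|I|)$ BSS steps plus $\mathrm{poly}(|I|)$ queries to a \textsf{PPAD} oracle, which is exactly membership in $\textsf{P}_{\mathbb{R}}^{\textsf{PPAD}}$.

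\textbf{Main point to handle carefully.} Since this is essentially a corollary of Theorem~\ref{thm:rationality}, there is no substantive new obstacle; the only delicate point is the interface between the real-arithmetic machine and the discrete \textsf{PPAD} oracle. One should note that each instance $N(S,e',c')$ handed to the oracle has rational coefficients (the input coefficients of $I$ are rational, and Lemma~\ref{lem:PPADdedicated} guarantees the recovery rates produced at every earlier stage are rational), so each query is a legitimate instance of a \textsf{PPAD}-complete problem, presented in the standard way via numerator/denominator pairs, and the oracle returns a rational clearing vector that the BSS machine stores in unit space and carries forward. With this remark the accounting above goes through and the corollary follows.
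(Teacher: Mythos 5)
Your proposal is correct and follows essentially the same route as the paper: the corollary is obtained there, too, by running the SCC-by-SCC procedure from the proof of Theorem~\ref{thm:rationality}, noting that each of the (at most linearly many) iterations is one call to a \textsf{PPAD} oracle via Lemma~\ref{lem:PPADdedicated}, and observing that in the Blum--Shub--Smale model the exponential growth in bit-length of the intermediate rationals is immaterial. Your extra remark about the rationality of the oracle queries is a reasonable clarification of the same argument, not a departure from it.
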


\section{Conclusions}\label{sec:conclusions}
In this paper we study two questions of significance related to the systemic risk in financial networks with CDSes, a widely used and potentially disruptive class of financial derivatives. Firstly, we settle the computational complexity of computing strong approximations of each bank's exposure to systemic risk, arguably the only notion of approximation of interest to industry -- a conceptual point so far overlooked in the literature. We show that this problem is \textsf{FIXP}-complete. Secondly, we initiate the study of the rational fragment of \textsf{FIXP} by studying the conditions under which rational solutions for \problem\ exist.  Our results here are not conclusive in that there is a gap between our necessary and sufficient conditions, the cycles which involve both switched on and switched off nodes being not fully understood.~\footnote{We regard the \emph{simplicity} condition we made (i.e., about off-cycle paths between cycle nodes) as a technicality, which is less interesting and likely somewhat easier to deal with.} We conjecture that for any network with a weakly switched cycle there exist rational values for assets and liabilities that lead to irrational solutions (Appendix \ref{apx:Counterexample} contains one such example); however, our arguments and scheme cannot be easily generalised to those instances. We leave providing a full characterisation as an open problem.

There are further research directions that are suggested by our work. It would be interesting to study whether Corollary \ref{cor:rational}'s connection between \problem, \textsf{PPAD}, and $\textsf{P}_{\mathbb{R}}$ (i.e. polynomial time under the Blum-Shub-Smale model of computation \cite{blum1998complexity}) holds more generally for the entire rational subset of problems in \textsf{FIXP}. Furthermore, it is interesting to pursue finding polynomial-time \emph{constant} approximation algorithms of clearing recovery rate vectors: Also from an applied point of view, achieving a good approximation factor here (say with an additive approximation term of $1/100$)  might yield solutions that are useful in most practical circumstances and could be considered acceptable by financial institutions. We note here that a $1/2$-strong approximation is easy to compute (a recovery rate vector of only $1/2$s would indeed suffice). 


\bibliographystyle{plain}
\bibliography{bibl.bib}

\appendix

\newpage

\section{Weak Fixed Points versus Exact Fixed Points}\label{apx:A}
\begin{example}\label{ex:3}

\begin{figure}[htbp!]
\centering
\begin{tikzpicture}[shorten >=1pt,node distance=2cm,initial text=]
\tikzstyle{every state}=[draw=black!50,very thick]
\tikzset{every state/.style={minimum size=0pt}}
\node[state] (1) {$1$}; 
\node[teal,left of=1,xshift=1.5cm]{1};
\node[state] (2) [right of=1] {$2$};
\node[state] (3) [right of=2] {$3$};
\draw[orange,very thick,->,snake=snake] (1)--node[midway,black,yshift=3mm]{1}(2);
\draw[blue,very thick,->] (2)--node[midway,black,yshift=3mm]{1/2}(3);
\node[state] (4) [below of=1] {$4$};
\node[teal,left of=4,xshift=1.5cm]{1};
\node[state] (5) [below of=2] {$5$};
\node[state] (6) [below of=3] {$6$};
\draw[blue,very thick,->] (5)--node[midway,black,yshift=3mm]{$4\epsilon$}(6);
\draw[orange,very thick,->,snake=snake] (4)--node[midway,black,yshift=3mm,xshift=-1mm]{1}(5);
\path [orange,->,draw,dashed,thick] (5) -- ($ (1) !.5! (2) $);
\path [orange,->,draw,dashed,thick] (2) -- ($ (4) !.5! (5) $);
\end{tikzpicture}
\caption{Financial system of Example \ref{ex:3}}
    \label{fig:3}
\end{figure}
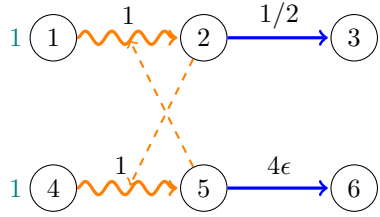

Consider the financial system depicted in Figure \ref{fig:3}.
Assume that $e_1 = 1$, $e_2 = 0$, $e_4 = 1$, $e_5 = 0$. Also $c_{1,2}^5 = c_{4,5}^2 = 1$, $c_{2,3} = 1/2$, $c_{5,6} = 4\epsilon$, for a tiny $\epsilon > 0$.

Take the point $r = (1,1,1,1,0,1)$. Obviously, $f_1(r) = f_4(r) = 1$, and $f_2(r) = \min\{1,(1 - r_5)/(1/2)\} = 1$, $f_5(r) = \min\{1,(1-r_2)(4\epsilon)\} = 0$. Since $r = f(r)$, the point
$r = (1,1,1,1,0,1) $ is an exact fixed point.       

Assume $r' = (1,1-2\epsilon,1,1,1/2+ \epsilon,1)$. It holds that $f_2(r') = \min\{1,(1-r'_5)/(1/2)\} = \min\{1,(1/2 - \epsilon)/(1/2)\} = 1-2\epsilon$. Moreover, $f_5(r') = \min\{1,(1-r'_2)/4\epsilon\} = \min\{1,(1-1+2\epsilon)/4\epsilon\} = 1/2$. Thus, $\|r' -f(r')\|_{\infty} \leq \epsilon$ which constitutes  $r' = (1,1-2\epsilon,1,1,1/2 + \epsilon,1)$ a weakly $\epsilon$-approximate fixed point.

When comparing $r'$ to $r$ though, we observe that the distance from the exact fixed point is $\|(1,1,1,1,0,1) - (1,1-2\epsilon,1,1,1/2+\epsilon,1)\|_{\infty} > 1/2$, proving the claim that  weakly $\epsilon$-approximate fixed points may potentially be very far from exact fixed points. On top of that, the information we get on node 2 in $r'$ is that $r'_2  = 1-2\epsilon < 1$, meaning that 2 is in default whereas actually 2 can fully pay its liabilities since $r_2 = 1$. This indicates that a weakly $\epsilon$-approximate fixed point may contain misleading information about whether a bank is in default or not, thus making such approximations unreliable while motivating the study of strong approximations.
\end{example}

\section{Financial System Gadgets Used in the Proof of Theorem \ref{thm:fixp}}\label{apx:gadgets}
In this section we present our gadgets, in Figures \ref{fig:duplicationgadget}--\ref{fig:absdifferencegadget}. For simplicity, (some of) our gadgets lead to degenerate instances; we deal with this in the next subsection.

\begin{figure}[htbp!]

    \centering
\begin{tikzpicture}
[shorten >=1pt,node distance=2cm,initial text=]
\tikzstyle{every state}=[draw=black!50,very thick]
\tikzset{every state/.style={minimum size=0pt}}
\tikzstyle{accepting}=[accepting by arrow]
\node[state,initial] (1) {$r$};
\node[state]         (2) [right of=1] {$1$};
\node[state,accepting]         (8) [right of=2]{$r$};
\draw[blue,very thick,->](2)--node[midway,black,yshift=2mm]{1}(8);
\draw[blue,->,very thick] (1)--node[midway,black,yshift=2mm]{1}(2);

\node[state] (3) [below right of=1]{$2$};
\node[teal,left of=3,xshift=1.5cm]{1};
\node[state] (4) [right of=3]{$3$};

\draw[orange,very thick,->,snake=snake] (3)--node[midway,black,xshift=-3mm,yshift=3mm]{1}(4);
\path[orange,-,draw,dashed,thick] (2) -- ($ (3) !.5! (4) $);

\node[state](7)[below right of=3]{$5$};
\node[teal,left of=7,xshift=1.5cm]{$c$};
\node[state](8)[right of=7]{$6$};
\node[state,accepting](10)[right of=8]{$c \cdot r$};
\draw[blue,very thick,->](8)--node[midway,black,yshift=2mm]{1}(10);
\node[state](9)[right of=4]{$4$};
\draw[blue,->,very thick] (4)--node[midway,black,yshift=2mm,xshift=-1mm]{1}(9);
\draw[orange,very thick,->,snake=snake](7)--node[midway,black,yshift=2mm,xshift=-4mm]{$c$}(8);
\path[orange,-,draw,dashed,thick] (4) -- ($ (7) !.5! (8) $);

\end{tikzpicture}
\caption{Gadget $g_{\text{dup}}$: This gadget outputs two values: $r$ and $cr$, where $c \in [0,1]$ is a rational constant. Choosing $c = 1$ yields a duplication gadget that takes the input recovery rate $r$ and outputs $r$ as the two recovery rates of the output nodes. We may also denote this gadget by $g_{c\cdot r}$, for $c \in [0,1]$ in case this gadget is used to multiply the input by a constant $c$.}
    \label{fig:duplicationgadget}
\end{figure}
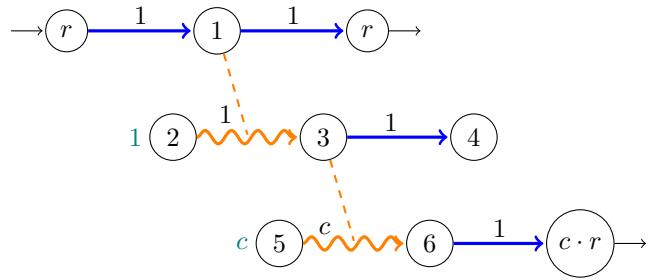

\begin{figure}[htbp!]

      \centering
\begin{tikzpicture}
[shorten >=1pt,node distance=2cm,initial text=]
\tikzstyle{every state}=[draw=black!50,very thick]
\tikzset{every state/.style={minimum size=0pt}}
\tikzstyle{accepting}=[accepting by arrow]

\node[state,initial] (1){$r_2$};
\node[state]         (2)[right of=1]{$1$};
\node[state]         (16)[right of=2]{$2$};
\draw[blue,very thick,->] (1)--node[midway,black,yshift=2mm]{1}(2);
\draw[blue,very thick,->] (2)--node[midway,black,yshift=2mm]{1}(16);

\node[state] (3)[below right of =1]{$3$};
\node[state] (4)[right of=3]{$4$};
\node[state] (5)[right of=4]{$5$};
\node[teal,left of =3,xshift=1.5cm]{1};
\draw[orange,very thick,->,snake=snake](3)--node[midway,black,xshift=1mm,yshift=3mm]{1}(4);
\draw[blue,very thick,->](4)--node[midway,black,yshift=2mm]{1}(5);
\path [orange,-,draw,dashed,thick] (2) -- ($ (3) !.5! (4) $);

\node[state,initial] (6)[below left of=3]{$r_1$};
\node[state]         (7)[right of=6]{$6$};
\node[state]         (8)[right of=7]{$7$};
\node[state]         (9)[right of=8]{$8$};
\draw[blue,very thick,->](6)--node[midway,black,yshift=2mm]{1}(7);
\draw[orange,very thick,->,snake=snake](7)--node[midway,black,xshift=-1mm,yshift=2mm]{1}(8);
\draw[blue,very thick,->](8)--node[midway,black,yshift=2mm]{1}(9);
\path[orange,-,draw,dashed,thick] (4) -- ($ (7) !.5! (8) $);

\node[state] (10) [below right of=6]{$9$};
\node[state] (11) [right of=10]{$10$};
\node[state] (12)[right of=11]{$11$};
\node[teal,left of=10,xshift=1.5cm]{1};
\draw[orange,very thick,->,snake=snake](10)--node[midway,black,xshift=2mm,yshift=3mm]{1}(11);
\draw[blue,very thick,->](11)--node[midway,black,yshift=2mm]{1}(12);
\path [orange,-,draw,dashed,thick] (7) -- ($ (10) !.5! (11) $);

\node[state] (13)[below right of=10]{$12$};
\node[state] (14)[right of=13]{$13$};
\node[state,accepting] (15)[right of=14]{$\frac{r_1}{r_2}$};
\draw[orange,very thick,->,snake=snake](13)--node[midway,black,xshift=1.5mm,yshift=3mm]{1}(14);
\draw[blue,very thick,->](14)--node[midway,black,yshift=2mm]{1}(15);
\path[orange,-,draw,dashed,thick] (11) -- ($ (13) !.5! (14) $);
\node[teal,left of =13,xshift=1.5cm]{1};

\end{tikzpicture}
\caption{Division gadget $g_{/}$. Note that this gadget does not satisfy non-degeneracy. We present it for the sake of keeping the exposition simple, but it is not formally used in the reduction of Theorem \ref{thm:fixp}, as this reduction should result in a non-degenerate financial system. 
In Appendix \ref{apx:nondegenerate}, we show that it is possible to change the basis and avoid using this gadget altogether.}
    \label{fig:divisiongadget}
\end{figure}
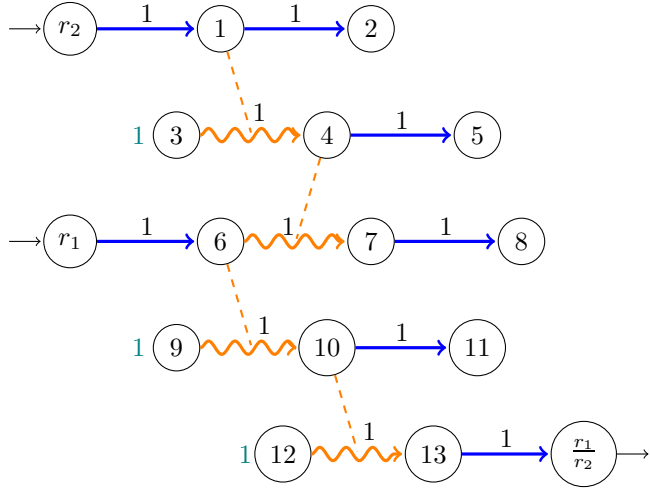

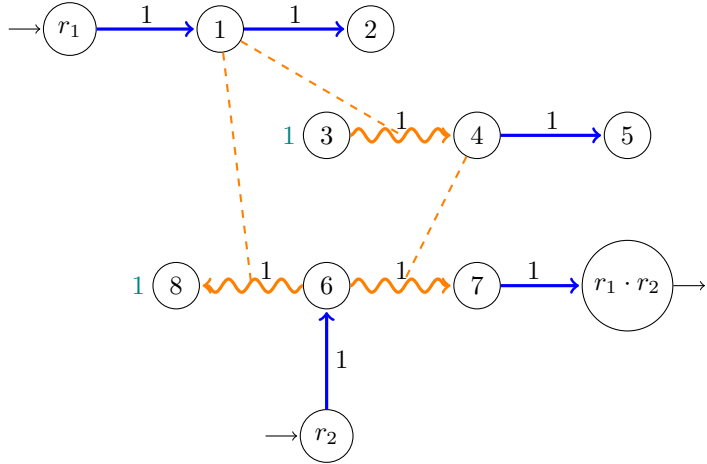
\begin{figure}[htbp!]
    \centering
\begin{tikzpicture}
[shorten >=1pt,node distance=2cm,initial text=]
\tikzstyle{every state}=[draw=black!50,very thick]
\tikzset{every state/.style={minimum size=0pt}}
\tikzstyle{accepting}=[accepting by arrow]
\node[state,initial] (1) {$r_1$};
\node[state] (2)[right of=1]{$1$};
\node[state] (11)[right of=2]{$2$};
\draw[blue,very thick,->](1)--node[midway,black,yshift=2mm]{1}(2);
\draw[blue,very thick,->](2)--node[midway,black,yshift=2mm]{1}(11);

\node[state] (3)[below right of=2]{$3$};
\node[state] (4)[right of=3]{$4$};
\node[state] (5)[right of=4]{$5$};
\node[teal,left of=3,xshift=1.5cm]{1};
\draw[blue,very thick,->](4)--node[midway,black,yshift=2mm]{1}(5);
\draw[orange,very thick,->,snake=snake](3)--node[midway,black,yshift=2mm]{1}(4);
\path[orange,-,draw,dashed,thick] (2) -- ($ (3) !.5! (4) $);

\node[state] (6)[below of=3]{$6$};
\node[state] (7)[right of=6]{$7$};
\node[state] (8)[left of=6]{$8$};
\node[state,accepting] (9)[right of=7]{$r_1 \cdot r_2$};
\draw[orange,very thick,->,snake=snake](6)--node[midway,black,yshift=2mm]{1}(7);
\draw[orange,very thick,->,snake=snake](6)--node[midway,black,xshift=2mm,yshift=2mm]{1}(8);
\draw[blue,very thick,->](7)--node[midway,black,xshift=-1mm,yshift=2mm]{1}(9);
\path[orange,-,draw,dashed,thick] (4) -- ($ (6) !.5! (7) $);
\path[orange,-,draw,dashed,thick] (2) -- ($ (6) !.5! (8) $);

\node[state,initial](10)[below of=6]{$r_2$};
\draw[blue,very thick,->](10)--node[midway,black,xshift=2mm]{1}(6);
\node[teal,left of =8,xshift=1.5cm]{1};
\end{tikzpicture}
\caption{Multiplication gadget $g_{*}$. Note that this gadget does not satisfy non-degeneracy. We present it for the sake of keeping the exposition simple, but it is not formally used in the reduction of Theorem \ref{thm:fixp}, as this reduction should result in a non-degenerate financial system. A more complex version of the gadget that is non-degenerate (i.e., the ``true'' version of the gadget) is given in Figure \ref{fig:nondegeneratemultiplicationgadget} in Appendix \ref{apx:nondegenerate}.}
    \label{fig:multiplicationgadget}
\end{figure}

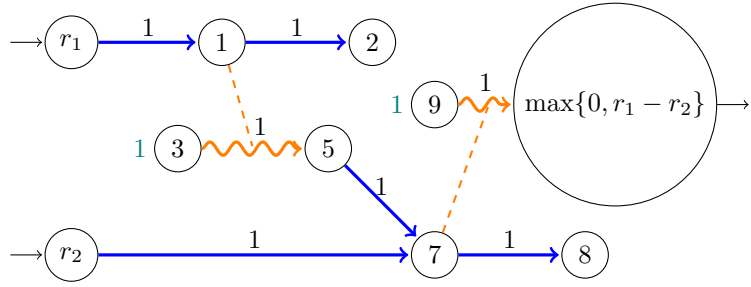
\begin{figure}[htbp!]

    \centering
\begin{tikzpicture}
[shorten >=1pt,node distance=2cm,initial text=]
\tikzstyle{every state}=[draw=black!50,very thick]
\tikzset{every state/.style={minimum size=0pt}}
\tikzstyle{accepting}=[accepting by arrow]
\node[state,initial] (1){$r_1$};
\node[state]         (2)[right of=1] {$1$};
\node[state]         (3)[right of=2]{$2$};
\draw[blue,very thick,->](1)--node[midway,black,yshift=2mm]{1}(2);
\draw[blue,very thick,->](2)--node[midway,black,yshift=2mm]{1}(3);

\node[state] (4)[below right of=1]{$3$};
\node[state] (5)[right of=4]{$5$};
\node[teal,left of=4,xshift=1.5cm]{1};
\draw[orange,-,very thick,->,snake=snake](4)--node[midway,black,yshift=3mm,xshift=1mm]{1}(5);
\path[orange,-,draw,dashed,thick] (2) -- ($ (4) !.5! (5) $);

\node[state,initial] (6)[below left of=4]{$r_2$};
\node[state] (7)[ below right of=5]{$7$};
\node[state](8)[right of=7]{$8$};
\draw[blue,->,very thick](7)--node[midway,black,yshift=2mm]{1}(8);
\draw[blue,->,very thick](6)--node[midway,black,yshift=2mm]{1}(7);
\draw[blue,very thick,->](5)--node[midway,black,yshift=2mm]{1}(7);

\node[state](9)[above of=7]{$9$};
\node[state,accepting](10)[right of=9,xshift=4mm]{$\max\{0,r_1 - r_2\}$};
\draw[orange,very thick,->,snake=snake](9)--node[midway,black,yshift=3mm]{1}(10);
\node[teal,left of=9,xshift=1.5cm]{1};
\path[orange,-,draw,dashed,thick] (7) -- ($ (9) !.3! (10) $);
\end{tikzpicture}
\caption{Positive subtraction gadget $g_{\text{pos}-}$. Generates the value $\max\{0,r_1-r_2\}$ from the input recovery rates $r_1$ and $r_2$.}
    \label{fig:subtractiongadget}
\end{figure}

\begin{figure}[htbp!]

    \centering
\begin{tikzpicture}
[shorten >=1pt,node distance=2cm,initial text=]
\tikzstyle{every state}=[draw=black!50,very thick]
\tikzset{every state/.style={minimum size=0pt}}
\tikzstyle{accepting}=[accepting by arrow]
\node[state] (1){$13$};
\node[state]         (2)[right of=1] {$14$};
\node[state]         (3)[right of=2]{$15$};
\draw[blue,very thick,->](1)--node[midway,black,yshift=2mm]{1}(2);
\draw[blue,very thick,->](2)--node[midway,black,yshift=2mm]{1}(3);

\node[state] (4)[below right of=1]{$16$};
\node[state] (5)[right of=4]{$17$};
\node[teal,left of=4,xshift=1.5cm]{1};
\draw[orange,-,very thick,->,snake=snake](4)--node[midway,black,yshift=3mm,xshift=2mm]{1}(5);
\path[orange,-,draw,dashed,thick] (2) -- ($ (4) !.5! (5) $);

\node[state] (6)[below left of=4]{$18$};
\node[state] (7)[ below right of=5]{$19$};
\node[state](8)[right of=7]{$20$};
\draw[blue,->,very thick](7)--node[midway,black,yshift=2mm]{1}(8);
\draw[blue,->,very thick](6)--node[midway,black,yshift=2mm]{1}(7);
\draw[blue,very thick,->](5)--node[midway,black,yshift=2mm]{1}(7);

\node[state](9)[below left of=7]{$21$};
\node[state](10)[below right of=8]{$22$};
\draw[orange,very thick,->,snake=snake](9)--node[midway,black,yshift=3mm]{1}(10);
\node[teal,left of=9,xshift=1.5cm]{1};
\path[orange,->,draw,dashed,thick] (7) -- ($ (9) !.5! (10) $);

\node[state] (11)[below of= 6,yshift=-2cm]{$23$};
\node[state] (12)[right of=11]{$24$};
\node[state] (13)[right of=12]{$25$};
\draw[blue,very thick,->](11)--node[midway,black,yshift=2mm]{1}(12);
\draw[blue,very thick,->](12)--node[midway,black,yshift=2mm]{1}(13);

\node[state] (14)[below right of=11]{$26$};
\node[state] (15)[right of=14]{$27$};
\node[teal,left of=14,xshift=1.5cm]{1};
\draw[orange,-,very thick,->,snake=snake](14)--(15);
\path[orange,-,draw,dashed,thick] (12) -- ($ (14) !.5! (15) $);

\node[state] (16)[below left of=14]{$28$};
\node[state] (17)[ below right of=15]{$29$};
\node[state](18)[right of=17]{$30$};
\draw[blue,->,very thick](17)--node[midway,black,yshift=2mm]{1}(18);
\draw[blue,->,very thick](16)--node[midway,black,yshift=2mm]{1}(17);
\draw[blue,very thick,->](15)--node[midway,black,yshift=2mm]{1}(17);

\node[state](19)[below left of=17]{$31$};
\node[state](20)[below right of=18]{$32$};
\draw[orange,very thick,->,snake=snake](19)--node[midway,black,yshift=3mm]{1}(20);
\node[teal,left of=19,xshift=1.5cm]{1};
\path[orange,->,draw,dashed,thick] (17) -- ($ (19) !.5! (20) $);

\node[state,accepting](21)[below right of=10,yshift=-2cm]{$\abs{r_1-r_2}$};
\draw[blue,very thick,->](10)--node[midway,black,xshift=1mm,yshift=2mm]{1}(21);
\draw[blue,very thick,->](20)--node[midway,black,xshift=-1mm,yshift=2mm]{1}(21);

\node[state,initial](22)[left of=1,xshift=-3cm]{$r_1$};
\node[state](23)[below left of=22]{$1$};
\node[state](24)[right of=23]{$2$};
\node[state](25)[right of=24]{$3$};
\node[state](26)[below of=23]{$4$};
\node[state](27)[right of=26]{$5$};
\node[state](28)[right of=27]{$6$};
\draw[blue,very thick,->](24)--node[midway,black,,yshift=2mm]{1}(25);
\draw[blue,very thick,->](27)--node[midway,black,yshift=2mm]{1}(28);
\draw[orange,very thick,->,snake=snake](23)--node[midway,black,xshift=-1mm,yshift=2mm]{1}(24);
\draw[orange,very thick,->,snake=snake](26)--node[midway,black,xshift=-2mm,yshift=2mm]{1}(27);
\path[orange,->,draw,dashed,thick] (22) -- ($ (23) !.5! (24) $);
\path[orange,->,draw,dashed,thick] (24) -- ($ (26) !.5! (27) $);

\node[state,initial](29)[left of=11,xshift=-3cm]{$r_2$};
\node[state](30)[below left of=29]{$7$};
\node[state](31)[right of=30]{$8$};
\node[state](32)[right of=31]{$9$};
\node[state](33)[below of=30]{$10$};
\node[state](34)[right of=33]{$11$};
\node[state](35)[right of=34]{$12$};
\draw[blue,very thick,->](31)--node[midway,black,yshift=2mm]{1}(32);
\draw[blue,very thick,->](34)--node[midway,black,yshift=2mm]{1}(35);
\draw[orange,very thick,->,snake=snake](30)--node[midway,black,xshift=-2mm,yshift=2mm]{1}(31);
\draw[orange,very thick,->,snake=snake](33)--node[midway,black,xshift=-1mm,yshift=2mm]{1}(34);
\path[orange,->,draw,dashed,thick] (29) -- ($ (30) !.5! (31) $);
\path[orange,->,draw,dashed,thick] (31) -- ($ (33) !.5! (34) $);

\draw[blue,very thick,->](22)--node[midway,black,yshift=2mm]{1}(1);
\draw[blue,very thick,->](29)--node[midway,black,yshift=2mm]{1}(6);
\draw[blue,very thick,->](35)--node[midway,black,yshift=3mm]{1}(11);
\draw[blue,very thick,->](28)--node[midway,black,xshift=2mm]{1}(16);
\node[teal,left of=23,xshift=1.5cm]{1};
\node[teal,left of=26,xshift=1.5cm]{1};
\node[teal,left of=30,xshift=1.5cm]{1};
\node[teal,left of=33,xshift=1.5cm]{1};

\end{tikzpicture}
\caption{Absolute difference gadget $g_{\text{abs}}$: The recovery rate of the output node is $\abs{r_1-r_2}$, where $r_1$ and $r_2$ are the recovery rates of the input nodes. The gadget is formed by first applying $g_{\text{dup}}$ on both its inputs. The positive difference gadget $g_{\text{pos}-}$ is then used on the two pairs of inputs to compute $\max\{r_1-r_2,0\}$ and $\max\{r_2-r_1,0\}$, after which these two maxima are added together using $g^+$, resulting in the desired output $|r_1 - r_2|$.}
    \label{fig:absdifferencegadget}
\end{figure}
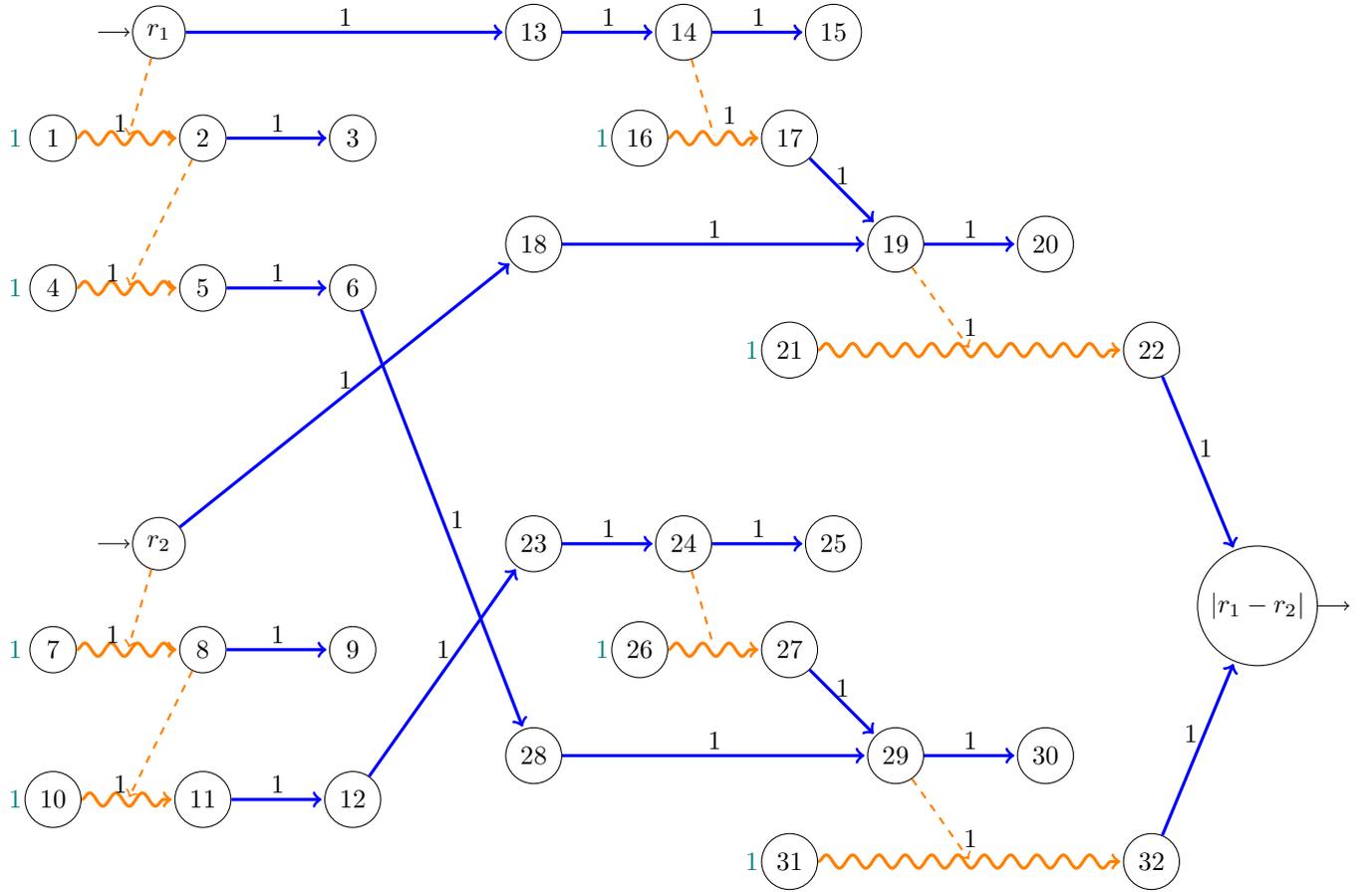

\newpage

\subsection{Non-degenerate Multiplication and Division Gadgets}\label{apx:nondegenerate}
This section describes how to adapt the hardness reduction of Theorem \ref{thm:fixp} in such a way that the financial system is non-degenerate. This is a rather technical detail and needs substantial work, yet it is needed for correctness of the reduction.

First, we present a non-degenerate version of our multiplication gadget in Figure \ref{fig:nondegeneratemultiplicationgadget}. In this gadget, the input recovery rates $r_1$ and $r_2$ are pre-processed by multiplying them both by $1/2$, using our constant multiplication gadget $g_{cr}$. The gadget then generates the recovery rate $(r_1(1+r_2))/4 \in [0,1]$, which can be post-processed using a positive subtraction gadget $g_{\text{pos}-}$ and a multiplication gadget $g_{*}$ to obtain $r_1 \cdot r_2$.

It is much more difficult to replace the degenerate gadget $g_{/}$. To do this, we will essentially get rid of division gates altogether in $C_I'$, and replace each of them by a set of alternative operations that achieve the same result. We thus make a few modifications to $C_I'$. The first modification is that we alter the sub-circuit $T$, which was used to generate the value $t = 1/2^{2^d}$, which was used to scale all the signals in the circuit so that each signal is in $[0,1]$ irrespective of the input vector. Here, $d$ is a polynomial time computable value that satisfies that every signal in the original circuit $C_I$ is at most $2^{2^d}$. We adapt $T$ such that the scaling factor it outputs is $1/(2^{1 + {2^d}})$ instead of $1/2^{2^d}$. This can be done by adding one additional multiplication gate at the end of $T$. Let the output of the modified $T$ be $t' = t/2$. After this modification, it holds that all signals inside the circuit are in $[0,1/2]$.

We now observe that division gates are used in a very limited way in $C_I'$ (with $T$ modified as above). 
\begin{itemize}
\item First, division is used for dividing by $c$, where $c$ is a given explicit constant in the original circuit $C_I$. For such divisions, we can simply use our constant multiplication gadget $g_{cr}$ to multiply by $1/c$, which is equivalent to dividing by $c$.
\item Secondly, division is used to divide certain outputs of gates by $t'$ (previously $t$), where $t' = 1/2^{1 + 2^d}$. This type of division happens in two cases. 
\begin{enumerate}
\item At every point in the circuit where two scaled signals $t'a$ and $t'b$ with $a,b \leq 2^{2^d}$ are multiplied with each other, resulting in the value $t't'ab$. This value is divided by $t'$ in order to generate the signal $t'ab$, i.e., a scaled version of the signal $ab$ in the original circuit. 
\item At the end of the circuit, where a scaled signal $t'a$ is divided by $t'$ to produce an output signal of the original circuit, which is in $[0,1]$.
\end{enumerate}
In the first case, let $x = t'ab$ and in the second case, let $x = a$, so that in both cases a number $t'x$ is divided by $t'$ to result in $x$, and in both cases it holds that $x \in [0,1]$. We replace the division $t'x/t'$ by a sequence of $d$ gates that compute the square root of its input, followed by a multiplication by $2$, followed by a sequence of $d$ successive squaring multiplication gates, followed by a final multiplication by $2$. This results in the correct value
\begin{equation*}
((t'x)^{1/2^d}\cdot 2)^{2^d} \cdot 2 = (t'^{1/2^d}x^{1/2^d}\cdot 2)^{2^d} \cdot 2 =
t'x \cdot 2^{2^d} \cdot 2 = 
\frac{1}{2}x \cdot 2 = x .
\end{equation*}
It is furthermore straightforward to verify that, due to the order in which we apply our arithmetic operations, all of the values throughout this computation stay in the interval $[0,1]$. We note that the adapted scaling factor $t' = t/2$ is needed because of the multiplication by $2$ that is executed after the successive square roots and before the successive squaring.
\end{itemize}

The resulting circuit has no division gates anymore, so we do not need a division gadget in our reduction. Instead, now we need a square root gadget. Fortunately, it is possible to design this gadget in a non-degenerate way,  although its construction is not very straightforward. It is presented in Figure \ref{fig:sqrtgadget}, and it combines the two subgadgets given in Figures \ref{fig:inversiongadget} and \ref{fig:constantsqrtgadget} which we provide separately for ease of understanding. The square root gadget may evidently output an irrational recovery rate, which is caused by the cyclic dependencies of the CDSes and debt contracts in the system.

\begin{figure}[htbp!]
    \centering
\begin{tikzpicture}
[shorten >=1pt,node distance=2cm,initial text=]
\tikzstyle{every state}=[draw=black!50,very thick]
\tikzset{every state/.style={minimum size=0pt}}
\tikzstyle{accepting}=[accepting by arrow]

\node[state] (1) {$r_1/2$};
\node[state] (2)[right of=1]{$1$};
\node[state] (11)[right of=2]{$2$};
\draw[blue,very thick,->](1)--node[midway,black,yshift=2mm]{1}(2);
\draw[blue,very thick,->](2)--node[midway,black,yshift=2mm]{1}(11);

\node[state,initial](12)[left of=1]{$g_{r_1/2}$};
\draw[blue,very thick,->](12)--(1);

\node[state] (3)[below right of=2]{$3$};
\node[state] (4)[right of=3]{$4$};
\node[state] (5)[right of=4]{$5$};
\node[teal,left of=3,xshift=1.5cm]{1};
\draw[blue,very thick,->](4)--node[midway,black,yshift=2mm]{1}(5);
\draw[orange,very thick,->,snake=snake](3)--node[midway,black,yshift=2mm]{1}(4);
\path[orange,-,draw,dashed,thick] (2) -- ($ (3) !.5! (4) $);

\node[state] (6)[below of=3]{$6$};
\node[teal,above of=6,yshift=-1.5cm]{1/2};
\node[state] (7)[right of=6]{$7$};
\node[state] (8)[left of=6]{$8$};
\node[state,accepting] (9)[right of=7]{$\frac{r_1 (1 +  r_2)}{4}$};
\draw[orange,very thick,->,snake=snake](6)--node[midway,black,yshift=2mm]{1}(7);
\draw[orange,very thick,->,snake=snake](6)--node[midway,black,xshift=2mm,yshift=2mm]{1}(8);
\draw[blue,very thick,->](7)--node[midway,black,xshift=-1mm,yshift=2mm]{1}(9);
\path[orange,-,draw,dashed,thick] (4) -- ($ (6) !.5! (7) $);
\path[orange,-,draw,dashed,thick] (2) -- ($ (6) !.5! (8) $);

\node[state](10)[below of=6]{$r_2/2$};
\draw[blue,very thick,->](10)--node[midway,black,xshift=2mm]{1}(6);
\node[teal,left of =8,xshift=1.5cm]{1};

\node[state,initial](13)[left of= 10]{$g_{r_2/2}$};
\draw[blue,very thick,->](13)--(10);

\end{tikzpicture}
\caption{Non-degenerate multiplication gadget $g_{*}'$.}
    \label{fig:nondegeneratemultiplicationgadget}
\end{figure}
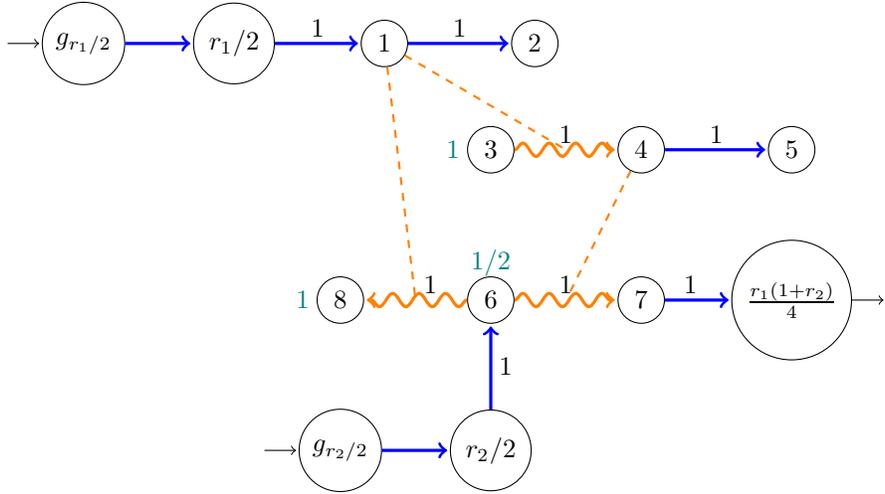

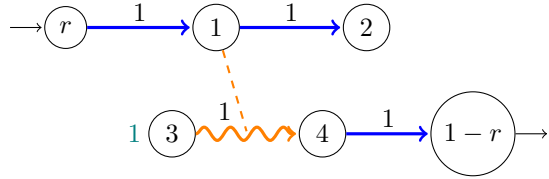
\begin{figure}[htbp!]
    \centering
\begin{tikzpicture}
[shorten >=1pt,node distance=2cm,initial text=]
\tikzstyle{every state}=[draw=black!50,very thick]
\tikzset{every state/.style={minimum size=0pt}}
\tikzstyle{accepting}=[accepting by arrow]
\node[state,initial] (1) {$r$};
\node[state]         (2) [right of=1] {$1$};
\node[state]         (3) [below right of=1] {$3$};
\node[state]         (6)[right of=2]{$2$};
\draw[blue,very thick,->](2)--node[midway,black,yshift=2mm]{1}(6);
\node[teal,left of=3,xshift=1.5cm]{1};
\node[state] (4) [right of=3] {$4$};
\node[state,accepting] (5)[right of=4]{$1-r$};
\draw[blue,->,very thick] (1)--node[midway,black,yshift=2mm]{1}(2);
\draw[orange,very thick,->,snake=snake] (3)--node[midway,black,xshift=-3mm,yshift=3mm]{1}(4);
\draw[blue,very thick,->](4)--node[midway,black,yshift=2mm]{1}(5);
\path [orange,-,draw,dashed,thick] (2) -- ($ (3) !.5! (4) $);
\end{tikzpicture}
\caption{Inversion gadget $g_{\text{inv}}$, computing $1-r$ from $r$.}
    \label{fig:inversiongadget}
\end{figure}

\begin{figure}[htbp!]
    \centering
\begin{tikzpicture}
[shorten >=1pt,node distance=2cm,initial text=]
\tikzstyle{every state}=[draw=black!50,very thick]
\tikzset{every state/.style={minimum size=0pt}}
\tikzstyle{accepting}=[accepting by arrow]
\node[state] (15)[right of=7][xshift=1mm]{$3$};
\node[state] (16)[below of=15]{$2$};
\node[state] (17)[below of=16]{$1$};
\node[state] (18)[below of=17]{$4$};
\draw[blue,very thick,->] (16)--node[midway,black,xshift=2mm]{1}(15);
\draw[blue,very thick,->] (17)--node[midway,black,xshift=2mm]{1}(16);
\draw[orange,very thick,->,snake=snake] (17)--node[midway,black,xshift=-2mm]{1}(18);
\node[state]  (19)[right of=15]{$9$};
\node[state]  (20)[below of=19]{$5$};
\node[state]  (21)[below of=20]{$6$};
\node[state]  (22)[below of=21]{$7$};
\draw[orange,very thick,->,snake=snake] (20)--node[midway,black,xshift=-3mm,yshift=1mm]{1}(19);
\draw[blue,very thick,->] (20)--node[midway,black,xshift=2mm]{1}(21);
\draw[blue,very thick,->] (21)--node[midway,black,xshift=2mm]{1}(22);
\path [orange,-,draw,dashed,thick] (16) -- ($ (20) !.5! (19) $);
\path [orange,-,draw,dashed,thick] (21) -- ($ (17) !.5! (18) $);
\node[state] (23)[right of=22]{$8$};
\draw[blue,very thick,->] (22)--node[midway,black,yshift=2mm]{1}(23);
\node[state] (24)[below of=18]{$10$};
\node[teal,left of=24,xshift=1.5cm]{1};
\node[state] (26)[right of=24]{$11$};
\node[state,accepting] (27)[right of=26]{$\sqrt c$};
\draw[orange,very thick,->,snake=snake] (24)--node[midway,black,yshift=3mm,xshift=-1mm]{1}(26);
\draw[blue,very thick,->] (26)--node[midway,black,yshift=3mm,xshift=-1mm]{1}(27);
\path [orange,-,draw,dashed,thick] (22) -- ($ (24) !.5! (26) $);
\node[teal,left of=17,xshift=1cm]{$1-c$};
\node[teal,right of=20,xshift=-1cm]{$1-c$};
\end{tikzpicture}
\caption{This gadget has no input nodes. It outputs the square root of a constant $c \in [0,1]$, where $1 - c$ is provided as the external assets of two banks in the gadget.}
\label{fig:constantsqrtgadget}
\end{figure}
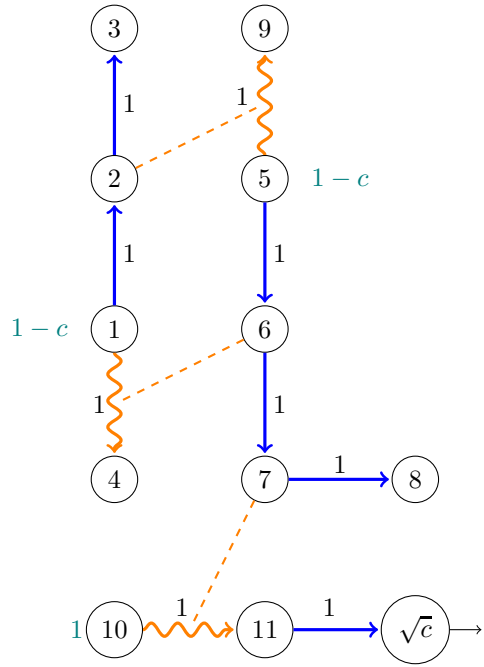

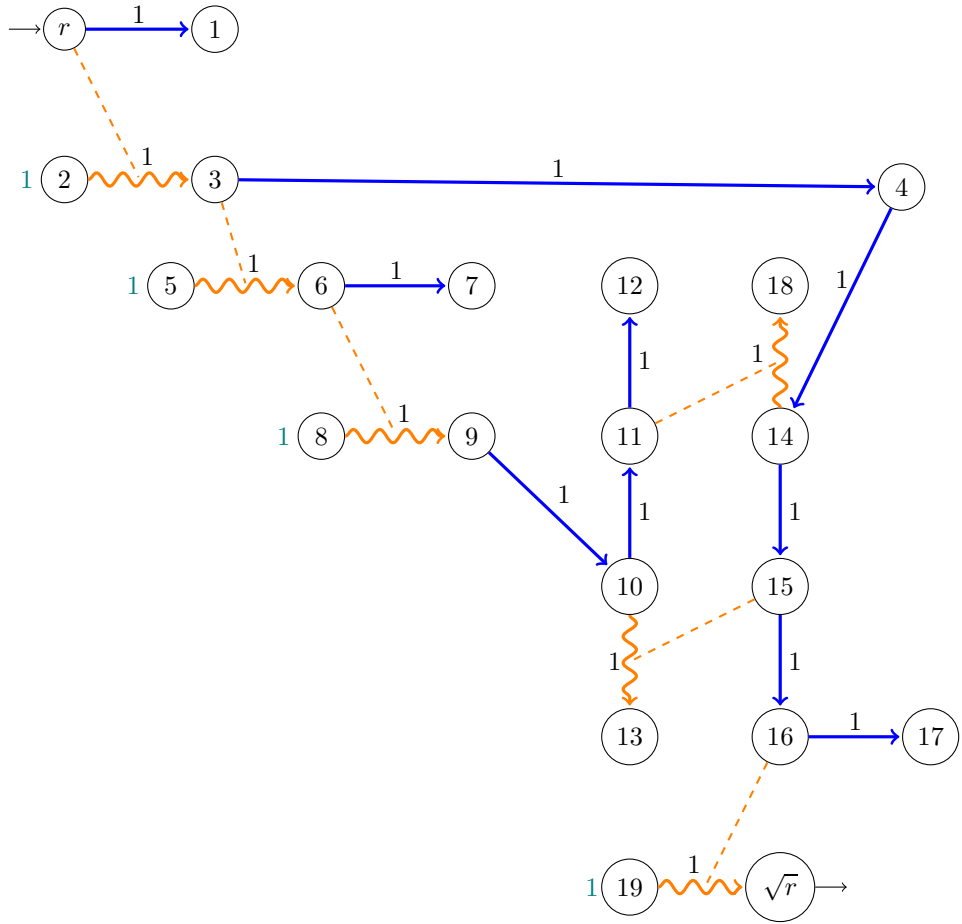
\begin{figure}[htbp!]
    \centering
\begin{tikzpicture}
[shorten >=1pt,node distance=2cm,initial text=]
\tikzstyle{every state}=[draw=black!50,very thick]
\tikzset{every state/.style={minimum size=0pt}}
\tikzstyle{accepting}=[accepting by arrow]

\node[state,initial] (1) {$r$};
\node[state]         (2) [right of=1] {$1$};
\draw[blue,very thick,->] (1)--node[midway,black,yshift=2mm]{1}(2);

\node[state]        (3) [below of=1]{$2$};
\node[teal,left of =3,xshift=1.5cm]{1};
\node[state]        (4) [right of=3]{$3$};
\draw [orange,very thick,->,snake=snake] (3)--node[midway,black,yshift=3mm,xshift=1mm]{1}(4);
\path [orange,-,draw,dashed,thick] (1) -- ($ (3) !.5! (4) $);

\node[state] (5)[below right of=3]{$5$};
\node[teal,left of=5,xshift=1.5cm]{1};
\node[state] (6)[right of=5]{$6$};
\node[state] (7)[right of=6]{$7$};
\draw[orange,very thick,->,snake=snake] (5)--node[midway,black,yshift=3mm,xshift=1mm]{1}(6);
\draw[blue,very thick,->](6)--node[midway,black,yshift=2mm]{1}(7);
\path [orange,-,draw,dashed,thick] (4) -- ($ (5) !.5! (6) $);

\node[state] (8)[below of=6]{$8$};
\node[teal,left of=8,xshift=1.5cm]{1};
\node[state]  (9)[right of=8]{$9$};
\draw[orange,very thick,->,snake=snake] (8)--node[midway,black,xshift=1mm,yshift=3mm]{1}(9);
\path[orange,-,draw,dashed,thick] (6) -- ($ (8) !.5! (9) $);

\node[state] (15)[right of=7][xshift=1mm]{$12$};
\node[state] (16)[below of=15]{$11$};
\node[state] (17)[below of=16]{$10$};
\node[state] (18)[below of=17]{$13$};
\draw[blue,very thick,->] (16)--node[midway,black,xshift=2mm]{1}(15);
\draw[blue,very thick,->] (17)--node[midway,black,xshift=2mm]{1}(16);
\draw[blue,very thick,->] (9)--node[midway,black,yshift=2mm,xshift=2mm]{1}(17);
\draw[orange,very thick,->,snake=snake] (17)--node[midway,black,xshift=-2mm]{1}(18);

\node[state]  (19)[right of=15]{$18$};
\node[state]  (20)[below of=19]{$14$};
\node[state]  (21)[below of=20]{$15$};
\node[state]  (22)[below of=21]{$16$};
\draw[orange,very thick,->,snake=snake] (20)--node[midway,black,xshift=-3mm,yshift=1mm]{1}(19);
\draw[blue,very thick,->] (20)--node[midway,black,xshift=2mm]{1}(21);
\draw[blue,very thick,->] (21)--node[midway,black,xshift=2mm]{1}(22);
\path [orange,-,draw,dashed,thick] (16) -- ($ (20) !.5! (19) $);
\path [orange,-,draw,dashed,thick] (21) -- ($ (17) !.5! (18) $);

\node[state] (23)[right of=22]{$17$};
\draw[blue,very thick,->] (22)--node[midway,black,yshift=2mm]{1}(23);

\node[state] (24)[below of=18]{$19$};
\node[teal,left of=24,xshift=1.5cm]{1};
\node[state,accepting] (26)[right of=24]{$\sqrt r$};

\draw[orange,very thick,->,snake=snake] (24)--node[midway,black,yshift=3mm,xshift=-1mm]{1}(26);
\path [orange,-,draw,dashed,thick] (22) -- ($ (24) !.5! (26) $);

\node[state] (8)[above right of=19][yshift=-1mm,xshift=2mm]{$4$};
\draw[blue,very thick,->] (4)--node[midway,black,yshift=2mm]{1}(8);
\draw[blue,very thick,->] (8)--node[midway,black,yshift=4mm]{1}(20);
\end{tikzpicture}
\caption{Square root gadget $g_{\sqrt{\cdot}}$, obtained by combining the duplication gadget $g_{\text{dup}}$, the inversion gadget $g_{\text{inv}}$, and the gadget of Figure \ref{fig:constantsqrtgadget}.}
    \label{fig:sqrtgadget}
\end{figure}

\section{Irrational Solutions for Weakly Switched Cycles}\label{apx:Counterexample}
Consider the instance in Figure \ref{fig:weaklyswitchedirrational}. This is an instance containing a weakly switched cycle that admits irrational solutions. Firstly the instance contains a red cycle which is $C = (1,2,3,R)$. Notice that $C$ contains one node that is switched on (which is Node 1) and one switched off node (Node 3). Thus, by definition, $C$ is weakly switched. Next we prove that through the coefficients shown in the figure, the instance has a unique irrational solution.

For Node 1, it holds that $r_1 = \frac{1}{3-r_R}$. Node 2 has a recovery equal to $r_2 = \frac{1-r_R}{3-r_R}$. Since node 3 has $e_3 = 1$ and a liability of at most 1, its recovery rate is equal to 1, so $r_3 = 1$. Now node $R$ has incoming payment from node 3 $a_R = 1-r_2 = 1-(1-r_R)/(3-r_R) = 2/(3-r_R)$ and a liability of $2$, so its recovery rate is $r_R = 1/(3-r_R)$. To compute the recovery rate of $R$ we must therefore solve the equation $r^2_R -3r_R + 1 = 0$, whose solution is $r_R = \frac{3-\sqrt5}{2}$. 

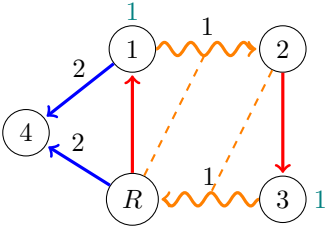
\begin{figure}[htbp!]
     \centering
\begin{tikzpicture}
[shorten >=1pt,node distance=2cm,initial text=]
\tikzstyle{every state}=[draw=black!50,very thick]
\tikzset{every state/.style={minimum size=0pt}}
\tikzstyle{accepting}=[accepting by arrow]

\node[state](1){$R$};
\node[state](2)[above of=1]{$1$};
\node[state](3)[right of=2]{$2$};
\node[state](4)[below of=3]{$3$};
\node[state](5)[below left of=2,yshift=3mm]{$4$};

\draw[blue,very thick,->](2)--node[midway,black,yshift=3mm]{2}(5);
\draw[blue,very thick,->](1)--node[midway,black,yshift=3mm]{2}(5);
\draw[orange,very thick,snake=snake,->](2)--node[midway,black,yshift=3mm]{1}(3);
\draw[orange,very thick,snake=snake,->](4)--node[midway,black,yshift=3mm]{1}(1);
\path [orange,-,draw,dashed,thick] (1) -- ($ (2) !.5! (3) $);
\path [orange,-,draw,dashed,thick] (3) -- ($ (4) !.5! (1) $);

\node[teal,above of=2,yshift=-1.5cm]{1};
\node[teal,right of=4,xshift=-1.5cm]{1};

\draw[red,very thick,->](1)--(2);
\draw[red,very thick,->](3)--(4);

\end{tikzpicture}
\caption{Example of an instance with a weakly (but not strongly) switched cycle having only irrational solutions.}
    \label{fig:weaklyswitchedirrational}
\end{figure}

\section{Further Financial System Gadgets}\label{apx:furthergadgets}
The gadgets presented in this section illustrate that financial systems are able to naturally capture the $\max$ and $\min$ operations. However, these gadgets remain unused in the reduction of Theorem~\ref{thm:fixp}.

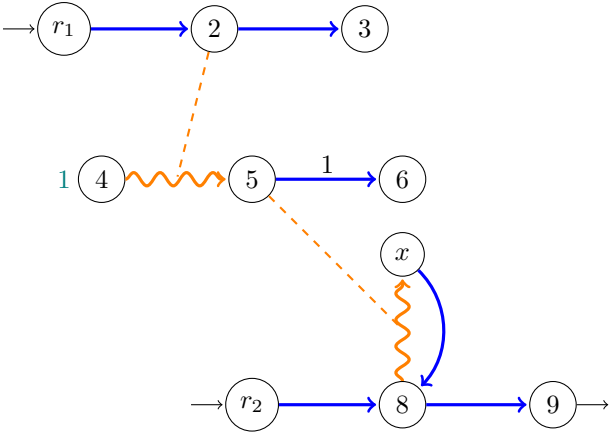
\begin{figure}[htbp!]
    \centering
\begin{tikzpicture}
[shorten >=1pt,node distance=2cm,bend angle=45,initial text=]
\tikzstyle{every state}=[draw=black!50,very thick]
\tikzset{every state/.style={minimum size=0pt}}
\tikzstyle{accepting}=[accepting by arrow]

\node[state,initial] (1) {$r_1$};
\node[state][right of =1](2){$2$};
\node[state][right of =2](3){$3$};
\draw[blue,very thick,->](1)--(2);
\draw[blue,very thick,->](2)--(3);

\node[state][below of =1,xshift=5mm](4){$4$};
\node[teal,left of=4,xshift=1.5cm]{1};
\node[state][right of =4](5){$5$};
\node[state][right of =5](6){$6$};
\draw[orange,very thick,->,snake=snake](4)--(5);
\draw[blue,very thick,->](5)--node[midway,black,yshift=2mm]{$1$}(6);

\node[state,initial][below of =5,yshift=-1cm](7){$r_2$};
\node[state][right of =7](8){$8$};
\node[state,accepting][right of =8](9){$9$};
\draw[blue,very thick,->](7)--(8);
\draw[blue,very thick,->](8)--(9);

\node[state][above of=8](10){$x$};
\draw[blue,very thick,->](10)[bend left] to (8);
\draw[orange,very thick,->,snake=snake](8) to(10);
\path [orange,-,draw,dashed,thick] (5) -- ($ (8) !.5! (10) $);
\path [orange,-,draw,dashed,thick] (2) -- ($ (4) !.5! (5) $);

\end{tikzpicture}
\caption{This is an alternative multiplication gadget where all contract notionals are assumed to be 1. This multiplication gadget was provided to us by an anonymous reviewer in the first draft of our paper. The gadget is non-degenerate, and is somewhat simpler than our non-degenerate multiplication gadget, at the trade-off of including a cycle in the gadget. We thank the reviewer for this contribution.}
    \label{fig:reviewer}
\end{figure}

\begin{figure}[htbp!]

    \centering
\begin{tikzpicture}
[shorten >=1pt,node distance=2cm,initial text=]
\tikzstyle{every state}=[draw=black!50,very thick]
\tikzset{every state/.style={minimum size=0pt}}
\tikzstyle{accepting}=[accepting by arrow]

\node[state,initial] (1) {$r_1$};
\node[state,initial] (2)[below of=1]{$r_2$};
\node[state](3)[right of=1]{$g_{\text{dup}}$};
\node[state](4)[right of=2]{$g_{\text{dup}}$};
\draw[blue,very thick,->](1)--node[midway,black,yshift=2mm]{1}(3);
\draw[blue,very thick,->](2)--node[midway,black,yshift=2mm]{1}(4);

\node[state](5) [above right of=3]{$r_1$};
\node[state](6) [right of=3]{$r_1$};
\node[state](7) [below right of=4]{$r_2$};
\node[state](8) [right of=4]{$r_2$};
\draw[blue,very thick,->](3)--node[midway,black,yshift=2mm]{1}(5);
\draw[blue,very thick,->](3)--node[midway,black,yshift=2mm]{1}(6);
\draw[blue,very thick,->](4)--node[midway,black,yshift=2mm]{1}(7);
\draw[blue,very thick,->](4)--node[midway,black,yshift=2mm]{1}(8);

\node[state](9)[below right of=6,yshift=3mm]{$g_{\text{abs}}$};
\draw[blue,very thick,->](6)--node[midway,black,yshift=2mm]{1}(9);
\draw[blue,very thick,->](8)--node[midway,black,yshift=2mm]{1}(9);

\node[state](10)[right of=9]{$g_{(1/2)r}$};
\draw[blue,very thick,->](9)--node[midway,black,yshift=2mm]{1}(10);
\node[state](11)[right of=5]{$g_{(1/2)r}$};
\node[state](12)[right of=7]{$g_{(1/2)r}$};
\draw[blue,very thick,->](5)--node[midway,black,yshift=2mm]{1}(11);
\draw[blue,very thick,->](7)--node[midway,black,yshift=2mm]{1}(12);
\node[state](13)[right of=11]{$\frac{r_1}{2}$};
\node[state](14)[right of=12]{$\frac{r_2}{2}$};
\draw[blue,very thick,->](11)--node[midway,black,yshift=2mm]{1}(13);
\draw[blue,very thick,->](12)--node[midway,black,yshift=2mm]{1}(14);

\node[state](15)[right of=10]{$\frac{\abs{r_1-r_2}}{2}$};
\draw[blue,very thick,->](10)--node[midway,black,yshift=2mm]{1}(15);
\node[state](16)[below of=15,yshift=-3mm]{$g_+$};
\draw[blue,very thick,->](15)--node[midway,black,xshift=-2mm]{1}(16);
\draw[blue,very thick,->](14)--node[midway,black,yshift=2mm]{1}(16);
\node[state](17)[right of=15,xshift=1cm]{$\frac{\abs{r_1-r_2} + r_2}{2}$};
\draw[blue,very thick,->](16)--node[midway,black,yshift=2mm]{1}(17);
\draw[blue,very thick,->](15)--node[midway,black,yshift=2mm]{1}(17);
\node[state](18)[above of=17,yshift=5mm]{$g_+$};
\draw[blue,very thick,->](17)--node[midway,black,xshift=2mm]{1}(18);
\draw[blue,very thick,->](13)--node[midway,black,yshift=2mm]{1}(18);
\node[state,accepting](19)[right of=18,xshift=2mm]{$\max\{r_1,r_2\}$};
\draw[blue,very thick,->](18)--node[midway,black,yshift=2mm]{1}(19);
\end{tikzpicture}
\caption{Maximum gadget $g_{\max}$, computing $\max\{r_1,r_2\}$. This is a compact representation where the nodes labeled with a subscripted $g$ have to be replaced by copies of the respective gadgets, in order to obtain the full financial system defining the gadget.}
    \label{fig:maximumgadget}
\end{figure}

\begin{figure}[htbp!]

    \centering
\begin{tikzpicture}
[shorten >=1pt,node distance=2cm,initial text=]
\tikzstyle{every state}=[draw=black!50,very thick]
\tikzset{every state/.style={minimum size=0pt}}
\tikzstyle{accepting}=[accepting by arrow]

\node[state,initial] (1) {$r_1$};
\node[state,initial] (2)[below of=1]{$r_2$};
\node[state](3)[right of=1]{$g_{\text{dup}}$};
\node[state](4)[right of=2]{$g_{\text{dup}}$};
\draw[blue,very thick,->](1)--node[midway,black,yshift=2mm]{1}(3);
\draw[blue,very thick,->](2)--node[midway,black,yshift=2mm]{1}(4);

\node[state](5) [above right of=3]{$r_1$};
\node[state](6) [right of=3]{$r_1$};
\node[state](7) [below right of=4]{$r_2$};
\node[state](8) [right of=4]{$r_2$};
\draw[blue,very thick,->](3)--node[midway,black,yshift=2mm]{1}(5);
\draw[blue,very thick,->](3)--node[midway,black,yshift=2mm]{1}(6);
\draw[blue,very thick,->](4)--node[midway,black,yshift=2mm]{1}(7);
\draw[blue,very thick,->](4)--node[midway,black,yshift=2mm]{1}(8);

\node[state](9)[below right of=6,yshift=3mm]{$g_{\text{abs}}$};
\draw[blue,very thick,->](6)--node[midway,black,yshift=2mm]{1}(9);
\draw[blue,very thick,->](8)--node[midway,black,yshift=2mm]{1}(9);

\node[state](10)[right of=9]{$g_{(1/2)r}$};
\draw[blue,very thick,->](9)--node[midway,black,yshift=2mm]{1}(10);
\node[state](11)[right of=5]{$g_{(1/2)r}$};
\node[state](12)[right of=7]{$g_{(1/2)r}$};
\draw[blue,very thick,->](5)--node[midway,black,yshift=2mm]{1}(11);
\draw[blue,very thick,->](7)--node[midway,black,yshift=2mm]{1}(12);

\node[state](15)[right of=10]{$\frac{\abs{r_1-r_2}}{2}$};
\draw[blue,very thick,->](10)--node[midway,black,yshift=2mm]{1}(15);
\node[state](13)[above of=15,yshift=5mm]{$\frac{r_1}{2}$};
\node[state](14)[below  of=15,yshift=-3mm]{$\frac{r_2}{2}$};
\draw[blue,very thick,->](11)--node[midway,black,yshift=2mm]{1}(13);
\draw[blue,very thick,->](12)--node[midway,black,yshift=2mm]{1}(14);

\node[state](16)[right of=15,yshift=6mm,xshift=3mm]{$g_+$};
\draw[blue,very thick,->](13)--node[midway,black,yshift=2mm]{1}(16);
\draw[blue,very thick,->](14)--node[midway,black,xshift=-2mm,yshift=1mm]{1}(16);
\node[state](17)[right of=16]{$\frac{r_1+r_2}{2}$};
\draw[blue,very thick,->](16)--node[midway,black,yshift=2mm]{1}(17);
\node[state](18)[below of=17,yshift=-5mm]{$\max\{0,r_1-r_2\}$};
\draw[blue,very thick,->](17)--node[midway,black,xshift=2mm]{1}(18);
\draw[blue,very thick,->](15)--node[midway,black,yshift=2mm]{1}(18);
\node[state,accepting](19)[below of=18,yshift=-1cm]{$\min\{r_1,r_2\}$};
\draw[blue,very thick,->](18)--node[midway,black,xshift=2mm]{1}(19);
\end{tikzpicture}
\caption{Minimum gadget $g_{\min}$, computing $\min\{r_1,r_2\}$. This is a compact representation where the nodes labeled with a subscripted $g$ have to be replaced by copies of the respective gadgets, in order to obtain the full financial system defining the gadget.}
    \label{fig:minimumgadget}
\end{figure}

\end{document}